\documentclass[journal]{IEEEtran}

\usepackage{amsmath, amssymb, comment,cite}
\usepackage{graphicx, color}
\usepackage[latin1]{inputenc}
\usepackage{subcaption}
\usepackage{floatrow}
\usepackage{relsize, bbm, mathtools}
\usepackage{algorithm,algorithmic}
\usepackage{balance}
\usepackage{float}
\floatstyle{plaintop}
\restylefloat{table}

\usepackage[nodisplayskipstretch]{setspace}
\setlength{\belowdisplayskip}{2pt}
\setlength{\abovedisplayskip}{2pt}

\newtheorem{lemma}{Lemma}
\newtheorem{remark}{Remark}
\newtheorem{theorem}{Theorem}
\newtheorem{corollary}{Corollary}

\newtheorem{proof}{Proof}

\newcounter{mytempeqcounter}

\newcommand{\bigformulatop}[2]{%
	\begin{figure*}[!t]
		\normalsize
		\setcounter{mytempeqcounter}{\value{equation}}
		\setcounter{equation}{#1}
		#2
		
		\setcounter{equation}{\value{mytempeqcounter}}
		\hrulefill
		\vspace*{4pt}
	\end{figure*}
}

\newcommand{\qa}{{\bf a}}

\newcommand{\qh}{{\bf h}}

\newcommand{\qr}{{\bf r}}
\newcommand{\qs}{{\bf s}}

\newcommand{\qv}{{\bf v}}
\newcommand{\qw}{{\bf w}}
\newcommand{\qx}{{\bf x}}
\newcommand{\qy}{{\bf y}}
\newcommand{\qz}{{\bf z}}

\newcommand{\qA}{{\bf A}}
\newcommand{\qB}{{\bf B}}
\newcommand{\qC}{{\bf C}}
\newcommand{\qD}{{\bf D}}

\newcommand{\qF}{{\bf F}}

\newcommand{\qH}{{\bf H}}
\newcommand{\qI}{{\bf I}}

\newcommand{\qQ}{{\bf Q}}

\newcommand{\qS}{{\bf S}}
\newcommand{\qT}{{\bf T}}

\newcommand{\qV}{{\bf V}}

\newcommand{\qX}{{\bf X}}

\newcommand{\dl}{{\mathsf{dl}}}
\newcommand{\SP}{{\mathsf{sp}}}
\newcommand{\EP}{{\mathsf{ep}}}
\newcommand{\Pil}{{\mathsf{che}}}
\newcommand{\TF}{{\mathsf{TF}}}
\newcommand{\ul}{{\mathsf{ul}}}
\newcommand{\Nguard}{{N_{\mathsf{guard}}}}

\newcommand{\tcp}{{T_\mathsf{CP}}}
\newcommand{\ofdm}{{\mathsf{ofdm}}}
\newcommand{\taudl}{\omega_{\dl}}
\newcommand{\tauul}{\omega_{\ul}^\Pil}
\newcommand{\tauulep}{\omega_{\ul}^\EP}
\newcommand{\tauulsp}{\omega_{\ul}^\SP}

\newcommand{\Sn}{\sigma_n^2}
\newcommand{\diag}{\mathrm{diag}}
\newcommand{\Trace}{\mathrm{Tr}}

\newcommand{\nm}{MN}

\newcommand{\chpqij}{{\chi}_{pq}^{ij}}
\newcommand{\kapqij}{{\kappa}_{pq}^{ij}}
\newcommand{\etpq}{{\eta}_{pq}}
\newcommand{\etpqr}{{\eta}_{pq^\prime}}
\newcommand{\betpqi}{\beta_{pq,i}}

\newcommand{\gampqi}{\gamma_{pq,i}^\Pil}
\newcommand{\gampqj}{\gamma_{pq,j}^\Pil}
\newcommand{\gampqri}{\gamma_{pq^\prime,i}^\Pil}
\newcommand{\gampqrj}{\gamma_{pq^\prime,j}^\Pil}
\newcommand{\Tpqi}{\qT_{pq}^{(i)}}
\newcommand{\Tpqj}{\qT_{pq}^{(j)}}
\newcommand{\Tpqjdg}{\qT_{pq}^{(j)^\dag}}
\newcommand{\Tpqidg}{\qT_{pq}^{(i)^\dag}}

\newcommand{\Hpq}{{\qH}_{pq}}
\newcommand{\hHpqdg}{\hat{\qH}_{pq}^\dag}
\newcommand{\hHpq}{\hat{\qH}_{pq}}

\newcommand{\hmki}{h_{pq,i}}
\newcommand{\hmkihat}{\hat{h}_{pq,i}^\Pil}
\newcommand{\hmkihatc}{\hat{h}_{pq,i}^{\Pil^*}}

\newcommand{\hmkjhatc}{\hat{h}_{pq,j}^{\Pil^*}}
\newcommand{\hmkjhat}{\hat{h}_{pq,j}^{\Pil}}

\newcommand{\epspqi}{\varepsilon_{pq,i}^{\Pil}}

\newcommand{\tmki}{\tau_{pq,i}}
\newcommand{\nmki}{\nu_{pq,i}}
\newcommand{\Lmk}{L_{pq}}

\newcommand{\Lmkp}{L_{pq'}}
\newcommand{\Bmki}{\beta_{pq,i}}
\newcommand{\Lmki}{\ell_{pq,i}}
\newcommand{\Lmkj}{\ell_{pq,j}}
\newcommand{\Kmki}{k_{pq,i}}
\newcommand{\Kmkj}{k_{pq,j}}
\newcommand{\kpmki}{\kappa_{pq,i}}
\newcommand{\kpmkj}{\kappa_{pq,j}}
\newcommand{\Kpq}{k_{pq}}
\newcommand{\Lpq}{\ell_{pq}}
\newcommand{\Pmax}{P_{\mathtt{max}}}

\newcommand{\Pplq}{P_{q}^{\mathtt{Pil}}}
\newcommand{\Pdtq}{P_{q}^{\mathtt{dt}}}
\newcommand{\rodtq}{\rho_{q}^{\mathtt{dt}}}
\newcommand{\rodtqr}{\rho_{q'}^{\mathtt{dt}}}
\newcommand{\roplq}{\rho_{q}^{\mathtt{Pil}}}
\newcommand{\Pplqr}{P_{q'}^{\mathtt{Pil}}}
\newcommand{\roplqr}{{\rho_{q'}^{\mathtt{Pil}}}}

\IEEEoverridecommandlockouts

\title{  Cell-Free Massive MIMO Meets OTFS Modulation }

\author{Mohammadali Mohammadi,~\IEEEmembership{Member,~IEEE,} Hien Quoc Ngo,~\IEEEmembership{Senior Member,~IEEE,}\\ and  Michail Matthaiou,~\IEEEmembership{Senior Member,~IEEE}\\
\thanks{Manuscript received Dec. 18, 2021; revised Apr. 11, 2022, July 26, 2022; accepted Sept. 26, 2022. Parts of this paper appeared at IEEE ICC 2022~\cite{MohammadICC2022}. The work of M. Mohammadi and M. Matthaiou was supported by a research grant from the Department for the Economy Northern Ireland under the US-Ireland R\&D Partnership Programme. The work of H. Q. Ngo was supported by the U.K. Research and Innovation Future Leaders Fellowships under Grant MR/S017666/1.  The associate editor coordinating the review of this paper and approving it for publication was A. Garc\'{\i}a Armada. (\emph{Corresponding author: Mohammadali Mohammadi.})

The authors are with the Centre for Wireless Innovation (CWI), Queen's University Belfast, U.K. Email:\{m.mohammadi, hien.ngo, m.matthaiou\}@qub.ac.uk.
}

}

\begin{document}
\maketitle
\vspace{-1em}
\begin{abstract}
We provide the first-ever performance evaluation of orthogonal time frequency space (OTFS) modulation in cell-free massive multiple-input multiple-output (MIMO) systems.  To investigate the trade-off between performance and overhead, we apply embedded pilot-aided and superimposed pilot-based channel estimation methods.  We then derive a closed-form expression for the individual user downlink and uplink spectral efficiencies (SEs) as a function of the numbers of APs, users and delay-Doppler domain channel estimate parameters. Based on these analytical results, we also present new scaling laws that the AP's and user's transmit power should satisfy, to sustain a desirable quality of service. It is found that when the number of APs, $M_a$, grows without bound, we can reduce the transmit power of each user and AP proportionally to $1/M_a$ and $1/M_a^2$, respectively, during the uplink and downlink phases. We compare the OTFS performance with that of orthogonal frequency division multiplexing (OFDM) at high-mobility conditions. Our findings reveal that, OTFS modulation with embedded pilot-based channel estimation provides up to $20$-fold gain over the OFDM counterpart in terms of $95\%$-likely per-user downlink SE. Finally, with superimposed pilot-based channel estimation, the increase in the uplink sum SE is more pronounced when the channel delay spread is increased.

\end{abstract}

\begin{IEEEkeywords}
	Cell-free massive multiple-input multiple-output (MIMO), orthogonal time frequency space (OTFS) modulation, spectral efficiency.
\end{IEEEkeywords}

\section{Introduction}~\label{Sec:Intro}
Future beyond-5G (B5G) wireless communication networks are envisioned to support the ever increasing demands for a wide range of network services as well as the seamless growth in the number of wireless connected devices~\cite{Tataria:6G}. In order to fulfill these demands, the cell-free massive multiple-input multiple-output (MIMO) concept has been recently proposed as a promising technique, thanks to its capability to support a high density of network devices, while providing substantial improvement in connectivity, spectral and energy efficiencies~\cite{Love:JSAC:2020,Hien:cellfree}.  In cell-free massive MIMO, there are no cell boundaries and a large number of access points (APs) are distributed over a large geographic area, and jointly serve many user equipments via time-division duplex (TDD) operation. The APs are connected via a front-haul network to the central processing units (CPUs), which are responsible for coordinating the coherent joint transmission and reception.  Although each AP is equipped with one or a small number of antennas, cell-free massive MIMO is still able to exploit both favorable propagation and channel hardening properties, similar to colocated massive MIMO~\cite{Hien:cellfree}. Therefore, simple beamforming schemes can be deployed at the APs to combine the received signals in uplink and to precode the symbols in the downlink~\cite{Zhang:TCOM:2021,Matthaiou:TCOM:2021}. Moreover, with channel hardening, decoding the signals using the channel statistics can provide sufficiently good performance~\cite{Hien:cellfree,Hien:TGCN:2018}. Nevertheless, high-mobility scenarios, such as high-speed railways~\cite{railway},  vehicle to vehicle (V2V) communications, and unmanned aerial vehicles (UAV) communications have remained largely unexplored. These systems, wherein a large number of high-mobility users require high data rates services with low latencies, will challenge the existing designs of cell-free massive MIMO due to the high-mobility channel interfaces.

In high-mobility scenarios, accommodating various type of emerging services, such as virtual reality, augmented reality, high quality video conferencing, and online gaming, fundamental limitations must be overcome pertaining to rapid channel variations. Recall that wireless channels in such scenarios are inherently linear time-variant fading channels, also known as \emph{doubly-selective} or \emph{doubly-dispersive} channels~\cite{Jakes}. Therefore, the widely adopted orthogonal frequency division multiplexing (OFDM) modulation in 4G and 5G becomes a problematic option for high-mobility scenarios. Although OFDM can efficiently mitigate inter-symbol interference (ISI) induced by  time dispersion through the use of cyclic prefix (CP), the frequency dispersion caused by  Doppler shifts impairs the orthogonality across the OFDM sub-carriers. The imperfect orthogonality of the sub-carriers is detrimental for the OFDM receiver, since it renders  OFDM incapable of yielding a robust performance.  Although the OFDM symbol duration can be shortened, so that the channel variations over each symbol appear inconsequential, one major drawback is the reduced spectral efficiency due to the CP, which undermines the application of OFDM in high data rate application scenarios~\cite{Jakes}. In this regard, designing new modulation waveforms  for high-mobility communications is indispensable~\cite{Wei:WC:2021}.

Recently, Hadani \textit{et al} developed a new two-dimensional (2D) modulation, referred to as orthogonal time-frequency space (OTFS) modulation, which has shown significant efficiency in dealing with the high Doppler problems occurring in OFDM modulation~\cite{Hadani:WCNC:2017}. By leveraging the 2D inverse symplectic finite Fourier transform (ISFFT),  OTFS multiplexes the information symbols in the delay-Doppler (DD) domain rather than in the time-frequency (TF) domain as in OFDM modulation. More specifically, through the 2D transformation from the DD domain to the TF domain, each information symbol will span the entire TF domain channel over an OTFS frame. Therefore,  OTFS efficiently exploits the potential of full-diversity, which is the key for supporting ultra-reliable communications~\cite{Wei:WC:2021}. More importantly, OTFS incorporates the key principle of the DD domain representation, where the number of channel taps correspond to the group of reflectors having a particular delay and Doppler value. Since the relative velocity and distance of the reflectors with respect to the receiver can be considered roughly invariant in the duration
of a frame, the DD taps are time invariant for a larger observation time as in the case of the TF representation~\cite{Hadani:WCNC:2017}.
Moreover, since there are only a small number of group of reflectors with different delay and Doppler values, the parameters that need to be estimated are also fewer, and the representation in this domain is more compact and sparse~\cite{Raviteja:TWC:2018,Raviteja:TVT:2019}. Finally, from an implementation point of view, OTFS modulation can be realized by adding pre-processing and post-processing blocks to any conventional multi-carrier modulations, such as OFDM modulation. Recognizing the aforementioned benefits of OTFS, there has been an upsurge of interest in applying OTFS to different wireless communication systems, including millimeter wave (mmWave) and terahertz communication systems~\cite{Chavva:WCNC:2021,Kim:ICC:2021}, non-orthogonal multiple access communication networks~\cite{Ding:TCOM:2021,Sharma:TCOM:2021}, and  massive MIMO~\cite{Wang:JSAC:2020,Dobre:JSAC:2021,Shi:TWC:2021,Wang:TVT:2021,Raviteja:TVT:2021}.

We note that the existing research in the context of massive MIMO has overlooked the importance of a spectral efficiency (SE) analysis and has merely focused on the pilot design and channel estimation for colocated massive MIMO-OTFS systems. Only one recent work~\cite{Raviteja:TVT:2021}, has proposed a low complexity multi-user precoding and detection methods for colocated massive MIMO OTFS systems and derived an expression for the achievable downlink SE of the system. The authors in~\cite{Wang:JSAC:2020} proposed an uplink-aided high-mobility downlink channel estimation scheme for massive MIMO-OTFS networks, which recover the uplink channel parameters including the angles, delays, Doppler frequencies, and the channel gains. In~\cite{Shi:TWC:2021}, a downlink  channel state information (CSI) acquisition scheme was proposed in the presence of fractional Doppler, including a deterministic pilot design and channel estimation algorithm. The authors in~\cite{Dobre:JSAC:2021} proposed a path division multiple access scheme for both uplink and downlink, where a 3D Newtonized orthogonal matching pursuit algorithm is utilized to recover the uplink channel parameters, including channel gains, direction of arrival, delays, and Doppler frequencies, over the antenna-time-frequency domain. In~\cite{Wang:TVT:2021}, a low-overhead and low-complexity receiver scheme was developed, including the pilot pattern design,  channel estimation, symbol detection, and carrier frequency offset compensation.

Nevertheless, the insights for colocated massive MIMO design cannot be directly extrapolated to the cell-free massive MIMO setting, as the latter is fundamentally different due to (i) the distributed nature of both APs and users, (ii) the large number of users and APs which render the multiple access schemes impractical, and (iii) the use of downlink transmission power control at the APs for each user. Furthermore, it is still not clear whether in asymptotic scenarios (i.e., when the number of APs is high) the effects of non-coherent interference, small-scale fading, and noise disappear in cell-free massive MIMO systems with OTFS modulation or not?

In order to serve  high-mobility users through time-variant channels, the integration of OTFS modulation into cell-free massive MIMO emerges as a promising alternative. To the authors' best knowledge, the consolidation of OTFS modulation with cell-free massive MIMO has not been reported before. Thus, this paper will focus on the achievable SE analysis of OTFS in cell-free massive MIMO. The main contributions of our work are as follows:

\begin{itemize}
\item We apply both embedded pilot (EP)-aided channel estimation with reduced guard interval and superimposed pilot (SP)-based channel estimation on OTFS modulation-based cell-free massive MIMO systems and derive the  minimum mean-square error (MMSE) estimate of the channel gains at the APs.
\item We investigate the downlink and uplink SE of OTFS waveforms in cell-free massive MIMO with maximum-ratio processing.\footnote{Maximum-ratio (MR) processing (a.k.a. matched	filtering or conjugate beamforming), is computationally simple and can be implemented in a distributed manner, i.e.,  most processing is done locally at the APs. Although other linear processing techniques, such as  minimum mean-square error/zero-forcing, may improve the system performance, they require more backhaul signalling for CSI exchange. Moreover, their implementation requires the computation of the inverse of a matrix that scales with the number of APs and antennas and, thus, induces a prohibitively intensive computation cost. This complexity is dramatically increased in OTFS systems, since the effective DD channel, which is used for precoding/combing design, has a very high dimension.} We derive closed-form expression for the individual user downlink and uplink achievable SE for finite numbers of APs and users, taking into account the effects of channel estimation errors.
\item We elaborate on the power-scaling laws as follows: If CSI is estimated with uncertainty, then when the number of APs, i.e., $M_a$, gets asymptotically large, cell-free massive MIMO will still bring significant power savings for each user and AP. In particular, when MR processing is applied, we can reduce the transmit power of each user proportionally to $1/M_a$ during the uplink transmissions. Moreover, the transmit power of each AP can be scaled down by  $1/M_a^2$ during downlink transmission to obtain the same rate.

\item We present the achievable uplink SE expressions for the minimum mean-squared error-based successive interference cancellation (MMSE-SIC) detector and arbitrary combining schemes with centralized and distributed processing designs. This analytical framework provides a basis for numerically evaluating the benefits and costs (in terms of backhaul signaling and computational complexity) of the different implementations.

\item Our findings demonstrate that a significant performance improvement can be achieved  by the OTFS  over the OFDM modulation in cell-free massive MIMO over high-mobility channels. Moreover, local minimum mean square error (L-MMSE) processing with simple centralized decoding significantly improves the uplink SE compared with MR combining at the expense of high computational complexity.
\end{itemize}

The rest of the paper is organized as follows. In Section~\ref{Sec:SysModel},
the system model and channel model are introduced. Two channel estimation schemes are presented in Section~\ref{Sec:CHE}. In Section~\ref{Sec:Perf}, we provide the achievable downlink and uplink SE and also investigate the power-scaling laws. Numerical results are presented in Section~\ref{Sec:Numer}, followed by conclusions in Section~\ref{Sec:conclusion}.

\textit{Notation:} We use bold upper case letters to denote matrices, and bold lower case letters to denote vectors; The superscripts $(\cdot)^T$, $(\cdot)^*$, and $(\cdot)^\dag$ stand for the transpose, conjugate and conjugate-transpose, respectively; $|\cdot|$ denotes the absolute value of a complex scalar; $\det(\cdot)$, $\Trace(\cdot)$, $(\cdot)^{-1}$, and $\mathrm{vec}(\cdot)$ denote the determinant,  trace, the inverse, and the vectorization operation;  $\mathbb{C}^{L\times N}$ denotes a $L\times N$ matrix; $\diag\{\cdot\}$ returns a diagonal matrix; $\mathrm{circ}\{\qx\}$ represents a circulant matrix whose first column is $\qx$; The matrix $\qF_N=\Big(\frac{1}{\sqrt{N}} e^{-j2\pi\frac{k\ell}{N}}\Big)_{k,\ell=0,\ldots,N-1}$ denotes the unitary discrete Fourier transform (DFT) matrix of dimension $N \times N$; $\qI_M$ and $\boldsymbol{0}_{M\times N}$ represent the $M\times M$ identity matrix and zero matrix  of size $M\times N$, respectively; $[\qA]_{(i,j)}$, $[\qA]_{(i,:)}$, and $[\qA]_{(:,j)}$ denote the  $(i,j)$th entry, the $i$th row, and $j$th column of $\qA$, respectively. The operator $\otimes$ denotes the Kronecker product of two matrices; $(\cdot)_N$ denotes the modulo $N$ operation; $\mathfrak{R}[\cdot]$ returns the real part of the input complex number; $\mathbb{N}[a,b]$ represents the set of non-negative integer numbers ranging from $a$ to $b$; $\lfloor\cdot\rfloor$ is the floor function which returns the largest integer smaller than the input value. The big-O notation $\mathcal{O}(\cdot)$ describes asymptotically the order of computational complexity. Finally, $\mathbb{E}\{\cdot\}$ denotes the statistical expectation.

\section{System Model}~\label{Sec:SysModel}
We consider a cell-free massive MIMO system consisting of $M_a$ APs and $K_u$ users.  The APs and users are all equipped with a single antenna, and they are randomly located in a large area. The APs are connected to a CPU via a front-haul network. The users are assumed to move at high speeds, thus the channels between the AP and users experience doubly-selective fading. An OTFS frame is divided into two phases: uplink payload transmission with channel estimation, and downlink payload transmission. Uplink and downlink transmissions are underpinned by TDD operation.

\subsection{OTFS Modulation and Channel Model}~\label{Sec:OTFS}
Consider an OTFS system with $M$ sub-carriers having $\Delta f$ (Hz) bandwidth each, and $N_T=N_{\dl}+N_{\ul}$ symbols having $T$ (seconds) symbol duration, of which $N_{\ul}$ symbols are dedicated for uplink data transmission as well as channel estimation and $N_{\dl}$ symbols are used for downlink data transmission. For the sake of clarity and brevity, we assume $N_{\mathtt{dl}}=N_{\mathtt{ul}}=N$. Nevertheless, in  the final results, $N$ can be readily replaced by $N_{\mathtt{dl}}$  and $N_{\mathtt{ul}}$, in the corresponding variables for downlink and uplink results, respectively. Therefore, the total bandwidth of the system is $M\Delta f$ and $NT$ is the duration of an OTFS block during downlink or uplink transmission.

The modulated data symbols of the $q$th user $\{x_q[k,\ell] , k\in \mathbb{N}[0,N-1], \ell\in \mathbb{N}[0,M-1]\}$ are arranged over the DD lattice $\Lambda=\left\{\frac{k}{NT}, \frac{\ell}{M\Delta f}\right\}$, where $\frac{1}{NT}$  and $\frac{1}{M\Delta f}$ are the sampling intervals of the Doppler-dimension and  delay-dimension, respectively; $k$ and $\ell$ represent the Doppler shift and delay index, respectively. An ISFFT is applied at the OTFS transmitter to convert the set of $\nm$ zero-mean independent and identically distributed (i.i.d.) data symbols $x_q[k,\ell]$, with $\mathbb{E}\{|x_q[k,\ell]|^2\}=\Pdtq$ being the $q$th user transmit power in the uplink, to the symbols $X_q[n,m]$ in the TF domain as
\begin{align}~\label{eq:Xtf}
X_q[n,m] = \frac{1}{\sqrt{MN}}\sum_{k=0}^{N-1}\sum_{\ell=0}^{M-1} x_q[k,\ell] e^{j2\pi\left(\frac{nk}{N}-\frac{m\ell}{M}\right)},
\end{align}
where $n\in\mathbb{N}[0,N-1]$, and $m\in\mathbb{N}[0,M-1]$. Accordingly, by using a Heisenberg transform, $X_q[n,m]$ are converted to a time domain signal as
\begin{align}~\label{eq:st}
s_q(t)=\sqrt{{\eta}_q}\sum_{n=0}^{N-1}\sum_{m=0}^{M-1}
X_q[n,m] g_{tx}(t-nT)
e^{j2\pi m \Delta f(t-nT)},
\end{align}
where $0\leq{\eta}_q\leq1$ is the uplink power control coefficient for the $q$th user and $g_{tx}(t)$ is the transmitter pulse of duration $T$  which is defined as
\begin{align}~\label{eq:gtx}
g_{tx}(t) =\left\{ \begin{array}{ll}
\!\!\!\frac{1}{\sqrt{T}}         & 0\leq t\leq T,  \\
\!\!\!0           &  \text{otherwise}.\end{array} \right.
\end{align}

The channel impulse response between the $q$th user and the $p$th AP in the DD domain is given by~\cite{Raviteja:TWC:2018}
\begin{align}~\label{eq:hmk}
h_{pq}(\tau,\nu) = \sum_{i=1}^{\Lmk}\hmki\delta(\tau-\tmki)\delta(\nu-\nmki),
\end{align}
where $\Lmk$ denotes the number of paths from the $q$th user to the $p$th AP, $\tmki$, $\nmki$, and  $\hmki$ denote the delay, Doppler shift, and the channel gain, respectively, of the $i$th path of the $q$th user to the $p$th AP. The complex channel gains $\hmki$  for different $(pq, i)$ are independent random variables (RVs) with $\hmki\sim\mathcal{CN}(0,\Bmki)$. The delay and Doppler shifts for the $i$th path are given by $\tmki = \frac{\Lmki}{M\Delta f}$ and $\nmki = \frac{\Kmki+\kpmki}{N T}$
respectively, where $\Lmki \in \mathbb{N}[0, M-1]$ and $\Kmki\in \mathbb{N}[0, N-1]$ are the delay index and Doppler index of the $i$th path, and $\kpmki\in (-0.5,0.5)$ is a fractional Doppler associated with the $i$th path.  Let $\Kpq$ and $\Lpq$ denote the delay and Doppler taps corresponding to the largest delay and Doppler between the $q$th user and the $p$th AP. We note that the typical value of the sampling time $1/(M\Delta f)$ is usually sufficiently small in the delay domain. Hence, the impact of fractional delays in typical wideband systems can be neglected~\cite{Raviteja:TWC:2018}.\footnote{In practice, with $\Delta f=15$ kHz and $M=512$ ($1024$), a typical value of the sampling time $1/(M\Delta f)$ in the delay domain is $0.13~\mu$s ($0.065~\mu$s). Such sampling times are sufficiently small, allowing different paths, with delay difference of up to $0.13~\mu$s ($0.065~\mu$s) to be efficiently separable at the receiver.}

\subsection{Uplink Payload Data Transmission}~\label{Sec:ULdata}
In the uplink, all $K_u$ users simultaneously send their data to the APs. The received signal at the $p$th AP is expressed as
\begin{align}~\label{eq:rt}
r_p(t)\!=\!
\!\sum_{q=1}^{K_u}\!\!
\sqrt{{\eta}_q}
\int\!\!\!\int \!\!\!
h_{pq}(\tau,\nu)s_q(t-\tau)e^{j2\pi\nu(t-\tau)} d\tau d\nu \!+\! w_p(t),
\end{align}
where $w_p(t)$ represents the noise signal in the time domain following a stationary Gaussian random process and we have $w_p(t)\sim\mathcal{CN}(0,\Sn)$ with $\Sn$ denoting the noise variance. The received signal is processed via a Wigner transform, implemented by a receiver filter with an impulse response $g_{rx}(t)$ (the receive filter has the same definition as $g_{tx}(t)$ in~\eqref{eq:gtx}) followed by a sampler, to obtain the received samples $\{Y_p[n,m], n\in\mathbb{N}[0,N-1], m\in\mathbb{N}[0,M-1]$\} in the TF domain
\begin{align}~\label{eq:Ymn}
Y_p[n,m]=\int r_p(t) g_{rx}(t-nT)
e^{-j2\pi m \Delta f (t-nT)} dt.
\end{align}

Finally, by applying a SFFT to $Y_p[n,m]$, assuming that  practical non-ideal rectangular waveforms are used for transmit and receive pulse shaping filters~\cite{Raviteja:TWC:2018}, the received signal at the $p$th AP in the DD domain can be written as
\begin{align}~\label{eq:yAP:rec:frac}
y_p[k,\ell]
\!&=
\sqrt{{\eta}_q}
\sum_{q=1}^{K_u}\!
\sum_{k'=0}^{\Kpq}
\sum_{\ell'=0}^{\Lpq}
b[k',\ell']
\!\!
\sum_{c=-N/2}^{N/2-1}\!
\!
{h}_{pq}[k',\ell']\alpha[k,l,c]\nonumber\\
&\hspace{2em}
\times x_q[(k\!-\!k'\!+\!c)_N,(\ell\!-\!\ell')_M]
\!+\! w_p[k,\ell],
\end{align}
where $b[k',\ell']\in\{0,1\}$ is a path indicator, i.e., $b[k',\ell']=1$ indicates there is a path with Doppler tap $k'$ and delay tap $\ell'$, otherwise $b[k',\ell']=0$ (i.e., $\sum_{k'=0}^{\Kpq}\sum_{\ell'=0}^{\Lpq} b[k',\ell']=\Lmk$). Moreover $\alpha[k,l,c]$ is given by~\cite{Raviteja:TWC:2018}
\begin{align}
\alpha[k,\ell,c] \!= \!
	\left\{
	\begin{array}{ll}
			\!\!\!\frac{1}{N}\beta_i(c) e^{-j2\pi\frac{(\ell-\ell')(k'+\kappa')}{MN}}\\
			    								 & \hspace{-5em}    \ell'\leq \ell < M  \\
			\!\!\!\frac{1}{N}(\beta_i(c)\!-\!1) e^{-j2\pi\frac{(\ell\!-\ell')(k'+\kappa')}{MN}}e^{-j2\pi\frac{(k-k'+c)_N}{N}}\\
													& \hspace{-5em}  0\leq \ell <\ell', \end{array} \right.
\end{align}
where $\beta_i(c) \triangleq\ \frac{ e^{-j2\pi(-c-\kappa')} -1}{e^{-j\frac{2\pi}{N}(-c-\kappa')} -1}$, whilst $\kappa'$ denotes the fractional Doppler associated with the $(k',\ell')$ path. Moreover, in~\eqref{eq:yAP:rec:frac}, $ w_p[k,\ell]$ is the received additive noise, which by using~\eqref{eq:rt} and~\eqref{eq:Ymn} can be expressed as
\begin{align*}
w_p[k,\ell] = \frac{1}{\sqrt{MN}}\sum_{n=0}^{N-1}\sum_{m=0}^{M-1} W_p[n,m] e^{-j2\pi(\frac{nk}{N}-\frac{m\ell}{M})},
\end{align*}
where $W_p[n,m]$ is the received noise sampled at $t=nT$ and $\nu=m\Delta f$, given by $W_p[n,m]=\int w_p(t) g_{rx}(t-nT) e^{-j2\pi m \Delta f (t-nT)} dt$. It can be readily checked that since $w_p(t)\sim\mathcal{CN}(0,\Sn)$, we have also  $w_p[k,\ell]\sim\mathcal{CN}(0,\Sn)$.

For the sake of simplicity of presentation and analysis, we consider the vector form representation of the input-output relationship of OTFS system in the DD domain. Let $\qv\in\{\qx_q, \qy_p, \qw_p\}\in \mathbb{C}^{\nm\times 1}$ where $\qx_q$, $\qy_p$, and $\qw_p$ denote the vector of transmitted symbols from the $q$th user, received signal vector at the $p$th AP, and the corresponding noise vector, respectively (e.g., $\qx_q=\mathrm{vec}(\qX_q)$ where $\qX_q\in\mathbb{C}^{M\times N}$).
Hence, the input-output relationship in~\eqref{eq:yAP:rec:frac} can be expressed in vector form as
\begin{align}~\label{eq:yAPm:Vect}
\qy_p = \sum_{q=1}^{K_u} \sqrt{\rodtq \eta_q}\qH_{pq} \tilde{\qx}_q + \qw_p
\end{align}
where  $\tilde{\qx}_q=\frac{1}{\sqrt{\Pdtq}}\qx_q\in\mathbb{C}^{MN\times 1}$;   $\rodtq=\frac{\Pdtq}{\Sn}$ is the normalized uplink signal-to-noise ratio (SNR); $\qH_{pq}\in \mathbb{C}^{MN\times MN}$ is the effective DD domain channel between the $q$th user and $p$th AP, given by~\cite{KWAN:TWC:2021}
\begin{align}~\label{eq:Hpq}
\qH_{pq}
&=
\sum_{i=1}^{\Lmk}
\hmki \qT_{pq}^{(i)},
\end{align}
where $\qT_{pq}^{(i)}=({\qF}_N \otimes {\qI}_M)\boldsymbol{\Pi}^{\Lmki} \boldsymbol{\Delta }^{\Kmki+\kpmki}
({\qF}_N^\dag \otimes {\qI}_M)$, while $\boldsymbol{\Pi} =\mathrm{circ}\{[0,1,0,\ldots,0]^T_{MN\times 1}\}$ denotes a $\nm\times \nm$ permutation matrix and $\boldsymbol{\Delta} = \diag\{z^0,z^1,\ldots,z^{MN-1}\}$ is a diagonal matrix with $z=e^{\frac{j2\pi}{\nm}}$.

In order to detect the symbol transmitted from the $q$th user, $\tilde{\qx}_q$, the $p$th AP multiplies the received signal $\qy_p$ with the Hermitian of the (locally obtained) channel estimation matrix $\hat{\qH}_{pq}$. In Section~\ref{Sec:CHE}, we discuss the channel estimation of cell-free massive MIMO systems.
\subsection{Downlink Payload Data Transmission}~\label{Sec:DLData}
The APs use maximum-ratio beamforming to transmit signals to $K_u$ users. Let $\qs_q=\mathrm{vec}(\qS_q)\in\mathbb{C}^{\nm\times 1}$ be the intended signal vector for the $q$th user, where $\qS_q\in\mathbb{C}^{M\times N}$ representing the symbols in the DD domain, whose $(k,\ell)$th element $s_q[k,\ell]$ is the modulated signal in the $k$th Doppler and $\ell$th delay grid, for $ k\in \mathbb{N}[0,N-1], \ell\in \mathbb{N}[0,M-1]$. Therefore, the signal transmitted   from the $p$th AP is
\begin{align}~\label{eq:xqd}
\qx_{\dl,p}= \sqrt{\rho_d}
\sum_{q=1}^{K_u}
\eta_{pq}^{1/2}\hat{\qH}_{pq}^\dag \qs_q,
\end{align}
where $\rho_d$ is the normalized SNR of each symbol;  $\eta_{pq}$, $p=1,\ldots,M_a$, $q=1,\ldots,K_u$ are the power control coefficients chosen to satisfy  the  following power constraint at each AP~\cite{Hien:cellfree}
\begin{align}~\label{eq:AP:powconst}
\mathbb{E}\left\{\|\qx_{\dl,p}\|^2\right\} \leq \rho_d.
\end{align}

The received signal at the $q$th user in DD domain can be expressed as
\vspace{-0.2em}
\begin{align}~\label{eq:zqd}
\qz_{\dl,q}
&= \sum_{p=1}^{M_a}{\qH}_{pq} \qx_{\dl,p} + \qw_{\dl,q}\nonumber\\
&=\sqrt{\rho_d}\sum_{p=1}^{M_a}
\eta_{pq}^{1/2} {\qH}_{pq} \hat{\qH}_{pq}^\dag \qs_{q}\nonumber\\
&\hspace{1em}
+
\sqrt{\rho_d}\sum_{p=1}^{M_a}
\sum_{q'\neq q}^{K_u}
\eta_{pq'}^{1/2}{\qH}_{pq}\hat{\qH}_{pq'}^\dag \qs_{q'} +\qw_{\dl,q},
\end{align}
where $\qw_{\dl,q}\in\mathbb{C}^{\nm\times 1}$ is the AWGN vector at the user $q$.

\section{Channel Estimation}~\label{Sec:CHE}
By considering TDD operation, we rely on channel reciprocity to acquire CSI in the cell-free massive MIMO system with OTFS modulation.\footnote{Since the relative velocity and distance of the reflectors remain roughly the same for at least few milliseconds, the DD taps are time invariant for a larger observation time as compared to that in the TF representation~\cite{Hadani:2016}. Therefore, through the proper design of OTFS parameters, it is reasonable to assume the same DD impulse response for both the uplink and downlink transmission phases.} An intuitive method to estimate CSI is to transmit an impulse in the DD domain as the training pilot and then estimate the DD channel impulse response using the least square (LS) or MMSE estimator~\cite{Raviteja:TVT:2019,KWAN:TWC:2021,Flanagan:vtc:2020}. The transmitted impulse pilot is spread by the channel and interfere with data symbols in the DD domain. Therefore, inserting guard symbols to avoid the interference between the pilot and data symbols is required. This method is termed as EP-aided channel estimation in the literature~\cite{Raviteja:TVT:2019}. We further notice that a few recent works have investigated the pilot pattern design and channel estimation for MIMO-OTFS~\cite{Hanzo:TCOM:2021} and colocated massive MIMO~\cite{Shi:TWC:2021,Heath:SPL:2019,Fei:WCL:2021}. However, these pilot patterns cannot be directly applied to cell-free massive MIMO systems due to their high computational complexity and huge pilot overhead, which scales with the number of users. In order to illustrate the required pilot overhead in more detail, let us consider the EP-aided scheme. In the considered system, to distinguish the DD channels associated with $K_u$ users, $K_u$ impulses are required to be transmitted. Let $\tau_{max}$ and $\nu_{max}$ be the maximum delay and the maximum Doppler spread among all channel paths. Define $\ell_{max} = \tau_{max}M\Delta f=\max_{p,q}\ell_{pq}$ and $k_{max} = \nu_{max}NT=\max_{p,q}k_{pq}$, which indicates that the DD channel responses of the users have a finite support $[0, \ell_{max}]$ along the delay dimension and $[-k_{max}, k_{max}]$ along the Doppler dimension. Then, the guard intervals between two adjacent impulse along the Doppler and delay dimension should not be smaller than $2k_{max}$ and $\ell_{max}$, respectively. Moreover, to avoid inter-user interference during the channel estimation, users cannot use dedicated pilot and guard grids of each other for data transmission.  As a result, the pilot length to transmit $K_u$ impulses in OTFS-based cell-free massive MIMO systems should be at least $2K_uk_{max}\ell_{max}$. This would be more challenging in the case of fractional Doppler, where by using full-guard pilot pattern~\cite{Raviteja:TVT:2019,KWAN:TWC:2021} the length of pilot overhead should be $2K_u N \ell_{max}$.

In order to apply the EP-aided channel estimation method with reduced guard symbols into cell-free massive MIMO, we relax the assumption that users cannot transmit their information symbols over the dedicated channel estimation DD grids of each other. Specifically, we assume that the pilot and guard region of different users cannot overlap, however,  user $q'$ can use the pilot and guard DD grid of the other users, $q=1,\ldots, K_u$,  $q\neq q'$ for the uplink data transmission. Therefore, the pilot overhead for user $q'$ is $\Nguard$ symbols and  there are still $(M N_{\ul}-\Nguard)$ DD resource bins available for uplink data transmission. Assuming that the pilot and guard DD grid of different users are regularly placed next to each other, the total overhead is $\Nguard=(2\ell_{max}+1)(4k_{max}+4\hat{k}+1)$ per each user, where $\hat{k}$ denotes the additional guard to mitigate the spread due to fractional Doppler and $\hat{k}\in\big\{0,\ldots,\lfloor \frac{N-4k_{max}-1}{4}\rfloor\big\}$. Therefore,  $K_u\leq \left \lfloor \frac{\nm}{\Nguard}\right \rfloor$ users can be supported. As an alternative, pilot symbols can be superimposed on to the data symbols in the DD domain to avoid the SE loss due to the null guard interval transmission~\cite{Dataided:WCL:2021,Prasad:TWC:2021}. This scheme is termed as SP-based channel estimation, wherein a complete transmit frame size of $M\times N$ contains superimposed data and pilot symbols~\cite{Prasad:TWC:2021}. This provides a degree-of-freedom for orthogonal pilot assignment in the considered system, where $\nm$ orthogonal pilot sequences can be designed. Therefore, SP-based channel estimation can serve $K_u\leq \nm$ users at the same time without pilot contamination.

\subsection{Embedded-Pilot Channel Estimation}~\label{sec:che}
We deploy the EP-aided channel estimation method with reduced guard symbols, while users are allowed to use the dedicated pilot and guard DD grids of each other's for data transmission. Consider $\varphi_q[k_q, \ell_q]$, with $\mathbb{E}\{|\varphi_q[k_q, \ell_q]|^2\}=\Pplq$ denoting a known pilot symbol for the $q$th user at a specific DD grid location $[k_q, \ell_q]$,  $x_{dq}[k,\ell]$, with $\mathbb{E}\{|x_{dq}[k,\ell]|^2\}=\Pdtq$ denoting the $q$th user's data symbol at grid point $[k,\ell]$, and assume that each pilot is surrounded by a guard region of zero symbols.  Therefore, for the $q$th user the pilot, guard, and data symbols in the DD grid are arranged as
\begin{align}~\label{eq:datapatern}
x_q[k,\ell] = \left\{ \begin{array}{ll}
\!\!\!\varphi_q         &  k=k_q, \ell=\ell_q,  \\
\!\!\!0           &  k\in \mathcal{K}, k\neq k_q ~  \ell\in\mathcal{L}, \ell\neq\ell_q,\\
\!\!\!x_{dq}[k,\ell] &  \mbox{otherwise},\end{array} \right.
\end{align}
where $\mathcal{K}=\{k_q-2k_{max}-2\hat{k},\cdots, k_q+2k_{max}+2\hat{k}\}$, $\mathcal{L}=\{\ell_q-\ell_{max}\leq \ell \leq \ell_q+\ell_{max}\}$.  At the receiver,  the received symbols $y_p[k,\ell]$, $k_q-k_{max}-\hat{k} <k< k_q+k_{max}+\hat{k}$, $\ell_q\leq \ell \leq \ell_q-\ell_{max}$ are used for channel estimation. Therefore, from~\eqref{eq:yAP:rec:frac}, we have
\begin{align}~\label{eq:yAPpil}
y_p[k,\ell] &=
\sqrt{\eta_q}
\tilde{b}[\ell-\ell_q] \tilde{h}_{pq}[(k-k_q)_N,(\ell-\ell_q)_M]\varphi_q\nonumber\\
&\hspace{2em}+\mathcal{I}_1(k,\ell)+\mathcal{I}_2(k,\ell)+ w_p[k,\ell],
\end{align}
where
\begin{align*}
\tilde{b}[\ell-\ell_q] =
\left\{ \begin{array}{ll}
1,        &  \sum_{k'=0}^{k_{pq}} b[k',\ell-\ell_q]\geq 1, \\
0           &  \mbox{otherwise},\end{array} \right.
\end{align*}
is the path indicator and
\begin{align*}
\tilde{h}_{pq}[(k-k_q)_N,(\ell\!-\!\ell_q)_M] &\nonumber\\
&\hspace{-3em}
=\sum_{k'=0}^{k_{pq}}
b[k',\ell\!-\!\ell_q] h_{pq}[k',\ell\!-\!\ell_q]\alpha(k,\ell,c).
\end{align*}

In~\eqref{eq:yAPpil}, $\mathcal{I}_1(k,\ell)$ denotes the interference spread from $q$th user's data symbols due to the existence of fractional Doppler, given by
\vspace{-0.1em}
\begin{align}~\label{eq:Ikl1}
\mathcal{I}_1(k,\ell) &\!=\!
\sqrt{\eta_q}\sum_{k'=0}^{k_{pq}}
\sum_{\ell'=0}^{\ell_{pq}}
b[k',\ell']
\sum_{c\not\in \mathcal{K}}
\!
h_{pq}[(k\!-\!k')_N,(\ell\!-\!\ell')_M]\nonumber\\
&\hspace{1em}\times
\alpha[k,\ell,c]x_{dq}[(k\!-\!k'\!+\!c)_N,(\ell\!-\!\ell')_M],
\end{align}
and  $\mathcal{I}_2(k,\ell)$ denotes the interference spread from data symbols of other users, which can be expressed as
\vspace{0em}
\begin{align}~\label{eq:Ikl2}
\mathcal{I}_2(k,\ell) &= \sum_{q'\neq q}^{K_u}\sqrt{\eta_{q^\prime}}
\sum_{k'=0}^{k_{pq'}}
\sum_{\ell'=0}^{\ell_{pq'}}
b[k',\ell']
\sum_{c=-N/2}^{N/2}
h_{pq'}[k',\ell']\nonumber\\
&\hspace{0em}
\times
\alpha(k,\ell,c)x_{dq'}[(k-k'+c)_N,(\ell-\ell')_M].
\end{align}

Note that the parameters $\ell_{pq,i}$ and $k_{pq,i}$ remain constant over multiple OTFS frames, while the complex channel gain $h_{pq,i}$  varies across frames. Therefore, we can efficiently estimate these parameters using the proposed approaches in ~\cite{Raviteja:TVT:2019,Flanagan:vtc:2020} and use these results during the MMSE	estimation process of the channel gains. Then, by using the MMSE estimation,  $h_{pq,i}$ can be estimated as $\hat{h}_{pq,i}^{\EP} = c_{pq,i}y_p[k,\ell],$ where\footnote{Hereafter, we use the superscripts $\EP$ and $\SP$ in the related variables to denote the EP and SP-based channel estimation, respectively. }
\begin{align}~\label{eq:cpqk}
c_{pq,i}\!=\!
\frac{\sqrt{\Pplq\eta_q}\beta_{pq,i}}
{\Pplq\eta_q\beta_{pq,i}
	\!+\! \mathbb{E}
	\{|\mathcal{I}_1(k,\ell)|^2\} \!+\!
	\mathbb{E}
	\{|\mathcal{I}_2(k,\ell)|^2\}\!+\!
	\Sn}.
\end{align}
We now proceed to derive the two expectation terms in the denominator of~\eqref{eq:cpqk}.
By invoking~\eqref{eq:Ikl1}, and considering the fact that data symbols are independent and $\mathbb{E}\{|x_{dq}[k',\ell']|^2\}=P_u$, we obtain $\mathbb{E}\{|\mathcal{I}_1(k,\ell)|^2\}$ as~\eqref{eq:EIk1} at the top of the next page.
\bigformulatop{18}{
\begin{align}~\label{eq:EIk1}
\mathbb{E}\{|\mathcal{I}_1(k,\ell)|^2\} &=
\eta_q\sum_{k'=0}^{k_{pq}}
\sum_{\ell'=0}^{\ell_{pq}}
b[k',\ell']
\sum_{c \notin \mathcal{K} }
\mathbb{E}\bigg\{\Big|x_{dq}[(k-k'+c)_N,(\ell-\ell')_M]\Big|^2\bigg\}
\mathbb{E}\bigg\{\Big|h_{pq}[(k-k')_N,(\ell-\ell')_M]\alpha[k,\ell,c]\Big|^2\bigg\}\nonumber\\
&=\Pdtq\eta_q \sum_{k'=0}^{k_{pq}}
\sum_{\ell'=0}^{\ell_{pq}}
b[k',\ell']\sum_{c \notin \mathcal{K} }
\mathbb{E}\bigg\{\Big|h_{pq}[(k-k')_N,(\ell-\ell')_M]\Big|^2\bigg\} \Big|\alpha[k,\ell,c]\Big|^2\nonumber\\
&=\Pdtq\eta_q\sum_{i=1}^{L_{pq}}
\mathbb{E}\Big\{\big|h_{pq,i}\big|^2\Big\}
\bigg|
\sum_{c \notin \mathcal{K} }\alpha[k,\ell,c]\bigg|^2.
\end{align}
}

We notice that in the delay domain, only $\ell_{pq}+1$ symbols before $\ell$ affect the received symbol in $\ell$. However, due to fractional Doppler shift, all data symbols outside the guard space $\mathcal{K}$ interfere with the received symbol on index $k$.\footnote{By using a full guard space~\cite{Raviteja:TVT:2019}, this interference term would vanish. However, it requires a higher signalling overhead of $(2\ell_{max}+1)N$ compared to that of the scheme in~\eqref{eq:datapatern}.} It can be checked that for $k\in[k_q-k_{max}-\hat{k},k_q+k_{max}+\hat{k}]$ and $c \notin \mathcal{K}$, $\big|\alpha[k,\ell,c]\big|$ lies in its sidelobe and becomes almost a constant. For
the case of a rectangular window, $\big|\alpha[k,\ell,c]\big|^2 \approx 1/N$~\cite{Kwan:TCOM:2021}, which results in
\setcounter{equation}{19}
\begin{align}~\label{eq:window2}
\sum_{c \notin \mathcal{K} }\big|\alpha[k,\ell,c]\big|^2 \approx \frac{(N-4k_{max}-4\hat{k}-1)}{N^2}.
\end{align}
Therefore, we get
\begin{align}~\label{eq:EIk1:final}
\mathbb{E}\Big\{\big|\mathcal{I}_1(k,\ell)\big|^2\Big\}
&\!\approx\!
\Pdtq\eta_q\frac{(N\!-\!4k_{max}\!-\!4\hat{k}\!-\!1)}{N^2}
\!\sum_{i=1}^{\Lmk}\!
\mathbb{E}\Big\{\!\big|\hmki\big|^2\!\Big\}\nonumber\\
&=\Pdtq\eta_q\frac{(N-4k_{max}-4\hat{k}-1)}{N^2}
\sum_{i=1}^{\Lmk}\beta_{pq,i}.
\end{align}

By using similar steps, we can obtain
\begin{align}~\label{eq:EIk2:final}
\mathbb{E}\Big\{\big|\mathcal{I}_2(k,\ell)\big|^2\Big\} &\approx
\frac{1}{N}
\sum_{q'\neq q}^{K_u}
\eta_{q^\prime}
{{P_{q'}^{\mathtt{dt}}}}
\sum_{i=1}^{\Lmk} \beta_{pq',i}.
\end{align}

To this end, by substituting~\eqref{eq:EIk1:final} and~\eqref{eq:EIk2:final} into~\eqref{eq:cpqk}, $c_{pq,i}$ is obtained as~\eqref{eq:MMSE1:Final} at the top of the next page,
\bigformulatop{22}{
\begin{align}~\label{eq:MMSE1:Final}
c_{pq,i} &\approx \frac{\sqrt{\roplq\eta_q}\beta_{pq,i}}{\roplq\eta_q\beta_{pq,i}\!
	\!+\!  \eta_q\left(\frac{1}{N}\sum_{q'=1}^{K_u}\frac{\eta_{q^\prime}}{\eta_q}\rodtqr\sum_{i=1}^{L_{pq'}}\beta_{pq',i}
	\! -\rodtq\frac{(4k_{max}+4\hat{k}+1)}{N^2}\sum_{i=1}^{\Lmk}\beta_{pq,i}\right)+1},
\end{align}
}
where $\roplq=P_p/\Sn$ is the normalized SNR of each pilot symbol.
Moreover, it can be readily checked that
\setcounter{equation}{23}
\begin{align}~\label{eq:var:MMSE:EP}
\gamma_{pq,i}^{\EP} \triangleq \mathbb{E}\big\{|\hat{h}_{pq,i}^{\EP}|^2\big\} =\sqrt{\roplq \eta_q} \beta_{pq,i}c_{pq,i}.
\end{align}

\subsection{Superimposed Pilot-based Channel Estimation}
Superimposed training has been used to estimate the double-selective channels in  TF domain for single-/multi-antenna channel estimation~\cite{Tugnait:CLET:2003,Ghogho:SLP:2003}. To mitigate the mutual interference between data and pilots, periodic superimposed pilots with zero-mean data symbols~\cite{Tugnait:CLET:2003} and data-dependent superimposed training scheme~\cite{Ghogho:SLP:2003} have been investigated in the literature. Recently, the authors in~\cite{Prasad:TWC:2021} applied SP-based channel estimation to estimate DD domain channel for OTFS systems. As indicated in~\cite{Prasad:TWC:2021}, SP-sided TF designs in~\cite{Tugnait:CLET:2003,Ghogho:SLP:2003} cannot be applied to  OTFS systems directly, as the channel gain in the DD domain rapidly varies across frames. Therefore, to expand the SP-based channel estimation to OTFS systems, channel estimation and data detection must be performed within a frame, and in presence of mutual interference between data and pilots in the DD domain~\cite{Prasad:TWC:2021}. Therefore, to reduce the bit error rate (BER) and improve the SE, new data detection designs, e.g. iterative algorithms must be developed~\cite{Dataided:WCL:2021,Prasad:TWC:2021}. Moreover, the transmit power need to be optimally allocated between the data and pilot symbols.

In this method, the data symbol $x_{dq}[k,\ell]$ is superimposed on to the pilot symbol $\psi_{q}[k,\ell]$ in the DD domain as $x_q[k,\ell] = x_{dq}[k,\ell]+\psi_{q}[k,\ell]$. By arranging $ x_{dq}[k,\ell]$ and $\psi_{q}[k,\ell]$  in a matrix form as $\qX_{dq} \in \mathbb{C}^{M\times N}$ and $\boldsymbol{\Phi}_{q}\in \mathbb{C}^{M\times N}$, the transmit matrix of the $q$th user can be expressed as
\begin{align}
\qX_{q} = \qX_{dq}+\boldsymbol{\Phi}_{q}.
\end{align}

Assume that the zero mean entries of $\qX_{dq}$ and $\boldsymbol{\Phi}_{q}$ are i.i.d. and satisfy $\mathbb{E}\{|x_{dq}[k,\ell]|^2\} = \Pdtq$ and $\mathbb{E}\{|\psi_{q}[k,\ell]|^2\} = \Pplq$. Let $\qx_q = \mathrm{vec} (\qX_q)= \qx_{dq} +\boldsymbol{\psi}_q\in\mathbb{C}^{MN\times1}$, with $\qx_{dq}\in\mathbb{C}^{MN\times1}$ and $\boldsymbol{\psi}_q\in\mathbb{C}^{MN\times1}$ being the uplink data and pilot vectors of the $q$th user in the DD domain, respectively. Then, the received signal at the $p$-th AP in~\eqref{eq:yAPm:Vect} can be recast as
\begin{align}
\qy_p = \boldsymbol{\Xi}_{\psi q}\qh_{pq} + \sum_{q^\prime\neq q}^{K_u}\boldsymbol{\Xi}_{\psi q^\prime}\qh_{pq^\prime}+\sum_{q=1}^{K_u}\boldsymbol{\Xi}_{dq}\qh_{pq}
+\qw_p,
\end{align}
where the concatenated matrices $\boldsymbol{\Xi}_{\psi q^\prime}\in\mathbb{C}^{MN\times L_{pq^\prime}}$ and $\boldsymbol{\Xi}_{dq}\in\mathbb{C}^{MN\times L_{pq}}$ correspond to the pilot and date vector $\boldsymbol{\psi}_q$ and $\qx_{dq}$ (these two matrices are referred to as equivalent pilot and data matrices) are given by~\cite{KWAN:TWC:2021}
\begin{subequations}
	\begin{align}
	\boldsymbol{\Xi}_{\psi q} &=\sqrt{\eta_q}\left[\qT_{pq}^{(1)}\boldsymbol{\psi}_q,\ldots,\qT_{pq}^{(\Lmk)}\boldsymbol{\psi}_q\right] ~\label{eq:Xipq}\\
	\boldsymbol{\Xi}_{dq} &=\sqrt{\eta_q}\left[\qT_{pq}^{(1)}\qx_{dq},\ldots,\qT_{pq}^{(\Lmk)}\qx_{dq}\right],
	\end{align}
\end{subequations}
where $\qh_{pq} = [h_{pq,1},\dots,h_{pq,\Lmk}]^T\in\mathbb{C}^{\Lmk\times 1}$. Let us denote $\tilde{\qw}_p=\sum_{q^\prime\neq q}^{K_u}\boldsymbol{\Xi}_{\psi q^\prime}\qh_{pq^\prime}+\sum_{q=1}^{K_u}\boldsymbol{\Xi}_{dq}\qh_{pq}
+\qw_p$. The MMSE estimate of the channel vector $\qh_{pq}$ can be obtained as
\begin{align}\label{eq:MMSE:SP}
\hat{\qh}_{pq} =
\left(\boldsymbol{\Xi}_{\psi q}^\dag\qC_{\tilde{\qw}_p}^{-1}\boldsymbol{\Xi}_{\psi q} + \qC_{\qh_{pq}}^{-1}\right)^{-1}
\boldsymbol{\Xi}_{\psi q}^\dag\qC_{\tilde{\qw}_p}^{-1}\qy_p,
\end{align}
where $\qC_{\tilde{\qw}_p}=\mathbb{E}\{ \tilde{\qw}_p \tilde{\qw}_p^\dag\}$  and $ \qC_{\qh_{pq}}=\diag\{\beta_{pq,1},\ldots,\beta_{pq,\Lmk}\}$ represent the covariance matrices of the  noise-plus-interference vector $\tilde{\qw}_p$, and the DD-domain CSI vector ${\qh}_{pq}$. Note that in our work, we assume  these covariance matrices are perfectly known. This assumption is reasonable since these matrices can be accurately estimated by following, for example, the  Bayesian learning (BL)-based framework proposed in \cite{Srivastava:TWC:2022}. The BL-based approach iteratively estimates the elements of the covariance matrix using the well-established expectation maximization procedure and is shown to be a very accurate estimation method \cite{Srivastava:TWC:2022}.

\begin{figure}[t]
	\centering
	\includegraphics[width=1\textwidth]{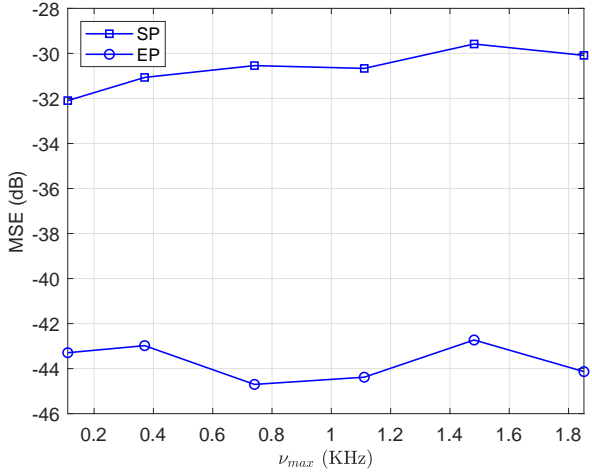}
	\caption{MSE comparison of the EP-aided and  SP-based channel estimation schemes as a function of $\nu_{\max}$ ($M_a=100, K_u=20, M=40, N=20$).}
	\label{fig. MSE}
\end{figure}

It is worth mentioning that the MMSE estimate in~\eqref{eq:MMSE:SP} requires to evaluate the inverse of a square matrix of size $\Lmk\times \Lmk$. Since the number of paths, i.e., $\Lmk$ is relatively small, the MMSE estimator does not incur much computational burden to the system.   We notice that OTFS enables the information symbols encoded in the DD domain (instead of the traditional TF domain) to experience a flat fading channel even when they are affected by multiple Doppler shifts present in high-mobility environments. By transforming back the received signals to the DD domain at the receiver, the channel estimation and data detection are performed in DD Domain. Thus, the impact of the time and frequency selectivity over the statistical expectations involved in~\eqref{eq:MMSE:SP} is alleviated.

In Figure~\ref{fig. MSE}, we compare the mean squared error (MSE) of the EP-aided and SP-based channel estimation schemes, defined as $\mathtt{MSE} = \mathbb{E}\{\|\qh_{pq}-\hat{\qh}_{pq}\|^2\}$, with increasing maximum Doppler $\nu_{\max}$. Large-scale fading coefficients are generated according to the 3GPP Urban Microcell model (please see Section~\ref{Sec:Numer} for more details). We observe that the MSE of both schemes remains almost constant for different values of $\nu_{\max}$, which is consistent with the results in~\cite{Raviteja:TWC:2018,Raviteja:TVT:2019,Prasad:TWC:2021}. This is because, similar to~\cite{Raviteja:TWC:2018,Raviteja:TVT:2019,Prasad:TWC:2021} we assume that the number of channel taps between each AP and user is constant with the Doppler spread and velocity, i.e., a higher Doppler spread does not imply an increase in the number of reflector clusters in DD domain. Hence, the proposed channel estimation methods, irrespective of the velocity, estimate the unknown channel gains and yield satisfactory performance.

\section{Performance Evaluation}~\label{Sec:Perf}
In this section, we derive new closed-form expressions for the downlink and uplink SEs, using the bounding technique from~\cite{Hien:cellfree}. We also quantify the power-scaling laws in the cases of imperfect CSI. Finally,  the achievable downlink SE under the assumption that the users have perfect CSI is derived as a benchmark. In order to derive the achievable downlink and uplink SEs, we assume that the proposed low complexity DD domain detector (LCD) in~\cite{Raviteja:TVT:2021} is utilized for detection of information symbols at each user and CPU, respectively. The LCD performs separate detection for each information symbol which yields lower complexity than the optimal detector, while achieving a SE close to that achieved with optimal joint demodulation~\cite{Raviteja:TVT:2021}.

\subsection{Preliminaries for Achievable SE Analysis}
We start by presenting some mathematical properties of the DD domain channel representation in~\eqref{eq:Hpq}, which will facilitate the subsequent achievable SE analysis.

\begin{lemma}\label{lemma:Tqi}
For the matrix $\qT_{pq}^{(i)}$, we have $\qT_{pq}^{(i)}\qT_{pq}^{(i)^{\dag}}=\qI_{\nm}$.
\end{lemma}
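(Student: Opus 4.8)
The plan is to recognise that $\Tpqi$ is a product of unitary matrices and hence is itself unitary, which is exactly the assertion $\Tpqi\Tpqidg=\qI_{\nm}$. Recalling its definition,
\begin{align*}
\Tpqi = (\qF_N \otimes \qI_M)\,\boldsymbol{\Pi}^{\Lmki}\,\boldsymbol{\Delta}^{\Kmki+\kpmki}\,(\qF_N^\dag \otimes \qI_M),
\end{align*}
I would first check that each of the four factors is unitary. Since $\qF_N$ is the unitary DFT matrix we have $\qF_N\qF_N^\dag=\qI_N$, and because the Kronecker product of unitary matrices is unitary, $(\qF_N \otimes \qI_M)$ is unitary with $(\qF_N \otimes \qI_M)^\dag=(\qF_N^\dag \otimes \qI_M)$; in particular $(\qF_N^\dag \otimes \qI_M)(\qF_N \otimes \qI_M)=(\qF_N^\dag\qF_N)\otimes\qI_M=\qI_{\nm}$. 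Next, $\boldsymbol{\Pi}$ is a permutation matrix and therefore orthogonal, so any integer power $\boldsymbol{\Pi}^{\Lmki}$ is again a permutation matrix with $(\boldsymbol{\Pi}^{\Lmki})^\dag=\boldsymbol{\Pi}^{-\Lmki}$. Finally, $\boldsymbol{\Delta}=\diag\{z^0,\ldots,z^{\nm-1}\}$ with $z=e^{j2\pi/\nm}$ has unit-modulus diagonal entries, so $\boldsymbol{\Delta}^{\Kmki+\kpmki}$ is diagonal with entries $e^{j2\pi m(\Kmki+\kpmki)/\nm}$ and its conjugate transpose is $\boldsymbol{\Delta}^{-(\Kmki+\kpmki)}$.

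I would then form the conjugate transpose by reversing the order and daggering each factor,
\begin{align*}
\Tpqidg = (\qF_N \otimes \qI_M)\,\boldsymbol{\Delta}^{-(\Kmki+\kpmki)}\,\boldsymbol{\Pi}^{-\Lmki}\,(\qF_N^\dag \otimes \qI_M),
\end{align*}
and multiply out $\Tpqi\Tpqidg$. The inner pair $(\qF_N^\dag \otimes \qI_M)(\qF_N \otimes \qI_M)$ collapses to $\qI_{\nm}$; the diagonal factors then cancel via $\boldsymbol{\Delta}^{\Kmki+\kpmki}\boldsymbol{\Delta}^{-(\Kmki+\kpmki)}=\qI_{\nm}$, the permutation factors cancel via $\boldsymbol{\Pi}^{\Lmki}\boldsymbol{\Pi}^{-\Lmki}=\qI_{\nm}$, and the surviving outer pair gives $(\qF_N \otimes \qI_M)(\qF_N^\dag \otimes \qI_M)=\qI_{\nm}$. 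This telescoping establishes $\Tpqi\Tpqidg=\qI_{\nm}$.

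There is essentially no serious obstacle: the claim is just that a product of unitary matrices is unitary. The only point deserving a moment's care is the fractional Doppler exponent $\kpmki\in(-0.5,0.5)$, which makes $\boldsymbol{\Delta}^{\Kmki+\kpmki}$ a non-integer power of $\boldsymbol{\Delta}$; I would note that this causes no difficulty, since $\boldsymbol{\Delta}^{s}$ for real $s$ is defined entrywise as $z^{ms}=e^{j2\pi m s/\nm}$, which remains of unit modulus and hence still satisfies $\boldsymbol{\Delta}^{s}(\boldsymbol{\Delta}^{s})^\dag=\qI_{\nm}$. An equivalent and even shorter route is simply to observe that $\Tpqi$ is a product of four unitary matrices, whence its unitarity—and thus $\Tpqi\Tpqidg=\qI_{\nm}$—follows at once without writing the cancellation explicitly.
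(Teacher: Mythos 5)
Your proof is correct and follows essentially the same route as the paper's Appendix A: expand $\Tpqi\Tpqidg$, collapse the inner pair $(\qF_N^\dag \otimes \qI_M)(\qF_N \otimes \qI_M)$ via the mixed-product identity, then cancel the diagonal factors, the permutation factors, and finally the outer Kronecker pair. Your extra remark that the entrywise definition of the fractional power $\boldsymbol{\Delta}^{\Kmki+\kpmki}$ preserves unit modulus is a small point the paper leaves implicit, but it does not change the argument.
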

\begin{proof}
See Appendix~\ref{lemma:Tqi:proof}.
\end{proof}

\begin{lemma}\label{lemma:TqTqp}
For two different paths with different delay indices in~\eqref{eq:Hpq}, we have
	\begin{align}
	\Big[\qT_{pq}^{(i)} \qT_{pq'}^{(j)^\dag}\Big]_{(r,r)} =
0, \quad  (\ell_{pq,i}-\ell_{pq,j})_M\neq 0.
	\end{align}
\end{lemma}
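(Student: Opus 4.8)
The plan is to expand $\qT_{pq}^{(i)}\qT_{pq'}^{(j)^\dag}$ directly from the definition in~\eqref{eq:Hpq} and let the inner discrete Fourier factors telescope, exactly as in the proof of Lemma~\ref{lemma:Tqi}. Writing $\qT_{pq}^{(i)}=(\qF_N\otimes\qI_M)\boldsymbol{\Pi}^{\Lmki}\boldsymbol{\Delta}^{\Kmki+\kpmki}(\qF_N^\dag\otimes\qI_M)$ and taking the Hermitian of the corresponding factor for the $(j)$-path, the interior product collapses via $(\qF_N^\dag\otimes\qI_M)(\qF_N\otimes\qI_M)=(\qF_N^\dag\qF_N)\otimes\qI_M=\qI_{\nm}$. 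Because every power $\boldsymbol{\Delta}^{a}$ is unitary (its entries are unit-modulus for any real exponent $a$) and $\boldsymbol{\Pi}$ is a permutation, the two diagonal phase matrices merge, leaving
\[
\qT_{pq}^{(i)}\qT_{pq'}^{(j)^\dag}=(\qF_N\otimes\qI_M)\,\qM\,(\qF_N^\dag\otimes\qI_M),\qquad \qM\triangleq\boldsymbol{\Pi}^{\Lmki}\boldsymbol{\Delta}^{a}\boldsymbol{\Pi}^{-\Lmkj},
\]
where $a$ gathers the (integer and fractional) Doppler phases of the two paths and will turn out to be irrelevant to the claim.

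Next I would identify $\qM$ as a single phase-weighted permutation. Acting on a standard basis vector, $\qM\qe_s=z^{a\,(s-\Lmkj)_{\nm}}\,\qe_{(s+\delta)_{\nm}}$ with $\delta\triangleq \Lmki-\Lmkj$, so $\qM$ carries exactly one nonzero entry per column, shifting the linear index by $\delta$. Adopting the vectorization convention in which the linear index is $kM+\ell$ (delay $\ell\in\mathbb{N}[0,M-1]$ fast, Doppler $k\in\mathbb{N}[0,N-1]$ slow, consistent with $\qx_q=\mathrm{vec}(\qX_q)$), the Kronecker factors read $[\qF_N\otimes\qI_M]_{(kM+\ell),(k'M+\ell')}=[\qF_N]_{k,k'}\,\delta_{\ell,\ell'}$, and similarly for $\qF_N^\dag\otimes\qI_M$. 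Writing the diagonal entry $\big[\qT_{pq}^{(i)}\qT_{pq'}^{(j)^\dag}\big]_{(r,r)}$ as a sum over the intermediate indices and substituting the single-entry form of $\qM$ yields a single sum over $s$ in which the two identity-matrix factors contribute the Kronecker deltas $\delta_{\ell_s,\ell_r}$ and $\delta_{\ell_t,\ell_r}$, where $\ell_t$ is the delay coordinate of the shifted index $t=(s+\delta)_{\nm}$.

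The decisive step is then purely combinatorial: the first delta forces $\ell_s=\ell_r$, while the second forces $\ell_t=\ell_r$. But regardless of whether the shift $\boldsymbol{\Pi}^{\delta}$ wraps around in the linear index, the delay coordinate always satisfies $\ell_t=(\ell_s+\delta)_M$, so the two constraints are compatible only if $(\ell_s+\delta)_M=\ell_s$, i.e.\ $\delta\equiv 0\pmod M$. Under the hypothesis $(\Lmki-\Lmkj)_M\neq0$ this never holds, so every term of the sum carries a vanishing delta and the diagonal entry is identically zero; the Doppler phases $z^{a(\cdot)}$ and the DFT weights $[\qF_N]_{k,k'}$ never get a chance to contribute. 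I expect the only delicate point to be the bookkeeping of this cyclic carry: when $\ell_s+\delta$ leaves $\{0,\dots,M-1\}$ the shift also advances the Doppler index $k$, and one must check that this carry leaves the delay coordinate at $(\ell_s+\delta)_M$ in every case, so that it is precisely the delay (and not the Doppler) mismatch that annihilates the diagonal.
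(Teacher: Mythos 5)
Your proof is correct and follows essentially the same route as the paper's: collapse the inner DFT factors, recognize the middle matrix $\boldsymbol{\Pi}^{\Lmki}\boldsymbol{\Delta}^{a}\boldsymbol{\Pi}^{-\Lmkj}$ as a phase-weighted cyclic shift by $\delta=\Lmki-\Lmkj$, and observe that conjugation by $\qF_N\otimes\qI_M$ preserves the delay coordinate, so a nonzero diagonal entry would force $(\ell_s+\delta)_M=\ell_s$, impossible when $(\Lmki-\Lmkj)_M\neq 0$. The only difference is presentational: the paper writes the shifted diagonal out explicitly as block matrices with a case split on the sign of $\delta$, whereas your basis-vector and double-index bookkeeping handles all cases at once and correctly settles the cyclic-carry point you flag, since $(x \bmod MN)\bmod M = x\bmod M$.
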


\begin{proof}
See Appendix~\ref{lemma:TqTqp:proof}.
\end{proof}

\begin{lemma}\label{lemma:TqTqp:abs}
 For any two matrices of $\qT_{pq}^{(i)}$ and $\qT_{pq'}^{(j)}$ in~\eqref{eq:Hpq}, we have
	\begin{align}
\Bigg|\sum_{r'=1}^{MN} \bigg[\qT_{pq}^{(i)} \qT_{pq'}^{(j)^\dag}\bigg]_{(r,r')}\Bigg|^2=1.
	\end{align}
\end{lemma}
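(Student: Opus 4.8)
The plan is to read the row-sum as a single coordinate of a matrix applied to the all-ones vector, and then to track that vector through the factored form of $\qT_{pq}^{(i)}\qT_{pq'}^{(j)^\dag}$. Writing $\boldsymbol{1}_{MN}$ for the all-ones column vector, note that
\begin{align}
\sum_{r'=1}^{MN}\Big[\qT_{pq}^{(i)}\qT_{pq'}^{(j)^\dag}\Big]_{(r,r')}=\Big[\qT_{pq}^{(i)}\qT_{pq'}^{(j)^\dag}\boldsymbol{1}_{MN}\Big]_{(r)},
\end{align}
so it suffices to show that every entry of $\qg\triangleq\qT_{pq}^{(i)}\qT_{pq'}^{(j)^\dag}\boldsymbol{1}_{MN}$ has modulus one. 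Substituting the definition of $\qT_{pq}^{(i)}$ and using the unitarity of $\qF_N$ (as in the proof of Lemma~\ref{lemma:Tqi}), the interior factor $(\qF_N^\dag\otimes\qI_M)(\qF_N\otimes\qI_M)=\qI_{MN}$ collapses, leaving
\begin{align}
\qT_{pq}^{(i)}\qT_{pq'}^{(j)^\dag}=(\qF_N\otimes\qI_M)\boldsymbol{\Pi}^{\ell_{pq,i}}\boldsymbol{\Delta}^{s}\boldsymbol{\Pi}^{-\ell_{pq',j}}(\qF_N^\dag\otimes\qI_M),
\end{align}
where $s\triangleq(k_{pq,i}+\kappa_{pq,i})-(k_{pq',j}+\kappa_{pq',j})$ and $\boldsymbol{\Delta}$ has unit-modulus diagonal.

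The key reduction I would use is that the all-ones vector factors as $\boldsymbol{1}_{MN}=\boldsymbol{1}_N\otimes\boldsymbol{1}_M$ and that $\qF_N^\dag\boldsymbol{1}_N=\sqrt{N}\,\qe_0$, the first standard basis vector scaled by $\sqrt{N}$. Hence $(\qF_N^\dag\otimes\qI_M)\boldsymbol{1}_{MN}=\sqrt{N}\,(\qe_0\otimes\boldsymbol{1}_M)$, a vector supported only on the Doppler block $n=0$, with all its nonzero entries equal to $\sqrt{N}$. I would then push this sparse vector through the remaining operators in order: the cyclic shifts $\boldsymbol{\Pi}^{-\ell_{pq',j}}$ and $\boldsymbol{\Pi}^{\ell_{pq,i}}$ only relocate its support (modulo $MN$), and $\boldsymbol{\Delta}^{s}$ only attaches unit-modulus phases, so after these three steps every surviving entry still has modulus exactly $\sqrt{N}$.

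It remains to apply the final $(\qF_N\otimes\qI_M)$ and read off magnitudes. Writing a flat index as $r=nM+m$ with $n\in\mathbb{N}[0,N-1]$ and $m\in\mathbb{N}[0,M-1]$, the $(n,m)$ entry of $\qg$ equals $\sum_{n'}[\qF_N]_{(n,n')}$ times the $(n',m)$ entry of the shifted/modulated vector. The crux is a short case analysis on $m+(\ell_{pq',j}-\ell_{pq,i})$ showing that, for each fixed $(n,m)$, exactly one Doppler index $n'$ carries a nonzero entry, namely the single block into which the two shifts map that coordinate modulo $MN$. The sum over $n'$ therefore has a unique surviving term, of magnitude $|[\qF_N]_{(n,n')}|\cdot\sqrt{N}=(1/\sqrt{N})\cdot\sqrt{N}=1$, which gives $|g_{(n,m)}|^2=1$ and hence the claim.

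The main obstacle is precisely this index bookkeeping. Because $\ell_{pq,i},\ell_{pq',j}\in\mathbb{N}[0,M-1]$ need not be multiples of $M$, the cyclic shifts can straddle block boundaries, so establishing that \emph{exactly one} $n'$ survives (rather than several indices contributing a nontrivial sum that might not collapse to unit modulus) requires separately treating $m+(\ell_{pq',j}-\ell_{pq,i})$ negative, in $[0,M-1]$, and in $[M,2M-2]$, and carefully reducing the shifted flat index modulo $MN$ in each case.
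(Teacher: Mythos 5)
Your argument is correct and is, in essence, the paper's own proof run in the opposite order: the paper first writes out the row of $\qT_{pq}^{(i)}\qT_{pq'}^{(j)^\dag}$ explicitly (its $N$ nonzero entries being $\qa[\qF_N^\dag]_{(:,n)}$, $n=1,\dots,N$) and only then collapses the row sum via the DFT orthogonality identity $\sum_{j=0}^{N-1}e^{j2\pi jk/N}=N\delta(k)$, which is exactly your observation that $\qF_N^\dag\boldsymbol{1}_N=\sqrt{N}\,\qe_0$ applied at the last step instead of the first. Your version is marginally cleaner in that it treats an arbitrary row index $r$ and arbitrary delay indices uniformly---your ``exactly one surviving $n'$'' step is precisely the one-nonzero-entry-per-row-and-column structure of $\boldsymbol{\Pi}^{\ell_{pq,i}}\boldsymbol{\Delta}^{s}\boldsymbol{\Pi}^{-\ell_{pq',j}}$ already established in the proof of Lemma~\ref{lemma:TqTqp} (a cyclically contiguous block of $M$ indices meets each residue class mod $M$ exactly once)---whereas the paper restricts to the first row with $\ell_{pq,i}=\ell_{pq',j}$ and dispatches the general case with a circular-shift remark.
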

\begin{proof}
See Appendix~\ref{lemma:TqTqp:abs:proof}.
\end{proof}

Now, by utilizing Lemmas~\ref{lemma:Tqi}-\ref{lemma:TqTqp:abs}, the following two important results are presented for the superimposed pilot-based channel estimation method and transmit power constraint at the APs.

\begin{lemma}\label{Lemma:SP}
For the SP-based channel estimation in~\eqref{eq:MMSE:SP}, we have
\begin{subequations}
\begin{align}
&\mathbb{E}\{\tilde{\qw}_p\} = \boldsymbol{0}_{MN\times 1},\\
&\qC_{\tilde{\qw}_p}\!\!=\!
 \left(
 \!\sum_{q=1}^{K_u}{\eta_q}
 \!\Pdtq
\!\sum_{i=1}^{L_{pq}}\!
\beta_{pq,i} \!+\!
\!\sum_{q^\prime\neq q}^{K_u}\!{\eta_{q^\prime}}
\Pplqr
\!\sum_{i=1}^{L_{pq'}}\!
\beta_{pq',i} \!+\! \Sn\!\right)\!\qI_{MN}.~\label{eq:cov:SP}
\end{align}
\end{subequations}

Accordingly, the variance of the MMSE estimate of the channel vector $\qh_{pq}$ entries via the SP-based channel estimation can be expressed as~\eqref{eq:SP:MMSEche} at the top of the next page.
\bigformulatop{31}{
\begin{align}~\label{eq:SP:MMSEche}
\gamma_{pq,i}^{\SP} &\triangleq \mathbb{E}\big\{|\hat{h}_{pq,i}^{\SP}|^2\big\}
=
\frac{{\roplq}{\eta_{q}}\beta_{pq,i}^2}{ \roplq{\eta_{q}}\beta_{pq,i}\!+\!
	\sum_{q'\neq q}^{K_u}
	\!\sum_{i=1}^{\Lmkp}
	{\eta_{q^\prime}}
	\roplqr
	\beta_{pq',i}
	\!+\! \sum_{q'=1}^{K_u}
	\!\sum_{i=1}^{\Lmkp}
	{\eta_{q^\prime}}\rodtqr\beta_{pq',i} \!+\!1}.
\end{align}
}
\end{lemma}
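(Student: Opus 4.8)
The plan is to decompose the noise-plus-interference vector $\tilde{\qw}_p$ into its three building blocks — the pilot interference $\sum_{q'\neq q}\boldsymbol{\Xi}_{\psi q'}\qh_{pq'}$, the data interference $\sum_{q=1}^{K_u}\boldsymbol{\Xi}_{dq}\qh_{pq}$, and the thermal noise $\qw_p$ — and to settle the three claims in order. The first thing I would verify is that these three blocks are mutually uncorrelated: the data vectors $\qx_{dq}$ and the channel gains $h_{pq,i}$ are zero-mean and mutually independent (and independent of the interfering pilots $\boldsymbol{\psi}_{q'}$ and of $\qw_p$), so every cross-covariance between two distinct blocks factors through a zero-mean term and vanishes. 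This collapses $\qC_{\tilde{\qw}_p}$ into the sum of the three individual covariances, which I then compute separately.

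For the mean, I would note that $\mathbb{E}\{\boldsymbol{\Xi}_{dq}\qh_{pq}\}=\sqrt{\eta_q}\sum_i\mathbb{E}\{h_{pq,i}\}\,\Tpqi\,\mathbb{E}\{\qx_{dq}\}=\boldsymbol{0}$ and $\mathbb{E}\{\boldsymbol{\Xi}_{\psi q'}\qh_{pq'}\}=\sqrt{\eta_{q'}}\sum_j\mathbb{E}\{h_{pq',j}\}\,\qT_{pq'}^{(j)}\,\boldsymbol{\psi}_{q'}=\boldsymbol{0}$, both because the channel gains are zero-mean, and $\mathbb{E}\{\qw_p\}=\boldsymbol{0}$; this gives the first identity. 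The heart of the covariance computation is the collapse afforded by Lemma~\ref{lemma:Tqi}: writing $\boldsymbol{\Xi}_{dq}\qh_{pq}=\sqrt{\eta_q}\sum_i h_{pq,i}\Tpqi\qx_{dq}$ and taking expectations, the independence of the gains across paths annihilates every cross-path term $(i\neq i')$, while each surviving term contributes $\eta_q\betpqi\,\Tpqi\,\mathbb{E}\{\qx_{dq}\qx_{dq}^\dag\}\,\Tpqidg=\eta_q\Pdtq\betpqi\,\Tpqi\Tpqidg=\eta_q\Pdtq\betpqi\qI_{MN}$, since $\mathbb{E}\{\qx_{dq}\qx_{dq}^\dag\}=\Pdtq\qI_{MN}$ and $\Tpqi\Tpqidg=\qI_{MN}$. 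Summing over users and paths reproduces the first bracketed term of~\eqref{eq:cov:SP}; the identical argument applied to the pilot interference, using $\mathbb{E}\{\boldsymbol{\psi}_{q'}\boldsymbol{\psi}_{q'}^\dag\}=\Pplqr\qI_{MN}$ (the interfering pilots being unknown to user $q$ and hence averaged as random i.i.d.\ sequences), yields the second term, and $\mathbb{E}\{\qw_p\qw_p^\dag\}=\Sn\qI_{MN}$ completes~\eqref{eq:cov:SP}.

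With $\qC_{\tilde{\qw}_p}=\sigma^2\qI_{MN}$ a scalar multiple of the identity, where $\sigma^2$ is the scalar in~\eqref{eq:cov:SP}, I would feed this into the MMSE expression~\eqref{eq:MMSE:SP} and argue that the estimator decouples across the paths of user $q$. Concretely, $\boldsymbol{\Xi}_{\psi q}^\dag\qC_{\tilde{\qw}_p}^{-1}\boldsymbol{\Xi}_{\psi q}=(\eta_q/\sigma^2)\,\boldsymbol{\Xi}_{\psi q}^\dag\boldsymbol{\Xi}_{\psi q}$, whose $(i,j)$ entry is $(\eta_q/\sigma^2)\,\boldsymbol{\psi}_q^\dag\Tpqidg\Tpqj\boldsymbol{\psi}_q$; the diagonal $(i=j)$ equals $(\eta_q/\sigma^2)\|\boldsymbol{\psi}_q\|^2$ by Lemma~\ref{lemma:Tqi}, whereas the off-diagonal entries average to $(\eta_q\Pplq/\sigma^2)\Trace(\Tpqidg\Tpqj)=(\eta_q\Pplq/\sigma^2)\sum_{r}[\Tpqj\Tpqidg]_{(r,r)}=0$ for paths with distinct delay taps, by Lemma~\ref{lemma:TqTqp}. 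Since $\qC_{\qh_{pq}}$ is already diagonal, the matrix to be inverted in~\eqref{eq:MMSE:SP} is diagonal, so the vector MMSE reduces to a bank of independent scalar MMSE problems, one per path. Applying the standard scalar identity $\mathbb{E}\{|\hat h|^2\}=\beta^2 g/(1+\beta g)$ with effective pilot SNR $g=\eta_q\|\boldsymbol{\psi}_q\|^2/\sigma^2$, substituting the scalar from~\eqref{eq:cov:SP}, and rewriting everything through the normalized SNRs $\roplq$, $\rodtq$, $\roplqr$, $\rodtqr$ delivers~\eqref{eq:SP:MMSEche}.

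The main obstacle I anticipate is the per-path decoupling in the final step: it rests on the off-diagonal entries of $\boldsymbol{\Xi}_{\psi q}^\dag\boldsymbol{\Xi}_{\psi q}$ being negligible, which I would justify through Lemma~\ref{lemma:TqTqp} under the physically reasonable premise that the resolvable paths of a given user occupy distinct delay bins, so that the relevant trace vanishes. The secondary, purely bookkeeping difficulty is tracking the pilot-power normalization carefully — in particular reconciling the coherent desired-pilot energy $\|\boldsymbol{\psi}_q\|^2$ against the incoherent per-path interference scalings $\Pplqr$ and $\Pdtq$ entering $\sigma^2$ — so that the numerator and denominator constants of~\eqref{eq:SP:MMSEche} come out exactly as stated.
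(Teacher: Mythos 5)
Your proposal follows essentially the same route as the paper's proof: the paper likewise shows $\mathbb{E}\{\tilde{\qw}_p\}=\boldsymbol{0}$ from the zero-mean channel gains, expands $\qC_{\tilde{\qw}_p}$ into pilot, data and noise blocks whose cross-covariances vanish because the data and pilot symbols are zero-mean i.i.d., and collapses each surviving block to $\eta_{q}P\sum_i\beta_{pq,i}\qI_{MN}$ via the independence of the path gains together with Lemma~\ref{lemma:Tqi}. The only part the paper leaves implicit is the passage from~\eqref{eq:cov:SP} to~\eqref{eq:SP:MMSEche}, which you fill in with a per-path decoupling argument; the one caveat there is that the off-diagonal entries of $\boldsymbol{\Xi}_{\psi q}^\dag\boldsymbol{\Xi}_{\psi q}$ are quadratic forms $\eta_q\boldsymbol{\psi}_q^\dag\qT_{pq}^{(i)^\dag}\qT_{pq}^{(j)}\boldsymbol{\psi}_q$ whose \emph{expectation} over the random pilot equals the trace that Lemma~\ref{lemma:TqTqp} annihilates, so the diagonalization (and hence the scalar-MMSE reduction) holds in the mean or asymptotically in $MN$ rather than for every pilot realization, and the coherent pilot energy $\|\boldsymbol{\psi}_q\|^2$ must be reconciled with the per-symbol SNR normalization exactly as you flag.
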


\begin{proof}
	See Appendix~\ref{Lemma:SP:Proof}.
\end{proof}

\begin{lemma}\label{lemma:power}
	The power constraint in~\eqref{eq:AP:powconst}, is expressed as
	\setcounter{equation}{32}
	\begin{align}~\label{eq:dlpowercont}
	\sum_{q=1}^{K_u}\sum_{i=1}^{\Lmk}
	\eta_{pq}\gamma_{pq,i}^{\Pil}\leq 1,\hspace{3em}\Pil\in\{\EP,\SP\}
	\end{align}
\end{lemma}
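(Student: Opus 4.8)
The plan is to evaluate the left-hand side of the power budget~\eqref{eq:AP:powconst} in closed form and read off~\eqref{eq:dlpowercont}. First I would insert the precoded transmit vector~\eqref{eq:xqd} into $\mathbb{E}\{\|\qx_{\dl,p}\|^2\}$ and expand the squared norm as a double sum over the user indices $q$ and $q'$. Because the payload vectors $\qs_q$ are zero-mean, independent across users, and independent of the locally computed estimates $\hHpq$, every cross term with $q'\neq q$ has zero mean and drops, leaving
\begin{align}
\mathbb{E}\{\|\qx_{\dl,p}\|^2\}=\rho_d\sum_{q=1}^{K_u}\eta_{pq}\,\mathbb{E}\big\{\qs_q^\dag\hHpq\hHpqdg\qs_q\big\}.
\end{align}
Invoking the per-stream normalization $\mathbb{E}\{\qs_q\qs_q^\dag\}=\frac{1}{MN}\qI_{\nm}$ and the cyclic property of the trace turns each summand into $\frac{1}{MN}\mathbb{E}\{\Trace(\hHpq\hHpqdg)\}$, so the whole problem reduces to computing $\mathbb{E}\{\Trace(\hHpq\hHpqdg)\}$.

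Next I would use the fact that the estimate inherits the path expansion $\hHpq=\sum_{i=1}^{\Lmk}\hat{h}_{pq,i}\Tpqi$ of the true channel~\eqref{eq:Hpq}, since only the gains are estimated while the delay--Doppler matrices $\Tpqi$ are treated as known. This gives
\begin{align}
\Trace(\hHpq\hHpqdg)=\sum_{i=1}^{\Lmk}\sum_{j=1}^{\Lmk}\hat{h}_{pq,i}\,\hat{h}_{pq,j}^{*}\,\Trace\big(\Tpqi\Tpqjdg\big).
\end{align}
Here the two preliminary results do the work: Lemma~\ref{lemma:Tqi} gives $\Trace(\Tpqi\Tpqidg)=\Trace(\qI_{\nm})=MN$ on the diagonal $i=j$, whereas Lemma~\ref{lemma:TqTqp} forces every diagonal entry of $\Tpqi\Tpqjdg$ -- and hence its trace -- to vanish for $i\neq j$, because distinct paths of the same user occupy distinct delay taps. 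Thus $\Trace(\Tpqi\Tpqjdg)=MN\,\delta_{ij}$, and taking expectations leaves only the $i=j$ terms, so that $\mathbb{E}\{\Trace(\hHpq\hHpqdg)\}=MN\sum_{i=1}^{\Lmk}\mathbb{E}\{|\hat{h}_{pq,i}|^2\}$.

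Finally I would identify $\mathbb{E}\{|\hat{h}_{pq,i}^{\Pil}|^2\}=\gamma_{pq,i}^{\Pil}$ for both estimators from~\eqref{eq:var:MMSE:EP} and~\eqref{eq:SP:MMSEche}, so that the same computation covers $\Pil\in\{\EP,\SP\}$ and yields $\mathbb{E}\{\|\qx_{\dl,p}\|^2\}=\rho_d\sum_{q=1}^{K_u}\sum_{i=1}^{\Lmk}\eta_{pq}\gamma_{pq,i}^{\Pil}$. Enforcing~\eqref{eq:AP:powconst} and dividing by $\rho_d$ then gives~\eqref{eq:dlpowercont}. The step I expect to be the main obstacle is the off-diagonal cancellation: it relies on applying Lemma~\ref{lemma:TqTqp} to same-user pairs $i\neq j$, which is only legitimate when distinct resolvable paths lie on distinct delay taps. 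I would therefore state this distinct-delay assumption explicitly; this route is preferable to trying to argue the cross terms away through $\mathbb{E}\{\hat{h}_{pq,i}\hat{h}_{pq,j}^{*}\}=0$, since the MMSE gain estimates need not be uncorrelated across paths.
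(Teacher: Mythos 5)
Your overall skeleton matches the paper's: expand $\mathbb{E}\{\|\qx_{\dl,p}\|^2\}$ as a trace, drop the cross-user terms by independence of the data vectors, expand $\hHpq=\sum_{i}\hmkihat\Tpqi$, and use Lemma~\ref{lemma:Tqi} to reduce each diagonal term to a multiple of $\gampqi$ (your normalization $\mathbb{E}\{\qs_q\qs_q^\dag\}=\tfrac{1}{MN}\qI_{\nm}$ is in fact the one that makes the stated bound come out cleanly; the paper writes $s_{qr}\sim\mathcal{CN}(0,1)$ and silently absorbs the resulting factor $MN$). The genuine gap is in the off-diagonal step. You eliminate the $i\neq j$ terms through $\Trace\big(\Tpqi\Tpqjdg\big)=0$ via Lemma~\ref{lemma:TqTqp}, but that lemma applies only when $(\Lmki-\Lmkj)_M\neq 0$, and the paper's channel model explicitly admits taps that share a delay index and differ only in Doppler --- this is precisely why the terms $\chpqij=\big|[\Tpqi\Tpqjdg]_{(r,r)}\big|^2$ survive in Theorem~\ref{Theor:DL} rather than being set to zero. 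For such pairs your argument leaves an uncontrolled residual $\mathbb{E}\{\hmkihat\hmkjhatc\}\Trace\big(\Tpqi\Tpqjdg\big)$, and the distinct-delay hypothesis you propose to append is not part of the lemma as stated.

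The paper closes this case the other way around: it invokes $\mathbb{E}\{\hmkihatc\hmkjhat\}=0$ for $i\neq j$, i.e., the per-path gain estimates are uncorrelated, and then needs only Lemma~\ref{lemma:Tqi} for the surviving diagonal terms. You explicitly reject that route, but this uncorrelatedness is a standing modeling assumption of the paper, used identically in step (b) of \eqref{eq:DS} in the proof of Theorem~\ref{Theor:DL} and implicit in treating the collection $\{\gampqi\}$ as a complete second-order description of $\hHpq$; for the EP scheme the per-path estimates are scalar functions of received samples at disjoint DD grid points, so the assumption is at least internally consistent there. Without either accepting that assumption or restricting the channel so that all resolvable paths of a given AP--user pair occupy distinct delay taps, your argument does not establish \eqref{eq:dlpowercont} in the generality in which the lemma is stated.
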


\begin{proof}
	See Appendix~\ref{lemma:power:proof}.
\end{proof}
\subsection{Downlink SE Analysis}~\label{Sec:DLanalysis}
We assume that each user has knowledge of the channel statistics but not of the channel realizations~\cite{Hien:cellfree}. The signal received at the $q$th user~\eqref{eq:zqd} can be re-arranged to be suitable for detection of the $r$th entry of the received signal in DD domain with only statistical channel knowledge at users, according to~\eqref{eq:zqr} at the top of the next page,
\bigformulatop{33}{
\begin{align}~\label{eq:zqr}
z_{\dl,qr}
&=
\sqrt{\rho_d}
     \underbrace{
	        \mathbb{E}
	        \Big\{
				\sum_{p=1}^{M_a}
					\eta_{pq}^{1/2}[{\qH}_{pq}]_{(r,:)}
					[\hat{\qH}_{pq}^\dag]_{(:,r)}\!\Big\}}_{\text{desired signal}\triangleq\mathbb{DS}_{q,\dl}}s_{qr}
   \!+\!
\sqrt{\rho_d}
    \underbrace{
	           \Bigg(
                    \sum_{p=1}^{M_a}\!
	                  \etpq^{1/2}
	                   [{\qH}_{pq}]_{(r,:)}[\hat{\qH}_{pq}^\dag]_{(:,r)}
                        \!-\!\mathbb{E}\Big\{
                                            \sum_{p=1}^{M_a}\etpq^{1/2}
	                                       [{\qH}_{pq}]_{(r,:)\!}[\hat{\qH}_{pq}^\dag]_{(:,r)}
                                        \!\Big\}
              \Bigg)}_{\text{precoding gain uncertainty}\triangleq \mathbb{BU}_{q,\dl}}s_{qr}\nonumber\\
&\hspace{0em}
\! +\!
    \underbrace{
                \sqrt{\rho_d}
                \sum_{p=1}^{M_a}
	             \sum_{r'\neq r}^{MN}
                  \etpq^{1/2}
	               [{\qH}_{pq}]_{(r,:)} [\hat{\qH}_{pq}^\dag]_{(:,r')}s_{qr'}}_{\text{inter-symbol interference}\triangleq \mathbb{I}_{q1,\dl} }
+
\underbrace{
            \sqrt{\rho_d}
            \sum_{q'\neq q}^{K_u}
            \sum_{p=1}^{M_a}
	       \sum_{\substack{r'=1 }}^{MN}
            \eta_{pq'}^{1/2}
	       [{\qH}_{pq}]_{(r,:)} [\hat{\qH}_{pq'}^\dag]_{(:,r')}
            s_{q'r'}}_{\text{inter-user interference}\triangleq\mathbb{I}_{q2,\dl} }
 +
\underbrace{w_{\dl,qr}}_\text{AWGN},
\end{align}
}
where $s_{qr}=s_{q}[k,\ell]$, with $r=kM+\ell$ and $ k\in \mathbb{N}[0,N-1], \ell\in \mathbb{N}[0,M-1]$ denote the i.i.d. DD domain information symbols to be transmitted to the $q$th user, which satisfy $s_{qr}\sim \mathcal{CN}(0,1)$.

The sum of the second, third, forth and last term in~\eqref{eq:zqr} are treated as ``effective noise". Since $s_{qr}$ is independent of $\mathbb{DS}_q$ and $ \mathbb{BU}_q$, it can be readily checked  that the first and the second term of~\eqref{eq:zqr} are uncorrelated~\cite{Hien:cellfree}. Similarly it can be concluded that the third, fourth, and the noise terms of~\eqref{eq:zqr} are uncorrelated with the first term of~\eqref{eq:zqr}. Therefore, the effective noise and desired signal are uncorrelated. Hence, by using the fact that uncorrelated Gaussian noise represents the worst case, we obtain the following achievable downlink SE result:

\begin{theorem}~\label{Theor:DL}
An achievable downlink SE of the transmission from the APs to the $q$th user for any finite $M_a$ and $K_u$, is given by~\eqref{eq:Rdq:final} at the top of the next page,
\bigformulatop{34}{
\begin{align}~\label{eq:Rdq:final}
&R_{\dl,q}
=\frac{\taudl}{MN}
\sum_{r=1}^{MN}
  \log_2
     \Bigg(1+
      \frac
{\rho_{d}\left(\sum_{p=1}^{M_a}	
	\sum_{i=1}^{\Lmk} \etpq^{1/2}\gampqi \right)^2}
{  \rho_{d}
\Big(\!
\sum_{p=1}^{M_a}
\etpq
\sum_{i=1}^{\Lmk}
\beta_{pq,i}
\Big(\sum_{j=1}^{\Lmk}
\gampqj(\chpqij\! +\!\kapqij)
\!+\!\sum_{q'\neq q}^{K_u}
\sum_{j=1}^{\Lmkp}\!
\frac{\etpqr}{\etpq}\gampqrj
\Big)\Big)
\!+\!
1}\Bigg),
\end{align}
}
where $\taudl=\big({1 - \frac {N_\ul}{N_T}}\big)$,  $\chpqij=\big|[\qT_{pq}^{(i)}\qT_{pq}^{(j)^\dag}]_{(r,r)}\big|^2$, while $\kapqij=\Big|\sum_{r'\neq r}^{MN}\big[\qT_{pq}^{(i)}\qT_{pq}^{(j)^\dag}\big]_{(r,r')}\Big|^2$.
We note that $\taudl$ reflects the fact that, for OTFS frame of length $MN_T$ symbols, we spend $MN_\ul$ symbols for uplink transmission.
\end{theorem}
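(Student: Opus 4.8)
The plan is to apply the standard use-and-forget (worst-case uncorrelated Gaussian) bounding technique of~\cite{Hien:cellfree} to the per-symbol decomposition in~\eqref{eq:zqr}. Since the surrounding text already establishes that the desired-signal term $\mathbb{DS}_{q,\dl}$ is uncorrelated with each of $\mathbb{BU}_{q,\dl}$, $\mathbb{I}_{q1,\dl}$, $\mathbb{I}_{q2,\dl}$ and the AWGN, an achievable SE for the $r$th DD-symbol is $\log_2(1+\mathrm{SINR}_{q,r})$ with
\[
\mathrm{SINR}_{q,r}=\frac{\rho_d\,|\mathbb{DS}_{q,\dl}|^2}{\rho_d\big(\mathbb{E}\{|\mathbb{BU}_{q,\dl}|^2\}+\mathbb{E}\{|\mathbb{I}_{q1,\dl}|^2\}+\mathbb{E}\{|\mathbb{I}_{q2,\dl}|^2\}\big)+1},
\]
the unit term being the normalized AWGN variance. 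Averaging $\tfrac{1}{\nm}\sum_{r=1}^{\nm}$ over all DD-symbols and multiplying by the pre-log factor $\taudl$ (which discounts the $MN_{\ul}$ symbols spent on the uplink) then yields~\eqref{eq:Rdq:final}, so the task reduces to evaluating the four expectations above in closed form.

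First I would substitute $\Hpq=\sum_{i=1}^{\Lmk}\hmki\Tpqi$ and $\hHpq=\sum_{j=1}^{\Lmk}\hmkjhat\Tpqj$ into the effective scalar channel $[\Hpq]_{(r,:)}[\hHpqdg]_{(:,r)}=\sum_{i,j}\hmki\hmkjhatc[\Tpqi\Tpqjdg]_{(r,r)}$. Taking $\mathbb{E}\{\cdot\}$ and using that distinct paths carry independent zero-mean gains (so only $i=j$ survives), Lemma~\ref{lemma:Tqi} gives $[\Tpqi\Tpqidg]_{(r,r)}=1$ and the MMSE orthogonality property gives $\mathbb{E}\{\hmki\hmkihatc\}=\mathbb{E}\{|\hat{h}_{pq,i}^{\Pil}|^2\}=\gampqi$. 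Independence across APs makes the $M_a$ contributions add coherently, so $\mathbb{DS}_{q,\dl}=\sum_{p=1}^{M_a}\etpq^{1/2}\sum_{i=1}^{\Lmk}\gampqi$, whose square is the numerator of~\eqref{eq:Rdq:final}.

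The crux is the three second-moment terms, which all reduce to fourth-order moments of the jointly Gaussian pairs $(\hmki,\hat{h}_{pq,i}^{\Pil})$. For the beamforming-gain uncertainty I would use $\mathbb{E}\{|\mathbb{BU}_{q,\dl}|^2\}=\sum_p\etpq\,\mathrm{Var}\big([\Hpq]_{(r,:)}[\hHpqdg]_{(:,r)}\big)$ (the cross-AP means cancel against $|\mathbb{DS}_{q,\dl}|^2$), and for the ISI term $\mathbb{E}\{|\mathbb{I}_{q1,\dl}|^2\}=\sum_{r'\neq r}\sum_p\etpq\,\mathbb{E}\{|[\Hpq]_{(r,:)}[\hHpqdg]_{(:,r')}|^2\}$, noting that the i.i.d. symbols $s_{qr'}$ combine incoherently and the off-diagonal effective channels are zero-mean so the APs again add incoherently. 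Expanding the moduli via Isserlis' theorem together with the decomposition $\hmki=\hat{h}_{pq,i}^{\Pil}+\tilde{h}_{pq,i}$ (estimate $\perp$ error) produces products $\betpqi\gampqj$ weighted by the overlaps $[\Tpqi\Tpqjdg]_{(r,r')}$; Lemma~\ref{lemma:TqTqp} annihilates the diagonal cross-terms between paths of unequal delay. Collecting the $r'=r$ contribution into $\chpqij$ and the $r'\neq r$ contribution into $\kapqij$ (evaluated through Lemmas~\ref{lemma:Tqi}--\ref{lemma:TqTqp:abs}) gives the self-interference $\sum_p\etpq\sum_i\betpqi\sum_j\gampqj(\chpqij+\kapqij)$. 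For the inter-user term I would exploit that $\Hpq$ is independent of $\hat{\qH}_{pq'}$ for $q'\neq q$, so the fourth-order moment factorizes into $\betpqi\gampqrj$ and only second-order overlaps remain; summing $|[\Tpqi\qT_{pq'}^{(j)^\dag}]_{(r,r')}|^2$ over all $\nm$ columns equals one since $\Tpqi\qT_{pq'}^{(j)^\dag}$ is a product of unitary matrices (Lemma~\ref{lemma:Tqi}, with Lemma~\ref{lemma:TqTqp:abs} supplying the companion coherent-sum identity). This collapses the inter-user leakage to $\sum_{q'\neq q}\etpqr\sum_j\gampqrj\sum_i\betpqi$, i.e.\ the $\tfrac{\etpqr}{\etpq}\gampqrj$ term in~\eqref{eq:Rdq:final}.

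The main obstacle is the fourth-order-moment bookkeeping: I must apply MMSE orthogonality consistently, keep the coherent (mean) contributions that build $\mathbb{DS}_{q,\dl}$ strictly separate from the incoherent (variance) contributions that form the interference, and track the $\qT$-overlap sums carefully so that they condense into exactly $\chpqij$, $\kapqij$, and the unit-norm collapse of the inter-user term. A secondary delicate point is justifying, for the SP estimator of Lemma~\ref{Lemma:SP}, that $\Hpq$ may be treated as independent of $\hat{\qH}_{pq'}$ across the $\nm$ orthogonal sequences. Once these expectations are in hand, substituting the normalized noise power and the power-control coefficients and collecting terms yields~\eqref{eq:Rdq:final}, with the pre-log $\taudl$ and the $\tfrac{1}{\nm}$ average following from the frame structure.
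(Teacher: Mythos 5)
Your overall route is exactly the one the paper takes in Appendix~\ref{Theor:DL:Proof}: the same worst-case uncorrelated-Gaussian bound applied to the decomposition in~\eqref{eq:zqr}, and the same term-by-term evaluation of $\mathbb{DS}_{q,\dl}$, $\mathbb{BU}_{q,\dl}$, $\mathbb{I}_{q1,\dl}$ and $\mathbb{I}_{q2,\dl}$ via the substitution $\hmki=\epspqi+\hmkihat$, MMSE orthogonality, independence across APs and paths, and Lemmas~\ref{lemma:Tqi}--\ref{lemma:TqTqp:abs}. Your desired-signal, precoding-uncertainty and inter-user computations land where the paper's do; for the inter-user term your unitarity argument (incoherent column sum of $\qT_{pq}^{(i)}\qT_{pq'}^{(j)^\dag}$ equals one) gives the same answer as the paper's invocation of Lemma~\ref{lemma:TqTqp:abs}, because over the full set of $MN$ columns the coherent and incoherent sums happen to coincide.

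The genuine gap is in the inter-symbol-interference term. You evaluate
\begin{equation*}
\mathbb{E}\big\{|\mathbb{I}_{q1,\dl}|^2\big\}=\sum_{r'\neq r}^{MN}\sum_{p=1}^{M_a}\etpq\,\mathbb{E}\Big\{\big|[\Hpq]_{(r,:)}[\hHpqdg]_{(:,r')}\big|^2\Big\},
\end{equation*}
i.e., you combine the $r'\neq r$ contributions incoherently (which is indeed what retaining the i.i.d.\ symbols $s_{qr'}$ would dictate). Carried through, this yields the factor $\sum_{r'\neq r}\big|[\Tpqi\Tpqjdg]_{(r,r')}\big|^2=1-\chpqij$ (unit row norm of the unitary matrix $\Tpqi\Tpqjdg$ minus its diagonal entry) in place of $\kapqij=\big|\sum_{r'\neq r}[\Tpqi\Tpqjdg]_{(r,r')}\big|^2$. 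These are not equal in general: by the proof of Lemma~\ref{lemma:TqTqp}, each row of $\Tpqi\Tpqjdg$ has $N$ nonzero entries, so the squared modulus of the sum and the sum of squared moduli differ by cross terms that do not vanish (they coincide only in the special cases $i=j$ and $(\Lmki-\Lmkj)_M\neq 0$ treated in Remark 2). The paper's proof instead keeps the entire sum $\sum_{r'\neq r}$ inside the modulus before taking the expectation, and that is precisely how $\kapqij$ arises in~\eqref{eq:Iq1}. So your bookkeeping, as described, does not condense into the $\kapqij$ of~\eqref{eq:Rdq:final}; to reproduce the theorem as stated you must treat the $r'$-sum coherently as the paper does. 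Everything else in your plan --- the fourth-moment evaluation behind $\mathbb{E}\{|\mathbb{BU}_{q,\dl}|^2\}$, the cancellation of the means, and the collapse of the inter-user leakage --- matches the paper's proof.
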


\begin{proof}
See Appendix~\ref{Theor:DL:Proof}.
\end{proof}
\begin{remark}
The achievable downlink expression in Theorem~\ref{Theor:DL}, depends on the delay and Doppler indices of the paths between the APs and the $q$th user through the $\chpqij$ and $\kapqij$. However, it is independent of the delay and Doppler indices between the APs and other interfering users (i.e., the inter-user interference  depends only on the power control coefficient and channel gain estimation). Moreover, the impact of the so called pilot contamination is captured by the channel estimation parameter $\gamma_{pq,i}^\Pil$.
\end{remark}

\begin{remark}
If the paths between the $p$th AP and $q$th user experience different delay indices, i.e., $\Lmki \neq \Lmkj$, by using Lemma~\ref{lemma:TqTqp} and Lemma~\ref{lemma:TqTqp:abs}, it can be readily checked that $\chpqij=0$ and $\kapqij=1$. Therefore, the precoding gain uncertainty in~\eqref{eq:vard:final} and the inter-symbol interference term in~\eqref{eq:Iq1} are  simplified to $\mathbb{E}\Big\{\big|\mathbb{BU}_{q,\dl}\big|^2\Big\}
= \sum_{p=1}^{M_a}
\sum_{i=1}^{\Lmk}
\betpqi\gampqi,$
and
$\mathbb{E}\Big\{\big|\mathbb{I}_{q1,\dl}\big|^2\Big\}
= \sum_{p=1}^{M_a}
\sum_{i=1}^{\Lmk}
\sum_{j\neq i}^{\Lmk}
\betpqi
\gampqj$, respectively. Accordingly, the achievable downlink SE in~\eqref{eq:Rdq:final} reduces to~\eqref{eq:Rdq:final:asymp} at the top of the next page.
\bigformulatop{35}{
\begin{align}~\label{eq:Rdq:final:asymp}
R_{\dl,q} &\!=\!
  \taudl\log_2
     \left(1\!+\!\frac{\rho_{d}\left(\sum_{p=1}^{M_a}	\sum_{i=1}^{\Lmk} \etpq^{1/2}\gampqi \right)^2}
     {\rho_{d}
	\!\sum_{p=1}^{M_a}
	\etpq
	\!\sum_{i=1}^{\Lmk}
	\betpqi
   \Big(\!
	\sum_{j=1}^{\Lmk}\!
	\gampqj \!+\!
    \sum_{q'\neq q}^{K_u}
	\sum_{j=1}^{\Lmkp}\!
	\frac{\etpqr}{\etpq}\gampqrj
    \!\Big)
	\!+\!
	1}\right).
\end{align}
}

Now, by inspecting~\eqref{eq:Rdq:final:asymp}, we see that with perfect knowledge of the delay and Doppler indices and using maximum-ratio beamforming, the impact of delay shift and Doppler spread can be efficiently mitigated in the special case of $\Lmki\neq\ell_{pq,j}$.
\end{remark}

In order to provide further insights into the system performance, we now assume that the channel coefficients between the $p$th AP and $q$th user are generated according to the distribution $\mathcal{CN}(0,1/L_{pq})$~\cite{Dataided:WCL:2021,Viterbo:WCL:2019,Mohammed:TVT:2021}. Accordingly, we have $\gamma_{pq}^{\Pil}=\gampqi$, $\forall i$. Moreover, assume that all APs transmit with full power, and at
the $p$th AP, the power control coefficients $\etpq$, $q=1,\ldots,K_u$, are the same, i.e.,
$\etpq = \big(\sum_{q'=1}^{K_u}\sum_{i=1}^{L_{pq}} \gampqri\big)^{-1} = \big(\sum_{q'=1}^{K_u} L_{pq'}\gamma_{pq'}^\Pil\big)^{-1}$, $\forall q=1,\ldots,K_u$. Then, the achievable downlink SE in~\eqref{eq:Rdq:final:asymp} is simplified to
\setcounter{equation}{36}
\begin{align}~\label{eq:Rdq:asym:2}
R_{\dl,q} &\!=\!
 \taudl \log_2
     \left(\!1\!+\frac{\rho_{d}\left(\sum_{p=1}^{M_a} \etpq^{1/2}L_{pq}\gamma_{pq}^\Pil \right)^2}
     {\rho_{d}\sum_{q=1}^{K_u}\sum_{p=1}^{M_a}
     	\etpq L_{pq}\gamma_{pq}^\Pil +1}\right).
\end{align}

To this end, if at any AP $p$, we have $L_{pq}=L_{q}$, i.e., the number of paths between the APs and $q$th user are the same,  then we have $\gamma_{q}^\Pil=\gamma_{pq}^\Pil$ and $\eta_{q}=\eta_{pq} $. Therefore, the approximation in~\eqref{eq:Rdq:asym:2} becomes
\begin{align}~\label{eq:Rdq:asym:3}
R_{\dl,q} &=
	\taudl\log_2 \left( 1 +
\frac{ \rho_{d}\eta_{p}\left( M_a L_{q}\gamma_{q}^\Pil \right)^2}
{  	K_u\rho_{d}M_a
     	\eta_{q}L_{q}\gamma_{q}^\Pil +1  }
\right).
\end{align}

\begin{corollary}~\label{Cor:DL}
If the transmit power of each AP in~\eqref{eq:Rdq:asym:3} is scaled down by a factor of $\frac{1}{M_a^2}$, i.e., $\rho_{d}=\frac{E_d}{M_a^2}$, where $E_d$ is fixed, we have
\begin{align}
R_{\dl,q}
\stackrel{M_a\rightarrow\infty}{\longrightarrow}&
\taudl
\log_2\left(1+E_d\eta_{q}L_{q}^2(\gamma_{q}^\Pil)^2\right)
\nonumber\\
&\hspace{-1.2em}
=\taudl\log_2\left(1+\frac{E_d}{K_u}L_{q}\gamma_{q}^\Pil\right).
\end{align}
\end{corollary}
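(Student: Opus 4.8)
The plan is to substitute the prescribed scaling $\rho_{d}=E_d/M_a^2$ directly into the closed-form rate \eqref{eq:Rdq:asym:3} and to track how the numerator and denominator of the effective SINR depend on $M_a$ as it grows without bound. The key structural observation is that, under maximum-ratio beamforming, the coherently combined desired-signal power in the numerator is built from a sum over the $M_a$ APs that is subsequently squared, so it scales as $M_a^2$; by contrast, the inter-user interference-plus-uncertainty term in the denominator is non-coherent and scales only linearly in $M_a$, while the normalized noise contributes the constant $1$. This $M_a^2$-versus-$M_a$ gap is exactly what a $1/M_a^2$ power reduction is designed to exploit, so the substitution should keep the desired signal at a finite level while annihilating the interference.

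Concretely, I would insert $\rho_{d}=E_d/M_a^2$ and simplify. The numerator $\rho_{d}\,\etpq (M_a L_q \gamma_q^\Pil)^2$ loses its $M_a$-dependence and collapses to the constant $E_d\,\etpq L_q^2 (\gamma_q^\Pil)^2$, whereas the interference term $K_u\rho_{d}M_a\,\etpq L_q\gamma_q^\Pil$ becomes $K_u E_d\,\etpq L_q\gamma_q^\Pil/M_a$, which vanishes as $M_a\to\infty$, leaving only the additive $1$ in the denominator. Taking the limit therefore yields $R_{\dl,q}\to \taudl\log_2\!\big(1+E_d\,\etpq L_q^2 (\gamma_q^\Pil)^2\big)$, which is the first claimed equality once $\etpq=\eta_q$ under the stated symmetry is invoked.

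For the second equality I would substitute the explicit power-control coefficient used to obtain \eqref{eq:Rdq:asym:3}. Under the symmetry assumptions $L_{pq}=L_q$ and $\gamma_{pq}^\Pil=\gamma_q^\Pil$, the full-power normalization gives $\eta_q=\big(\sum_{q'=1}^{K_u}L_{q'}\gamma_{q'}^\Pil\big)^{-1}=(K_u L_q\gamma_q^\Pil)^{-1}$, so that $\eta_q L_q\gamma_q^\Pil=1/K_u$. Feeding this into $E_d\,\eta_q L_q^2 (\gamma_q^\Pil)^2=E_d\,(\eta_q L_q\gamma_q^\Pil)\,L_q\gamma_q^\Pil$ reduces the argument of the logarithm to $1+(E_d/K_u)L_q\gamma_q^\Pil$, completing the derivation.

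I expect no genuine analytic obstacle; the only points requiring care are the bookkeeping of the $M_a$-scaling, so that the desired signal remains $\Theta(1)$ while the interference is $\Theta(1/M_a)$, and the consistency of the symmetric-user power-control identity $\eta_q=(K_u L_q\gamma_q^\Pil)^{-1}$ with the way \eqref{eq:Rdq:asym:3} was derived, since that identity is precisely what powers the final simplification.
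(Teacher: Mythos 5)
Your proposal is correct and follows exactly the route the paper intends: substitute $\rho_d=E_d/M_a^2$ into~\eqref{eq:Rdq:asym:3}, note that the coherent numerator is $\Theta(1)$ while the interference term decays as $1/M_a$, and then apply the symmetric power-control identity $\eta_q L_q\gamma_q^\Pil=1/K_u$ for the second equality. The only caveat, which the paper itself leaves implicit in stating the corollary, is that $\big(\sum_{q'}L_{q'}\gamma_{q'}^\Pil\big)^{-1}=(K_u L_q\gamma_q^\Pil)^{-1}$ additionally requires the users to be symmetric in the product $L_{q'}\gamma_{q'}^\Pil$, not merely that these quantities are independent of the AP index $p$.
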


Corollary~\ref{Cor:DL} implies that with imperfect CSI and a large $M_a$ and with the transmit power set to $\rho_{d}=\frac{E_d}{M_a^2}$, the achievable downlink SE of the  cell-free massive MIMO system with OTFS modulation
is equal to the performance of an interference-free single-input single-output (SISO) link with transmit power $\frac{E_d}{K_u}L_{q}\gamma_{q}$, without fast fading.

Finally, we present a lower bound of the achievable downlink SE  by using per-user-basis MMSE-SIC detector while treating co-user interference plus noise as uncorrelated Gaussian noise.  Assume that each user has access to the stochastic CSI $\bar{\qD}_{qq}=\mathbb{E}\{\qD_{qq}\}$ for data detection. Let $\qD_{qq'}=\sum_{p=1}^{M_a}\etpqr^{1/2} \qH_{pq}\hat{\qH}_{pq'}^{\dag}$. Therefore, by invoking~\eqref{eq:zqd} and applying MMSE-SIC detectors at the users,  the achievable downlink SE of the $q$th user can be obtained as
\begin{align}~\label{eq:DL:UPB}
{R}_{\dl,q}^{\mathtt{MMSE}}=
\frac{\taudl}{MN}
 { \log _{2} \det\left({  \qI_{MN} \!+\! \rho_d \bar{\qD}^{\dag}_{qq} \left ({\boldsymbol{\Psi }_{qq}}\right)^{-1} \bar{\qD}_{qq} }\right) },
\end{align}
where $\boldsymbol{\Psi}_{qq} =  \qI_{MN}+\rho_d \mathbb{E}\bigg\{\sum ^{K_u}_{q'\neq q}\qD_{qq'}\qD^{\dag}_{qq'}\bigg\}-\rho_d\bar{\qD}_{qq}\bar{\qD}^{\dag}_{qq}$.

With OTFS modulation, the computational complexity of the MMSE-SIC detector involves the inversion of the $MN\times MN$ effective channel gain matrix. This complexity is, however, $\mathcal{O}(M^4N^4)$, which is prohibitively high for practical implementation, especially when $M$ and/or $N$ is large. In case of LCD, since the information symbols are separately detected, the overall complexity of this detector is $\mathcal{O}(\nm\log(\nm))$~\cite{Raviteja:TVT:2021}.

\subsection{Uplink SE Analysis}
The CPU detects the received signal from the $q$th user, which is given by
\begin{align}\label{eq:ruq}
	\qr_{\ul,q}&=\sum_{p=1}^{M_a} \sqrt{\rodtq \eta_q}\hat{\qH}_{pq}^\dag \qH_{pq} \tilde{\qx}_q \nonumber\\
	&+
	\sum_{q'\neq q}^{K_u} \sum_{p=1}^{M_a} \sqrt{\rodtqr \eta_{q'}}\hat{\qH}_{pq}^\dag \qH_{pq'} \tilde{\qx}_{q'}+ \sum_{p=1}^{M_a}\hat{\qH}_{pq}^\dag \qw_p.
\end{align}

We assume that each information symbol of the $q$th user in the DD domain is detected separately~\cite{Raviteja:TVT:2021}. Accordingly, the desired signal $\tilde{\qx}_{qr}$ from $\qr_{\ul,q}$ in~\eqref{eq:ruq}, can be expressed as
\begin{align}~\label{eq:xqr:AP}
	\hat{\qx}_{qr} =& \sum_{p=1}^{M_a}\sqrt{\rodtq \eta_q} [\hat{\qH}_{pq}^\dag]_{(r,:)} [\qH_{pq}]_{(:,r)}\tilde{x}_{qr}\nonumber\\
	&
	+
	\sum_{p=1}^{M_a} \sum_{r'\neq r}^{MN} \sqrt{\rodtq \eta_q}[\hat{\qH}_{pq}^\dag]_{(r,:)} [\qH_{pq}]_{(:,r')}\tilde{x}_{qr'}\nonumber\\
	&+
	\sum_{q'\neq q}^{K_u} \sum_{p=1}^{M_a}\sum_{r=1}^{MN}
	\sqrt{\rodtqr \eta_{q'}}
	[\hat{\qH}_{pq}^\dag]_{(r,:)} [\qH_{pq'}]_{(:,r')}\tilde{x}_{q'r'}\nonumber\\
	&
	+\sum_{p=1}^{M_a}[\hat{\qH}_{pq}^\dag]_{(r,:)} \qw_p.
\end{align}

The CPU uses only statistical knowledge
of the channel when performing the detection. Therefore, the achievable uplink SE can be expressed as~\eqref{eq:Ruqdef} at the top of the next page,
\bigformulatop{42}{
\begin{align}\label{eq:Ruqdef}
 R_{\ul,q} \!=\!
\frac{\tauul}{MN}
\!\sum_{r=1}^{MN}\!
 \log_2 \left( 1\! +\! \frac{
	\left| \mathbb{DS}_{q,\ul}\right|^2}
{ 	\mathbb{E}\left\{\left|\mathbb{BU}_{q,\ul}\right|^2\right\} \!+\!\!
	\mathbb{E}\left\{\left|\mathbb{I}_{q1,\ul}\right|^2\right\}\! +\!\!
	\sum_{q'\neq q}^{K_u}\mathbb{E}\left\{\left|\mathbb{I}_{q2,\ul}\right|^2\right\} \!+\!\!
\mathbb{E}\left\{|w_{q,\ul}|^2\right\}}	\right)\!,
\end{align}
}
where $\tauul\in\big\{\tauulep, \tauulsp\big\}$ with $\tauulep= \Big(1-\frac{MN_\dl+\Nguard}{MN_T}\Big)$ corresponds to the SP-based channel estimation, while $\tauulsp = \Big(1-\frac{N_\dl}{N_T}\Big)$ stands for SP-based channel estimation. The term  $\tauul$ in~\eqref{eq:Ruqdef} reflects the fact that out of $MN_T$ symbol of each OTFS frame, we spend $MN_\dl$ symbol for downlink transmission and $\Nguard$ symbols for EP-aided channel estimation scheme. Moreover, by invoking~\eqref{eq:xqr:AP}, we have
\setcounter{equation}{43}
\begin{subequations}
	\begin{align}
\mathbb{DS}_{q,\ul}&=	
\mathbb{E}\left\{
			\sum_{p=1}^{M_a}
				\sqrt{\rodtq \eta_q}
				[\hat{\qH}_{pq}^\dag]_{(r,:)} [\qH_{pq}]_{(:,r)}\right\},\\
	\mathbb{BU}_{q,\ul}
	&=\sum_{p=1}^{M_a}
	\sqrt{\rodtq \eta_q}
	[\hat{\qH}_{pq}^\dag]_{(r,:)} [\qH_{pq}]_{(:,r)} -
\mathbb{DS}_{q,\ul},\\
	\mathbb{I}_{q1,\ul}
	&=
	   \sum_{p=1}^{M_a}
	      \sum_{r'\neq r}^{MN}
	          \sqrt{\rodtq \eta_q}
	          [\hat{\qH}_{pq}^\dag]_{(r,:)} [\qH_{pq}]_{(:,r')},\\
	 \mathbb{I}_{q2,\ul}
	 &=
     \sum_{p=1}^{M_a}\sum_{r=1}^{MN}
	\sqrt{\rodtqr \eta_{q'}}[\hat{\qH}_{pq}^\dag]_{(r,:)} [\qH_{pq'}]_{(:,r')}, \\
	w_{q,\ul}
	&=
	\sum_{p=1}^{M_a}[\hat{\qH}_{pq}^\dag]_{(r,:)} \qw_p.
	\end{align}
\end{subequations}

\begin{theorem}~\label{theor:UL}
	An achievable uplink SE for the $q$th user with matched filtering
	detection, for any $M_a$ and $K_u$, is given by~\eqref{eq:Ru} at the top of the next page.
\bigformulatop{44}{
	\begin{align}~\label{eq:Ru}
	&	R_{\ul,q}\!=\!
		\frac{\tauul}{MN}
		\!\sum_{r=1}^{MN}\!\log_2
		\Bigg(1\!+\!\frac{\rodtq\eta_{q}\left(\sum_{p=1}^{M_a}	\sum_{i=1}^{\Lmk} \gampqi \right)^2}
{\eta_{q}\sum_{p=1}^{M_a}\sum_{i=1}^{\Lmk}\betpqi\Big(\sum_{j=1}^{\Lmk}(\chpqij\!+\!\kapqij)\rodtq\gampqj
\!+\!\sum_{q'\neq q}^{K_u}\sum_{j=1}^{\Lmkp}\frac{\eta_{q'}}{\eta_{q}}\rodtqr\gampqrj\Big)
\!+\!\sum_{p=1}^{M_a}	\sum_{i=1}^{\Lmk} \gampqi}
			\Bigg).
	\end{align}
	}
\end{theorem}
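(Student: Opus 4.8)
The plan is to derive the closed-form uplink SE by computing each of the five expectation terms appearing in the denominator of the generic SE expression~\eqref{eq:Ruqdef}, exploiting the structure of the MR combiner and the statistical independence of the channel gains across paths and user--AP pairs. First I would exploit the fact that under MMSE estimation the estimate $\hat{h}_{pq,i}^\Pil$ and the estimation error are uncorrelated, so that $\qH_{pq}$ decomposes into its estimate $\hHpq$ plus an independent error term; this lets me evaluate the desired-signal term $\mathbb{DS}_{q,\ul}$ using Lemma~\ref{lemma:Tqi}. Concretely, $[\hHpqdg]_{(r,:)}[\Hpq]_{(:,r)}$ carries a double sum over paths $i,j$, and only the matched $i=j$ terms survive the expectation (the others are zero-mean since $\hmki$ has zero mean and distinct paths are independent); by Lemma~\ref{lemma:Tqi} the diagonal entry of $\Tpqi\Tpqidg$ equals one, so that $\mathbb{DS}_{q,\ul}=\sqrt{\rodtq\eta_q}\sum_{p}\sum_{i}\gampqi$, giving the numerator after squaring.

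Next I would handle the three interference/uncertainty terms. For the precoding-gain uncertainty $\mathbb{E}\{|\mathbb{BU}_{q,\ul}|^2\}$ and the inter-symbol interference $\mathbb{E}\{|\mathbb{I}_{q1,\ul}|^2\}$, I would expand the modulus squared, pass the expectation inside the sums over APs (cross-AP terms vanish by independence across $p$), and then separate the estimate from the error in each $\Hpq$. The key combinatorial bookkeeping is in tracking which path indices $i,j$ pair up. The terms involving $[\Tpqi\Tpqjdg]_{(r,r)}$ and $\sum_{r'\neq r}[\Tpqi\Tpqjdg]_{(r,r')}$ are exactly what the paper abbreviates as $\chpqij$ and $\kapqij$, so these matrix quantities will appear naturally once the fourth-moment expansions of the complex Gaussian channel gains are carried out. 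For the inter-user interference $\sum_{q'\neq q}\mathbb{E}\{|\mathbb{I}_{q2,\ul}|^2\}$, the combiner $\hHpqdg$ (trained on user $q$) is independent of the channel $\Hpq[q']$ to interferer $q'$, so the expectation factors; using Lemma~\ref{lemma:TqTqp:abs}, which guarantees the row-sum modulus squared equals one, collapses the $q'$-path geometry and produces the clean $\sum_{j}\frac{\eta_{q'}}{\eta_q}\rodtqr\gampqrj$ contribution. The noise term $\mathbb{E}\{|w_{q,\ul}|^2\}$ follows from $\mathbb{E}\{\|[\hHpqdg]_{(r,:)}\|^2\}=\sum_i\gampqi$ together with $\Sn$-normalization, yielding the additive $\sum_p\sum_i\gampqi$ in the denominator.

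The main obstacle will be the careful fourth-order moment accounting in the $\mathbb{BU}$ and $\mathbb{I}_{q1}$ terms, where both factors $\Hpq$ and $\hHpq$ share the same channel gains $\hmki$ and are therefore correlated. Here I must split each true channel gain as estimate-plus-error, exploit the zero-mean independence of the error from the estimate under MMSE, and then enumerate the surviving index pairings among $\{i,j\}$; the self-terms ($i=j$) contribute through $\mathbb{E}\{|\hmki|^4\}=2\betpqi^2$ (for circularly-symmetric complex Gaussians) while the cross-terms ($i\neq j$) contribute through $\betpqi\gampqj$. Getting the variance rather than the second moment for $\mathbb{BU}$ requires subtracting $|\mathbb{DS}_{q,\ul}|^2$, which cancels one piece of the self-term and is where sign/counting errors are easy to make. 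Once these moments are assembled and the common factor $\eta_q$ is pulled out, the denominator matches~\eqref{eq:Ru} termwise, and summing $\log_2(1+\mathrm{SINR}_r)$ over $r$ with the prefactor $\frac{\tauul}{MN}$ completes the proof. I would relegate the detailed moment computations to the appendix and present only the term-by-term identification in the main text.
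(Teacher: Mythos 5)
Your proposal follows essentially the same route as the paper: the authors explicitly omit the proof of Theorem~\ref{theor:UL} because it mirrors the proof of Theorem~\ref{Theor:DL} in Appendix~\ref{Theor:DL:Proof}, and your term-by-term evaluation of $\mathbb{DS}_{q,\ul}$, $\mathbb{BU}_{q,\ul}$, $\mathbb{I}_{q1,\ul}$, $\mathbb{I}_{q2,\ul}$ and the noise via Lemmas~\ref{lemma:Tqi}--\ref{lemma:TqTqp:abs} and the estimate-plus-error decomposition is exactly that argument. One minor slip: the fourth moment that actually enters the self-term bookkeeping is $\mathbb{E}\{|\hmkihat|^4\}=2(\gampqi)^2$ (together with $\mathbb{E}\{|\epspqi|^2\}\gampqi=(\betpqi-\gampqi)\gampqi$), not $\mathbb{E}\{|\hmki|^4\}=2\betpqi^2$, but the procedure you describe would produce the correct surviving contribution $\betpqi\gampqi$ after subtracting $|\mathbb{DS}_{q,\ul}|^2$.
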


\begin{proof}
	The proof is similar to that of Theorem~\ref{Theor:DL} and thus omitted for brevity.
\end{proof}

\begin{remark}
	 The achievable uplink SE in~\eqref{eq:Ru}, when different paths corresponding to the channel between the $q$th user and $p$th AP undergo different delays (i.e., $\Lmki\neq\Lmkj$), reduces to~\eqref{eq:Ru2} at the top of the next page.
\bigformulatop{45}{
	\begin{align}~\label{eq:Ru2}
		R_{\ul,q}\!=\!
		\tauul\log_2
		\left(1\!+\!
		\frac{\rodtq \eta_q\left(\sum_{p=1}^{M_a}	\sum_{i=1}^{\Lmk} \gampqi \right)^2}
		{\eta_q\sum_{p=1}^{M_a}
			     \sum_{q'=1}^{K_u}
                 \sum_{i=1}^{\Lmk}\!
			     \sum_{j=1}^{\Lmkp}\!
			     \frac{\eta_{q'}}{\eta_q}\rodtqr\betpqi\gampqrj
			\!+\!\sum_{p=1}^{M_a}	\sum_{i=1}^{\Lmk} \gampqi}
		\right)\!.
	\end{align}
	}
\end{remark}

\begin{corollary}~\label{Cor:UL}
With equal uplink power control and channel gain profile $\mathcal{CN}(0,1/L_{pq})$, if the transmit power of each user is scaled down by a factor of $\frac{1}{M_a}$, i.e., $\rodtq=\frac{E_u}{M_a}$, where $E_u$ is fixed, we have
\setcounter{equation}{46}
\begin{align}
R_{\ul,q}&=\!
\tauul\log_2
\left(1\!+\!
\frac{\rodtq\eta_q\left(M_aL_q\gamma_{q}^\Pil \right)^2}
{\rodtq\eta_qM_a
		K_uL_q\gamma_{q}^\Pil
    	\!+M_a L_q \gamma_{q}^\Pil}
\right)
\nonumber\\
&\hspace{-1em}\stackrel{M_a\rightarrow\infty}{\longrightarrow}
\tauul\log_2
\left(1\!+\!
E_u\eta_q L_q\gamma_{q}^\Pil
\right).
\end{align}

\end{corollary}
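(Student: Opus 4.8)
The plan is to specialize the general uplink SE of Theorem~\ref{theor:UL} step by step until the path-index sums collapse, and only then push $M_a\to\infty$. I would start not from~\eqref{eq:Ru} but from the already-reduced form~\eqref{eq:Ru2}, since the corollary lives in the same regime of distinct delays $\Lmki\neq\Lmkj$: there Lemmas~\ref{lemma:TqTqp}--\ref{lemma:TqTqp:abs} have already been used to set $\chpqij$ and $\kapqij$ so that the self-interference term folds into the $q'$-sum running from $1$ to $K_u$. This is the correct entry point because everything downstream is a direct substitution of the symmetry hypotheses into~\eqref{eq:Ru2}.

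Next I would impose the structural assumptions. The channel-gain profile $\mathcal{CN}(0,1/\Lmk)$ forces $\betpqi=1/\Lmk$ and makes the estimation quality independent of the path index, i.e. $\gampqi=\gamma_{pq}^\Pil$ for every $i$. Equal uplink power control together with the common-profile assumption then gives $\eta_{q'}=\eta_q$, $\rodtqr=\rodtq$, $\Lmk=\Lmkp=L_q$, and $\gamma_{pq}^\Pil=\gamma_{pq'}^\Pil=\gamma_{q}^\Pil$ uniformly over APs and users. Under these, the coherent array gain in the numerator of~\eqref{eq:Ru2} becomes $\sum_{p=1}^{M_a}\sum_{i=1}^{L_q}\gamma_{q}^\Pil=M_aL_q\gamma_{q}^\Pil$, so the numerator equals $\rodtq\eta_q\big(M_aL_q\gamma_{q}^\Pil\big)^2$.

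The denominator is where the only real bookkeeping lives, and I expect the collapse of the interference quadruple sum to be the main (if modest) obstacle. Writing it out, $\eta_q\sum_{p}\sum_{q'}\sum_{i}\sum_{j}\frac{\eta_{q'}}{\eta_q}\rodtqr\betpqi\gampqrj$ simplifies because $\betpqi=1/L_q$ while the $i$- and $j$-sums each supply a factor $L_q$; the product $L_q\cdot L_q\cdot(1/L_q)=L_q$ survives for each $(p,q')$ pair, and summing over $M_a$ APs and $K_u$ users yields $\rodtq\eta_qM_aK_uL_q\gamma_{q}^\Pil$. The noise term reduces identically to $\sum_{p}\sum_{i}\gampqi=M_aL_q\gamma_{q}^\Pil$. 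These two pieces reproduce exactly the first line of the corollary; the care needed is purely in tracking how the $1/L_q$ normalization cancels one of the two path-index summations.

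Finally I would substitute $\rodtq=E_u/M_a$, holding $\gamma_{q}^\Pil$ fixed (it is set by the pilot power, which is not scaled, so it is legitimately treated as independent of $M_a$). After this the numerator is $O(M_a)$, the interference term is $O(1)$, and the noise term remains $O(M_a)$; dividing numerator and denominator by $M_aL_q\gamma_{q}^\Pil$ leaves the SINR as $E_u\eta_qL_q\gamma_{q}^\Pil$ over $1+E_u\eta_qK_u/M_a$. Letting $M_a\to\infty$ sends the interference-to-noise ratio to zero, and the SINR converges to $E_u\eta_qL_q\gamma_{q}^\Pil$, giving the claimed limit $\tauul\log_2\!\big(1+E_u\eta_qL_q\gamma_{q}^\Pil\big)$. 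The conceptual point the calculation exposes is that the squared array gain $M_a^2$ in the numerator beats the single-$M_a$ interference growth, which is precisely what permits the $1/M_a$ power reduction while preserving a nonzero rate.
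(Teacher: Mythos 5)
Your proposal is correct and follows essentially the same route the paper intends: the corollary is obtained by specializing~\eqref{eq:Ru2} under the symmetric profile $\betpqi=1/L_q$, $\gampqi=\gamma_q^\Pil$, $\eta_{q'}=\eta_q$ (mirroring the downlink chain~\eqref{eq:Rdq:asym:2}--\eqref{eq:Rdq:asym:3}), which collapses the interference sum to $\rodtq\eta_q M_a K_u L_q\gamma_q^\Pil$ and the noise term to $M_a L_q\gamma_q^\Pil$, after which setting $\rodtq=E_u/M_a$ and letting $M_a\to\infty$ gives the stated limit. Your bookkeeping of the $1/L_q$ cancellation and the order-of-growth argument match the paper's (unwritten) derivation exactly.
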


The above result implies that when $M_a$ grows large, the inter-user interference, inter-symbol interference, and noise disappear. The transmit power at each user can be made proportional to $\frac{1}{M_a}$, without performance degradation.

We now present the achievable uplink SE expressions with MMSE-SIC detector with the assistance of fully and partially centralized processing at the CPU. By invoking~\eqref{eq:yAPm:Vect}, we rewrite the received signal at the CPU as for data detection as
		\vspace{-0.2em}
		\begin{align}~\label{eq:Hhatq}
			\underbrace{[\qy_1^T, \ldots, \qy_{M_a}^T]^T}_{\qy\in\mathbb{C}^{M_aMN}} =&\sum_{q=1}^{K_u}
			\sqrt{\eta_q}
			\underbrace{ [
				{\qH}_{1q}^T,
				\ldots,
				{\qH}_{M_aq}^T]^T}_{\qH_q \in\mathbb{C}^{M_aMN \times MN}}
			\qx_q \nonumber\\
			&+ \underbrace{[{\qw}_1^T, \ldots, {\qw}_{M_a}^T]^T}_{{\qw}_p\in\mathbb{C}^{M_aMN}},
		\end{align}
 where $\qH_q$ is the DD domain channel of the $q$-th user at the CPU. For the sake of presentation, in the following we set $\eta_q=1$. The CPU uses all CSI to design  an arbitrary receive combining vector $\qV_q \in \mathbb{C}^ { M_aMN\times MN}$ for detection of $\qx_q$ as
		\vspace{-0.2em}
		\begin{align}~\label{eq:xhatq}
			\hat{\qx}_q\! = \!
			\qV_q^\dag \hat{\qH}_q \qx_q
			\!+\!
			\qV_q^\dag \tilde{\qH}_q \qx_q
			\!+\!
			\sum_{q'\neq q}^{K_u}\qV_{q}^\dag {\qH}_{q'} \qx_{q'}
			\!+\!
			\qV_{q}^\dag {\qw}_p.
		\end{align}
 This scheme is called ``\emph{Level 4: Fully Centralized Processing}" in the literature~\cite{Emil:TWC:2020}.  Known combining designs, such as MR combining with $\qV_q\!=\!\hat{\qH}_q $ and MMSE combining with $\qV_q\! = \! \left(\sum_{q'=1}^{K_u}\!\big(\!{\hat{\qH}}_{q'} {\hat{\qH}}_{q'}^\dag \!+\! \qC_{q'}\!\big) \!+ \!\Sn\qI_{M_aMN} \right)^{\!\!\!-1}\!\! {\hat{\qH}}_{q}$ can be used at the CPU, where $\qC_{q}\! =\! \diag(\qC_{1q},\ldots \qC_{M_aq})$ with $\qC_{pq} = \mathbb{E}\{{\tilde{\qH}}_{pq} {\tilde{\qH}}_{pq}^{\dag}\}$ and ${\tilde{\qH}}_{pq} = {{\qH}}_{pq}-{\hat{\qH}}_{pq}$. The uplink SE achieved by fully centralized processing  can be computed by the Monte-Carlo method. However, MMSE combining has high computational complexity since it requires first the computation of a  $M_a MN\times M_aMN$ matrix inverse and then a matrix multiplication, where $M_a MN$ is a  large number. Moreover, in the case of multi-antenna APs, this computational complexity is  increased dramatically, as we need to compute a  $LM_a MN\times LM_aMN$ matrix inverse, where $L$ is the number of antennas per AP.

As an alternative, Local MMSE (L-MMSE) combining matrix can be designed at each AP based on the local CSI at each AP. L-MMSE combining matrix will reduce the computational complexity and backhaul overhead required for implementing centralized MMSE combining matrix at the CPU. Let $\qV_{pq}\in\mathbb{C}^{MN\times MN}$ denotes the combining matrix designed by $p$-th  AP for $q$-th UE  by exploiting the local estimate of $\hat{\qH}_{pq}$. The local estimate of ${\qx}_q$ at the $p$-th AP can be expressed as
\vspace{-0.2em}
\begin{align}~\label{eq:xhatq:LMMSE}
	\hat{\qx}_{pq} \!=\!
	\qV_{pq}^\dag \hat{\qH}_{pq} \qx_q
	\!+\!
	\qV_{pq}^\dag \tilde{\qH}_{pq} \qx_q
	\!+\!
	\sum_{q'\neq q}^{K_u}\qV_{pq}^\dag {\qH}_{pq'} \qx_{q'}
	\!+\!
	\qV_{pq}^\dag {\qw}_p.
\end{align}
Therefore, the L-MMSE combining matrix is given by~\cite{Emil:TWC:2020}
\begin{align}~\label{eq:Vq:LMMSE}
	\qV_{pq} =
	\left(\!\sum_{q'=1}^{K_u}\!\big({\hat{\qH}}_{pq'} {\hat{\qH}}_{pq'}^\dag \!+ \!\qC_{pq'}\big) \!+\! \Sn\qI_{MN} \!\right)^{\!\!-1}
	{\hat{\qH}}_{pq}.
\end{align}
The local estimates $\hat{\qx}_{pq}$ are sent to the CPU, where they are weighted by the large-scale fading decoding (LSFD) matrices $\qA_{pq}\in\mathbb{C}^{MN\times MN}$ as $\hat{\qx}_q =
\sum_{p=1}^{M_a}\qA_{pq}^\dag  \hat{\qx}_{pq}$. This scheme is called ``\emph{Level 3: Local Processing \& LSFD}" in the literature~\cite{Emil:TWC:2020}.  Define $\qA_q  \triangleq [\qA_{1q}^T,\ldots,\qA_{M_aq}^T]^T\in\mathbb{C}^{M_aMN\times MN}$ and $\qD_{qq'}\triangleq[\qV_{1q}^\dag\hat{\qH}_{1q'}^{\mathtt{DD}},\ldots,\qV_{M_aq}^\dag \hat{\qH}_{M_aq'}^{\mathtt{DD}}]\in\mathbb{C}^{M_aMN\times MN}$. Then, the achievable uplink SE for user $q$ with MMSE-SIC detector in level 3 is obtained as
\begin{align}~\label{SE:UL:MMSE3}
	R_{\ul,q}^{\mathtt{MMSE,3}} = \frac{\omega_{\ul}^{\mathtt{che}}}{MN}\log_2 \det \left(  \qI_{MN}+\qD_{q,(3)}^{\dag} (\boldsymbol{\Psi}_{q,(3)})^{-1}\qD_{q,(3)} \right),
\end{align}
 where $\qD_{q,(3)} \triangleq \qA_q^\dag \mathbb{E}\{\qD_{qq}\} $ and $\boldsymbol{\Psi}_{q,(3)} \triangleq  \sum_{q'=1}^{K_u} \qA_q^\dag \mathbb{E}\{ \qD_{qq'} {\qD}_{qq'}^\dag\}   \qA_q - \qD_{qq} \qD_{qq'}^\dag + \Sn\qA_q^\dag\qS_q\qA_q$ with $\qS_q=\diag\left(\mathbb{E}\{\qV_{1q}^{\dag}\qV_{1q}\},\ldots,\mathbb{E}\{\qV_{M_aq}^{\dag}\qV_{M_aq}\}\right) \in\mathbb{C}^{M_aMN\times M_aMN}$. According to~\cite{Debbah:TWC:2022}, the achievable SE in~\eqref{SE:UL:MMSE3} is maximized when $\qA_q = \left(\sum_{q'=1}^{K_u} \mathbb{E}\{\qD_{qq'} \qD_{qq'}^\dag\} + \Sn\qS_q\right)^{-1}\mathbb{E}\{\qD_{qq}\}$. Nevertheless, the computation of  a  $M_a MN\times M_aMN$ matrix inverse is required for the level 3 processing. For simplicity, the CPU can alternatively weight the local estimates $\{\hat{\qx}_{pq}, p=1,\ldots,M_a\}$ by taking the average of them to obtain the final decoding symbol as $\hat{\qx}_q = \frac{1}{M_a}\sum_{p=1}^{M_a} \hat{\qx}_{pq}$, which is equivalent to set $\qA_{pq}=\frac{1}{M_a}\qI_{MN}$. This method is termed as ``\emph{Level 2: Local Processing \& Simple Centralized Decoding}" in~\cite{Emil:TWC:2020}. The achievable uplink SE for $q$-th user  in level 2 with MMSE-SIC detector is obtained as
\begin{align}~\label{SE:UL:MMSE}
	R_{\ul,q}^{\mathtt{MMSE,2}} = \frac{\omega_{\ul}^{\mathtt{che}}}{MN}\log_2  \det \left(  \qI_{MN}+\qD_{q,(2)}^{\dag} (\boldsymbol{\Psi}_{q,(2)})^{-1}\qD_{q,(2)} \right),
\end{align}
where $\qD_{q,(2)} \triangleq\sum_{p=1}^{M_a} \mathbb{E}\{\qV_{pq}^\dag \qH_{pq}\} $ and $\boldsymbol{\Psi}_{q,(2)} \triangleq  \sum_{q'=1}^{K_u} \sum_{p=1}^{M_a} \sum_{p'=1}^{M_a} \mathbb{E}\{ \qV_{pq}^\dag {\qH}_{pq'} {\qH}_{p'q'}^\dag\qV_{p'q}\}  - \qD_{q,(2)} \qD_{q,(2)}^\dag + \sum_{p=1}^{M_a}\mathbb{E}\{\qV_{pq}^\dag{\qw}_p^\dag{\qw}_p\qV_{pq}\}$.

In Section~\ref{Sec:Numer},  we numerically evaluate the achievable SE of the MMSE-SIC detector with L-MMSE combining and simple centralized processing using Monte-Carlo simulations. Deriving a closed-form expression for the achievable SE in~\eqref{SE:UL:MMSE} and quantifying the power-scaling laws for the L-MMSE combining can be considered as an interesting future line of research.
\subsection{OFDM Modulation}
In the simulation results, we will compare the performance of  OTFS with the OFDM counterpart. As indicated in~\cite{Raviteja:TWC:2018}, the Heisenberg transform in~\eqref{eq:Xtf} generalizes the inverse fast Fourier transform (IFFT) of OFDM and maps the information symbols from the frequency domain to time domain. Therefore, by replacing the Heisenberg transform module by IFFT, CP addition, serial-to-parallel and digital-to-analog conversion, an OFDM transmitter is implemented. On the other hand at the receive side, the Wigner transform module is substituted with analog-to-digital, parallel-to-serial, CP removal and FFT operation. For a fair comparison, we assume that each OFDM symbol duration is $T=\tcp+T_0$, where $\tcp$ and $T_0$ denote the CP and data symbol duration, respectively ($\tcp$ is typically chosen as $\tcp \geq  \tau_{max}$). Therefore, the OFDM frame duration is $NT$.

\emph{Uplink Payload Data Transmission:} With OFDM modulation, the continuous-time transmit signal from the $q$th user is given by
\begin{align}~\label{eq:st:OFDM}
\bar{s}_q(t)=\sum_{n=0}^{N-1}\sum_{m=0}^{M-1}
\bar{X}_q[n,m] \bar{g}_{tx}(t-nT)
e^{j2\pi m \Delta f(t-\tcp-nT)},
\end{align}
where $\bar{X}_q[n,m]$ denotes the information symbol at time instant $n$ and sub-carrier $m$, $\bar{g}_{tx}(t)$ is one for $t\in[0,T]$ and zero otherwise. After passing $\bar{s}_q(t)$ through the time-frequency selective channel
\begin{align}~\label{eq:hmk:TD}
h_{pq}(t,\tau) = \sum_{i=1}^{\Lmk}\hmki\delta(\tau-\tmki)e^{j2\pi\nmki t},
\end{align}
the received signal i.e., $\bar{r}(t) = \int h(t,\tau)\bar{s}(t-\tau)d\tau $, is sampled at a rate $f_s=M \Delta f = \frac{M}{T}$ and the CP is removed. Finally, after applying the DFT at the OFDM demodulator, the $n$th received OFDM symbol at the output of the $p$th AP can be written as
\begin{align}
\bar{\qy}_p(n) = \sum_{q=1}^{K_u} \sqrt{\rodtq \bar{\eta}_q}\qH_{pq}^\ofdm\bar{\qx}_q(n) + \bar{\qw}_p(n),
\end{align}
where $0\leq\bar{\eta}_q\leq1$ is the uplink power control coefficient for the $q$th user,  	 $\bar{\qx}_q(n)\in\mathbb{C}^{M\times 1}$ denotes the $n$th transmit signal from the $q$th user, $\bar{\qw}_p(n)\in\mathbb{C}^{M\times 1}$ is the AWGN at the AP, and $\qH_{pq}^\ofdm =\qF_M \qH_{pq}^{\TF} \qF_M^\dag\in\mathbb{C}^{M\times M}$ denotes the channel matrix over time-varying channel, where the TF domain channel matrix $\qH_{pq}^{\TF} =\sum_{i=1}^{\Lmk}
\hmki\qQ_{pq}^{(i)}$ with
\begin{align}
\big[\qQ_{pq}^{(i)}\big]_{(m,n)} =&
\delta\left(\left(m-n-\frac{\tau_{pq,i}M}{T}\right)_M\right)
e^{\frac{j2\pi}{M}(n-1)\nu_{pq,i}},\nonumber\\
&\hspace{5em}\quad m,n=1,\ldots,M,
\end{align}
where $\delta(\cdot)$ denotes the Dirac delta function.
In order to process the received signal, all APs need to acquire the CSI, which is obtained by having the user transmit uplink pilot symbols to the APs. In OFDM modulation-based systems, the channel estimation can be performed either by inserting pilot tones into all of the OFDM sub-carriers with a specific period $D_t$ over the time, termed as block-type (BT) pattern, or by inserting pilot tones to each OFDM symbol and over the adjacent sub-carries with frequency interval $D_f$ sub-carriers (termed as comb-type pattern)~\cite{Bahai:OFDM:2002}. Note that $D_t$ and $D_f$ must satisfy $D_t<\frac{1}{2 \nu_{max}{T}}$ and $D_f<\frac{1}{\Delta f\tau_{max}}$ constraints, respectively.

Having the MMSE estimate of the channel matrix $\qH_{pq}^\ofdm$, denoted by $\hat{\qH}_{pq}^{\ofdm}$, the CPU detects the received signal from $q$th user, which is given by
\begin{align}\label{eq:ruq:ofdm}
	\bar{\qr}_{\ul,q}(n)\!=&
	\sum_{q'=1}^{K_u} \sum_{p=1}^{M_a} \!\sqrt{ \rodtqr \bar{\eta}_{q'}}\hat{\qH}_{pq'}^{\ofdm^\dag} \qH_{pq'}^\ofdm \bar{\qx}_{q'}(n)\nonumber\\
	&+ \!\!\sum_{p=1}^{M_a}\hat{\qH}_{pq}^{\ofdm^\dag}\bar{\qw}_p(n).
\end{align}


\emph{Downlink Payload Data Transmission:} The APs use maximum-ratio beamforming to transmit signals to $K_u$ users. The signal transmitted   from the $p$th AP is
\begin{align}~\label{eq:xqd:ofdm}
\bar{\qx}_{\dl,p}(n)= \sqrt{\rho_d}
\sum_{q=1}^{K_u}
\bar{\eta}_{pq}^{1/2}\hat{\qH}_{pq}^{\ofdm^\dag} \bar{\qs}_q(n),
\end{align}
where $\bar{\qs}_q\in\mathbb{C}^{M\times 1}$ is the intended signal vector for the $q$th user; $\bar{\eta}_{pq}$, $p=1,\ldots,M_a$, $q=1,\ldots,K_u$ are the power control coefficients chosen to satisfy  the  following power constraint at each AP as $\mathbb{E}\left\{\|\bar{\qx}_{\dl,p}(n)\|^2\right\} \leq \rho_d$~\cite{Hien:cellfree}.

The received signal at the $q$th user in DD domain can be expressed as
\begin{align}~\label{eq:zqd:ofdm}
\bar{\qz}_{\dl,q}(n)
&=\sqrt{\rho_d}\sum_{q'=1}^{K_u}\sum_{p=1}^{M_a}
\bar{\eta}_{pq'}^{1/2} {\qH}_{pq}^{\ofdm} \hat{\qH}_{pq'}^{\ofdm^\dag} \bar{\qs}_{q'}(n)
+\bar{\qw}_{\dl,q}(n),
\end{align}
where $\bar{\qw}_{d,q}\in\mathbb{C}^{M\times 1}$ is the AWGN vector at the user $q$.

\begin{remark}
It can be verified that the results presented in Lemmas~\ref{lemma:Tqi}-\ref{lemma:TqTqp:abs}, hold for $\bar{\qQ}_{pq}^{(i)} =\qF_M \qQ_{pq}^{(i)} \qF_M^\dag\in\mathbb{C}^{M\times M}$. Therefore, by applying a similar method as in Theorem~\ref{Theor:DL} and~\ref{theor:UL}, the achievable downlink and uplink SE for the OFDM based system can be derived.
\end{remark}

In addition to the achievable SE analysis, BER is also an important performance metric. We leave the BER performance evaluation of the system for future investigation. As an initial suggestion in this study, the iterative channel and data detection algorithm proposed in~\cite{Prasad:TWC:2021} can be used in the OTFS cell-free massive MIMO system with SP-based channel estimation, by employing the proposed LCD in~\cite{Raviteja:TVT:2021} instead of the message passing data detection algorithm. More specifically, at each iteration the MMSE estimate of the channel vector $\qh_{pq}$, $\forall 1 \leq p\leq M_a$ and $1 \leq q\leq K_u$ is obtained using~\eqref{eq:MMSE:SP}. Then, the estimated channel $\hat{\qH}_{pq}$ and received data $\qy_q$ are fed into the LCD algorithm. After OTFS symbol detection, we are able to cancel out the interference induced by the data symbols based on the estimated channel and the obtained symbols. Accordingly,  MMSE channel estimates are updated for continuing the iterative process. The iterative process can be terminated when the stopping  criterion is reached.

\begin{table}
	\caption{System Parameters} 
	\centering 
	\begin{tabular}{||c c c c ||}
		\hline
		Parameter & Value &  &  \\ [0.5ex]
		\hline\hline
	Carrier frequency ($f_c$) & $4$ GHz  &  &  \\
		\hline
	Sub-carrier spacing ($\Delta f$)  & $15$ KHz & &  \\
		\hline
			Bandwidth ($B$) & $B=M\Delta f$ & & \\
		\hline
			OTFS frame duration ($T_c$) & $T_c=N/\Delta f$ &  &  \\
		\hline
			OFDM symbol duration ($T$) & $T = \tcp+T_0$  &  &  \\
		\hline
		OFDM CP duration ($\tcp$)  & $\tcp = \tau_{\max}$&  &  \\
			\hline
		OFDM pilot time interval ($D_t$) & $D_t = \frac{1}{2\nu_{\max}T}$&  &  \\
		\hline
	\end{tabular}
	\label{tab:FronthaulParameter}
\end{table}

\section{Numerical Results and Discussion}~\label{Sec:Numer}
We consider an OTFS system with operating carrier frequency is $f_c=4$ GHz and  the sub-carrier spacing is $\Delta f=15$ kHz. The maximum moving speed in the scenario is $300$ kmph, yielding a maximum Doppler index with $k_{max}=9$. We consider two 3GPP vehicular models, namely (i) extended vehicular A (EVA) with $L_{pq}=9$ and  $\tau_{max}=2.5~\mu$s, and (ii) extended vehicular B (EVB) with  $L_{pq}=6$ and  $\tau_{max}=10~\mu$s~\cite{channelmodel}.  For each tap of the channel realization between the $p$-th AP and $q$-th user, we randomly select the Doppler index according to the uniform distribution, such that we have $-k_{\max}\leq k_{p,q,i}\leq k_{\max}$, $\forall p, q, 1\leq i\leq \Lmk$~\cite{KWAN:TWC:2021}. The Doppler shift can be also generated using Jakes' formula, as $\nu_{pq,i} = \nu_{\max} \cos(\theta_{pq,i})$, where $\nu_{\max}$ is the maximum Doppler shift determined by the user speed while $\theta_{pq,i}$ is uniformly distributed over $[-\pi, \pi]$~\cite{Raviteja:TWC:2018}.

We assume that $M_a$ APs and $K_u$ users are uniformly distributed at random within a square of size $1 \times 1~\text{km}^2$ whose edges are wrapped around to avoid the boundary effects.
We model the large-scale fading coefficients according to the 3GPP Urban Microcell model. Therefore, we set   $\beta_{pq,i} = 10^{\frac{\mathrm{PL}_{pq,i}}{10}} 10 ^{\frac{F_{pq}}{10}}$, where $10^{\frac{\mathrm{PL}_{pq,i}}{10}}$ represents the path-loss and $10 ^{\frac{F_{pq}}{10}}$ denotes the shadowing effect with $F_{pq}\sim \mathcal{N}(0,4^2)$ (in dB)~\cite{Emil:TWC:2020}. Here $\mathrm{PL}_{pq,i}$ (in dB) is given by~\cite{Emil:TWC:2020}
\begin{align}
\mathrm{PL}_{pq,i} =
-30.5-36.7\log_{10}\left(\frac{d_{pq}}{1 \text{m}}\right),
\end{align}
where $d_{pa}$ is the distance between user $q$ and AP $p$. Moreover, the correlation among the shadowing terms from the AP $p$ to different users is expressed as~\cite{Emil:TWC:2020}
\begin{align}
\mathbb{E}\{F_{pq}F_{p'q'}\} =
\left\{ \begin{array}{ll}
4^2 2^{-\delta_{qq'}/9 \text{m}}      & p=p',  \\
0           & p\neq p',
\end{array} \right.
\end{align}
where $\delta_{qq'}$ is the distance between user $q$ and user $q'$.
We further set the noise figure $F = 9$ dB, and thus the noise power $\Sn=-108$ dBm ($\Sn=k_B T_0 (M\Delta f) F$ W, where $k_B=1.381 \times 10^{-23}$ Joules$/$K is the Boltzmann constant, while $T_0=290$K is the noise temperature). Let $\tilde{\rho}_d =1$ W and $\Pmax = 0.2$ W be the maximum transmit powers of the APs and users, respectively. Moreover, $P_{q}^{\mathtt{Pil}}= \alpha^{\mathtt{che}} P_{\max} $  and $P_{q}^{\mathtt{dt}}=(1-\alpha^{\mathtt{che}})P_{\max}$, denote the pilot and data symbol power at the $q$th user, respectively, where $0\leq\alpha^{\mathtt{che}}\leq 1$ is the power allocation factor. The normalized maximum transmit powers ${\rho}_d$, $\rodtq$, and $\roplq$  are respectively calculated by dividing $\tilde{\rho}_d$, $P_{q}^{\mathtt{dt}}$, and $P_{q}^{\mathtt{Pil}}$ by the noise power $\Sn$. We assume that all APs transmit with full power, and at the $p$th AP, the power control coefficients $\eta_{pq}$, $q=1,\ldots,K_u$, are the same, i.e., $\eta_{pq} = \big(\sum_{q'=1}^{K_u}\sum_{i=1}^{L_{pq}} \gamma_{pq',i}\big)^{-1}$, $\forall q=1,\ldots,K_u$, while in the uplink, all users transmit with full power, i.e., $\eta_q=1$, $\forall q=1,\ldots,K_u$.

\begin{figure}[t]
	\centering
	\includegraphics[width=1\textwidth]{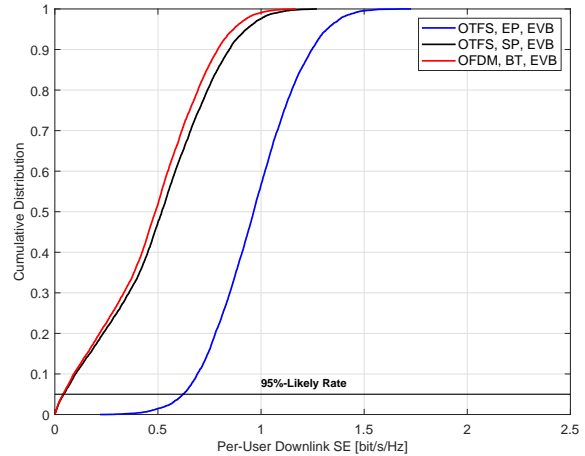}
	\caption{The cumulative distribution of the per-user downlink SE for the EVB and EVA model.}
	\label{fig:Fig1}
	\vspace{-0.5em}
\end{figure}
\begin{figure}[t]
	\centering
	\includegraphics[width=1\textwidth]{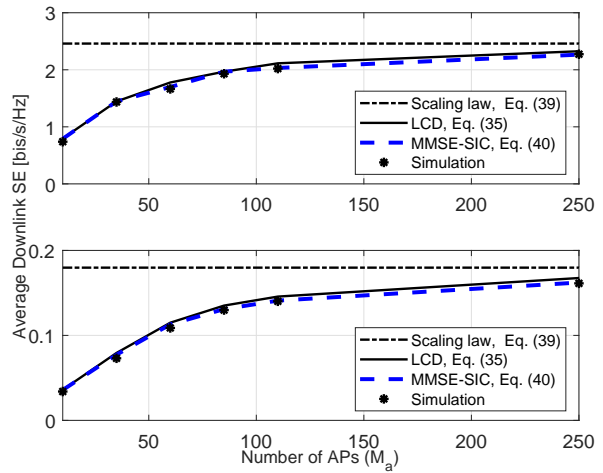}
	\caption{The average per-user downlink SE versus the number of APs and for EP (top) and SP (bottom) channel estimation.}
	\label{fig:Fig2}
	\vspace{0em}
\end{figure}

In Fig.~\ref{fig:Fig1}, we compare the per-user downlink SE of OTFS  against that of an OFDM modulation system for $M_a=100$ and $K_u=20$ for the EVA and EVB models. In this figure, in order to have a fair comparison between the channel estimation schemes, we set $\alpha^{\EP}= \alpha^{\SP}$ and, thus, the same power level is used for channel estimation at both EP and SP-based channel estimation schemes. It can be observed that OTFS modulation with EP-aided channel estimation method provides considerable gain over the OFDM modulation in terms of $95\%$-likely rate, while the gain of the SP-based channel estimation is negligible. Moreover, comparing the results for EVA and EVB model, we observe that both OFDM and OTFS modulation provide better performance over the EVA model. 	We notice that by decreasing $\tau_{max}$ the pilot overhead required for the EP-aided channel estimation in OTFS system and the length of CP for the OFDM modulation are decreased. However, since the OTFS system uses the uplink DD resources for  channel estimation, its downlink SE does not affected by decreasing $\tau_{max}$ (in this case the uplink SE of the system is improved as shown in Fig.~\ref{fig:Fig5} and Fig.~\ref{fig:Fig6}). Nevertheless, in the EVA model, the maximum number of path, $\Lmk$, is more than that of the EVB model and hence it is inventive that applying MRT precoding at the APs improves the performance of the OTFS-based systems compared to the EVB model.

\begin{figure}[t]
	\centering
	\includegraphics[width=1\textwidth]{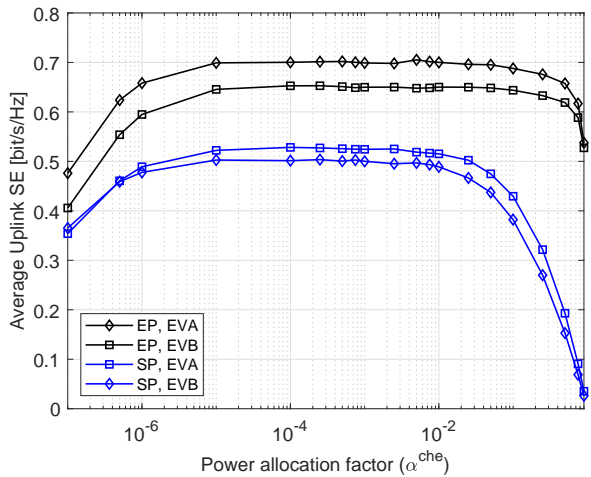}
	\caption{The average per-user uplink SE versus the power allocation coefficient.}
	\label{fig:Fig3}
	\vspace{0em}
\end{figure}

\begin{figure}[t]
	\centering
	\includegraphics[width=1\textwidth]{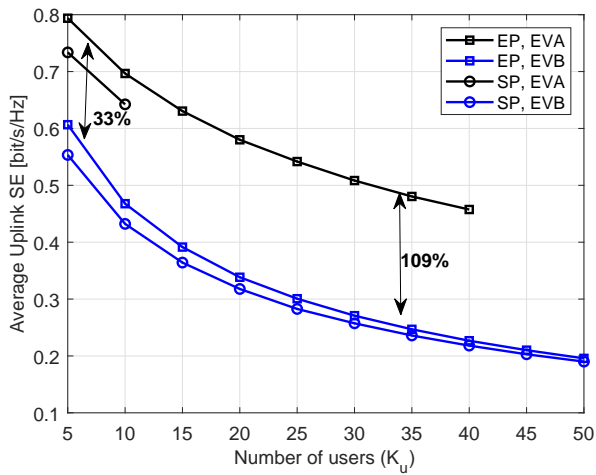}
	\caption{The average per-user uplink SE versus the number of users.}
	\label{fig:Fig4}
	\vspace{0em}
\end{figure}

In Fig.~\ref{fig:Fig2}, we compare the simulated average per-user downlink SE with the corresponding analytical expression in Theorem~\ref{Theor:DL}, when $\rho_d$ is scaled down with an $\frac{1}{M_a^2}$ factor. Results are presented for two different channel estimation schemes: EP and SP-based channel estimation. In this example, users transmit at full power. Moreover, the number of APs and users are assumed to be $M_a=100$ and $K_u=10$ respectively, with the OTFS frame consisting of $N=20$ symbols and $M=40$ sub-carriers. We further set $\beta_{pq,i}=1$, $\forall q=1,\ldots,K_u, p=1,\ldots,M_a, i=1,\ldots,\Lmk$. In line with Corollary~\ref{Cor:DL}, the downlink SEs saturate as $M_a$ tends to infinity in both scenarios. Figure~\ref{fig:Fig2} also illustrates the corresponding lower bound defined in~\eqref{eq:DL:UPB}, when stochastic CSI is available at the users.  It can be observed that the performance of the LCD is very close to its lower bound achieved by the MMSE-SIC detector.

Next, we evaluate the effect of power allocation between the data and pilot symbols on the uplink SE. Figure~\ref{fig:Fig3} shows the per-user uplink SE versus the power allocation coefficient $\alpha^{\mathtt{che}}$  for the EP and SP-based channel estimation schemes. In the simulations, we set $N=128$, $M=512$, $M_a=100$, and $K_u=10$. As expected, the performance of both channel estimation methods depends on tuning the data and pilot power (see~\eqref{eq:var:MMSE:EP} and~\eqref{eq:MMSE:SP}). The EP-aided channel estimation  outperforms the SP-based channel estimation, due to the guard intervals used around the impulse pilots, which facilitate the channel estimation process by decreasing the interference.  Another observation is that the EP-aided channel estimation provides higher uplink SE when the maximum delay spread of the channel decreases. This is intuitive, since, when $\tauul$ increases, the number of DD guard grids decreases and, thus, the variance of the MMSE channel estimate becomes large. Accordingly, the uplink SE improves, while in the case of SP-based channel estimation, the performance of the system with the EVB channel model outperforms that with the EVA model. This comes from the fact that, the variance of the MMSE estimates, as well as $\tauul$, are independent of the maximum delay spread, while the variance of the MMSE estimate becomes large when the maximum number of paths between the APs and users decreases.

\begin{figure}[t]
	\centering
	\includegraphics[width=1\textwidth]{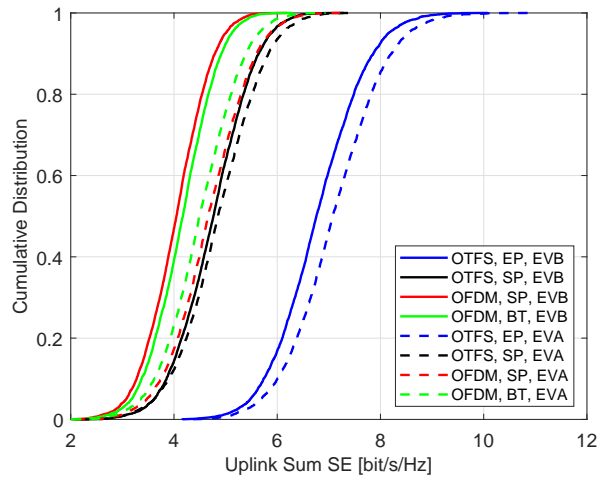}
	\caption{The cumulative distribution of the average uplink sum SE with LCD ($\Delta f=15$ kHz).}
	\label{fig:Fig5}
	\vspace{0em}
\end{figure}

\begin{figure}[t]
	\centering
	\includegraphics[width=1\textwidth]{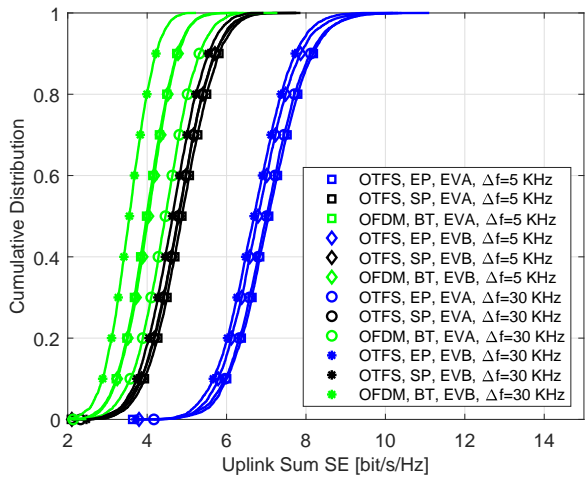}
	\caption{The cumulative distribution of the average uplink sum SE for different sub-carrier bandwidth.}
	\label{fig:Fig7_R3}
	\vspace{0em}
\end{figure}

Figure~\ref{fig:Fig4} shows the per-user uplink SE for both EP and SP-based channel estimation methods with EVA and EVB channel models. We consider the same setting as in Fig.~\ref{fig:Fig3} with $\alpha^{\mathtt{che}} = \Nguard/MN$. It can be observed, when the number of users increases, the SE reduction rate of the SP-based channel estimation is more significant compared to the EP-aided scheme. That is, by increasing $K_u$, the gap between the achievable SE via EP and SP-based scheme increases from $33\%$ to $109\%$, due to the increase in mutual interference between the data and pilot symbols. However, when $\tau_{max}$ increases, and with the proposed EP-aided channel estimation, only up to $K_u=10$ users can be served, while SP-based channel estimation scheme can support all users.  These results motivate the need to develop  user-centric clustering,  i.e., constructing serving clusters of cooperating APs separately for each user.

Figure~\ref{fig:Fig5} shows the cumulative distribution of the uplink sum SE for the EVA and EVB models, and for $N=128$, $M=512$, $M_a=100$, and $K_u=10$. Results for the OFDM-based system with BT and SP-based channel estimation have been included. When the channel maximum delay increases, the performance of the OTFS-based and OFDM-based systems is compromised in terms of $95\%$-likely rate. However, the gain of OTFS system over the OFDM modulation is increased. This is due to the fact that the CP overhead of the OFDM symbols is significantly increased as compared to the generic guard interval for the OTFS, which is introduced between two OTFS blocks to avoid inter-block interference.

\begin{figure}[t]
	\centering
	\includegraphics[width=1\textwidth]{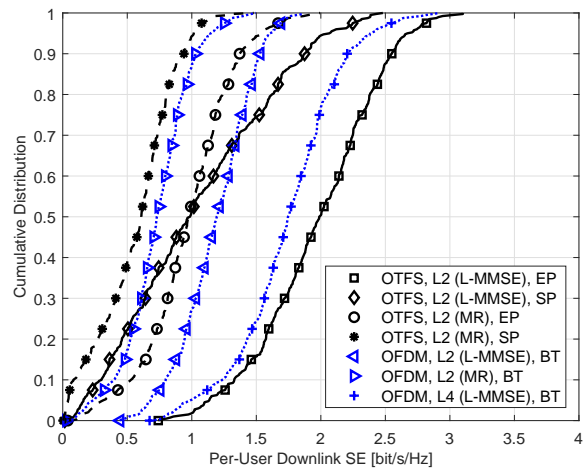}
	\vspace{-0.5em}
	\caption{The cumulative distribution of the  per-user uplink SE for MMSE-SIC detector.} 	\label{fig:Fig6}
	\vspace{-1em}
\end{figure}

We now study the impact of the sub-carrier bandwidth, $\Delta f$, when the number of sub-carriers $M$ is kept fixed. From Fig.~\ref{fig:Fig7_R3}, we observe that by increasing $\Delta f$ from $5$ kHz to $30$ KHz, the achievable SE of the OTFS is slightly degraded irrespective of the channel model. However, the SE achieved by the OFDM is improved considering the EVA model, while it is degraded considering the EVB model. This can be explained as follows: In the case of OTFS, the pilot overhead required for channel estimation is independent of $\Delta f$ and determined by the maximum delay and Doppler shifts. Therefore, by increasing $\Delta f$, only the total noise power is increased, which slightly degrades the system performance. In OFDM with BT pilot pattern, by increasing $\Delta f$, the time interval $D_t$ between the adjacent pilot symbols is increased ($D_t<\frac{\Delta f}{2 \nu_{max}}$). Hence, the pilot overhead required for the channel estimation process is decreased. However, the overhead required for the CP is increased. When $\tau_{max}$ is small (i.e., EVA model) the number of sub-carriers used for the CP is also small. Thus, by increasing $\Delta f$ the overall overhead required for the pilot and CP is decreased, and accordingly the achievable SE of the OFDM is improved. On the other hand, when $\tau_{max}$ is large (i.e., EVB model) the length of the  CP becomes large, especially when $\Delta f$ is high and $M$ is fixed. In this case, the increase in overhead required by the CP becomes dominant as compared to the decrease in pilot overhead, such that the overall overhead (pilot and CP) is increased. Therefore, by increasing $\Delta f$ the  SE achieved by OFDM is degraded.

In Fig.~\ref{fig:Fig6}, the  cumulative distribution function of the per-user uplink SE over different random realizations of the user locations and channels is plotted, when level-2 with L-MMSE and MR combining are deployed. In this figure,  the number of APs and users are assumed to be $M_a=100$ and $K_u=10$ respectively, with the OTFS frame consisting of $N=20$ symbols and $M=40$ sub-carriers. The L-MMSE combining with EP-aided and SP-based channel estimation, respectively achieves  2.6 and 4.3 fold higher 95\%-likely SE compared to the MR combining. We notice that level 4 with MMSE and level 3 with L-MMSE can provide higher SE with respect to the level 2. However, corresponding numerical results cannot be directly obtained for OTFS modulation, due to the high complexity of the large-scale matrix inversion involved in these scheme. Nevertheless, we included the per-user uplink SE for the OFDM modulation achieved by level-4 processing, as OFDM can afford the corresponding computational complexity. This result can be used as a benchmark for determining the performance gains achieved by the OTFS modulation. We observe that OTFS modulation with level-2 processing still outperforms OFDM modulation with level-4 processing.

\section{Conclusion}~\label{Sec:conclusion}
We analyzed the SE performance of cell-free massive MIMO  with OTFS modulation, taking into account the effects of channel estimation. Closed-form expressions for the average downlink and uplink SE were presented. Our results confirmed the superiority of OTFS against OFDM in improving the throughput of cell-free massive MIMO systems in high-mobility scenarios. Specifically, for the selected parameters and with EP-aided channel estimations, we achieved a 15 fold improvement in terms of $95\%$-likely per-user downlink SE  over the OFDM system. Moreover, SP-based channel estimation can be used in channels with large delay spread to support high number of users. Finally, we studied the impact of power allocation between the data and pilot symbols. Our results show that an optimized power allocation boosts the performance of the proposed channel estimation methods.

Extending these results to the more practical scenario and characterizing the impact of the delay and Doppler estimation errors on the achievable SEs, is an interesting future research direction. Moreover, in order to strike a balance between channel estimation quality and SE loss due to the impact of data symbols on the MMSE channel estimates,  the concept of max-min fairness optimization problem can also be applied to our framework.

\appendices
\section{Proof of Lemma~\ref{lemma:Tqi}}
\label{lemma:Tqi:proof}
By invoking~\eqref{eq:Hpq}, we can write
	\begin{align}
	\qT_{pq}^{(i)}\qT_{pq}^{(i)^{\dag}}
	&= (\textbf{F}_N \otimes \textbf{I}_M)
	\boldsymbol{\Pi}^{\Lmki} \boldsymbol{\Delta }^{\Kmki+\kpmki}
	(\textbf{F}_N^\dag \otimes \textbf{I}_M)\nonumber\\
	&\hspace{1em}\times
	(\textbf{F}_N \otimes \textbf{I}_M)
	(\boldsymbol{\Delta }^{\Kmki+\kpmki})^\dag(\boldsymbol{\Pi}^{\Lmki})^\dag (\textbf{F}_N^\dag \otimes \textbf{I}_M)\nonumber\\
	&\stackrel{(a)}{=}(\textbf{F}_N \otimes \textbf{I}_M)
	\boldsymbol{\Pi}^{\Lmki} \boldsymbol{\Delta }^{\Kmki+\kpmki}
	(\boldsymbol{\Delta }^{\Kmki+\kpmki})^\dag\nonumber\\
	&\hspace{1em}\times
	(\boldsymbol{\Pi}^{\Lmki})^\dag
	(\textbf{F}_N^\dag \otimes \textbf{I}_M)
	\nonumber\\
	&=(\textbf{F}_N \otimes \textbf{I}_M)
	\boldsymbol{\Pi}^{\Lmki} (\boldsymbol{\Pi}^{\Lmki})^\dag (\textbf{F}_N^\dag \otimes \textbf{I}_M)
	\nonumber\\
	&\stackrel{(b)}{=}(\textbf{F}_N \otimes \textbf{I}_M)
	  (\textbf{F}_N^\dag \otimes \textbf{I}_M)=\qI_{MN},
	\end{align}
where (a) follows from the identity $(\qA \otimes \qB)(\qC \otimes \qD) = (\qA\qC \otimes \qB\qD)$ and $\qF_N^{\dag}\qF_N=\qI_N$~\cite{Zhang:matrixbook} and (b) follows from the fact that $\boldsymbol{\Pi}^{\Lmki}(\boldsymbol{\Pi}^{\Lmki})^\dag=\qI_{MN}$.

\section{Proof of Lemma~\ref{lemma:TqTqp}}
\label{lemma:TqTqp:proof}
By invoking the definition of the effective channel matrix in~\eqref{eq:Hpq}, we get
\begin{align}\label{eq:TpTq}
	&\qT_{pq}^{(i)}\qT_{pq'}^{(j)^\dag}=
	 (\qF_N \otimes \qI_M ) \nonumber\\
	&\times
	\underbrace{\boldsymbol{\Pi}^{\Lmki}  \underbrace{\boldsymbol{\Delta }^{(\Kmki-\Kmkj)+(\kpmki-\kpmkj)}
    \left(\boldsymbol{\Pi}^{\Lmkj}\right)^{\dag}}_{\qB}}_{\qC}(\qF_N^\dag \otimes \qI_M ),
	\end{align}
where $\boldsymbol{\Pi}^{\Lmki}$
is a permutation matrix with $1$s in the $(u,(u-\Lmki)_{MN})$th entries, for $0\leq u \leq MN-1$, and zeros elsewhere. For the sake of notation simplicity, we express the diagonal matrix $\boldsymbol{\Delta }^{(\Kmki-\Kmkj)+(\kpmki-\kpmkj)}$ as
\begin{align}
&\boldsymbol{\Delta }^{(\Kmki-\Kmkj)+(\kpmki-\kpmkj)} =
\nonumber\\
&
\begin{bmatrix}
\boldsymbol{\Delta}^{(1)}_{( MN-\Lmkj)\times (MN-\Lmkj)}   &\boldsymbol{0}_{( MN-\Lmkj)\times \Lmkj}  \\
\boldsymbol{0}_{ \Lmkj \times ( MN-\Lmkj)}   &\boldsymbol{\Delta}^{(2)}_{\Lmkj\times \Lmkj} \\
\end{bmatrix},
\end{align}
where $\boldsymbol{\Delta}^{(1)} =\diag\{1,z_0,\ldots,z_0^{NM-\Lmkj-1}\}$ and $\boldsymbol{\Delta}^{(2)} =\diag\{z_0^{NM-\Lmkj},\ldots,z_0^{MN-1}\}$ with\\ $z_0 = z^{(\Kmki-\Kmkj)+(\kpmki-\kpmkj)}$. Let $\boldsymbol{\Pi}$ be some permutation matrix. Then, $\qA\boldsymbol{\Pi}$ is a matrix which has the columns of $\qA$ but in a permuted sequence~\cite{Zhang:matrixbook}. Therefore, $\qB$ in~\eqref{eq:TpTq} can be derived as
\vspace{-0.2em}
\begin{align}
\qB =
\begin{bmatrix}
\boldsymbol{0}_{( MN-\Lmkj)\times \Lmkj} &\boldsymbol{\Delta}^{(1)}_{( MN-\Lmkj)\times (MN-\Lmkj)}  \\
\boldsymbol{\Delta}^{(2)}_{\Lmkj\times \Lmkj} &\boldsymbol{0}_{ \Lmkj \times ( MN-\Lmkj)}
\end{bmatrix}.
\end{align}

Now, in order to derive $\qC$, we consider the case where $\Lmki>\Lmkj$  and represent $\qB$, as~\eqref{eq:Bmat} at the top of the next page,{\footnote{The case of $\Lmki<\Lmkj$ can be analyzed in a similar way, thus, it is omitted.}
\bigformulatop{66}{
\begin{align}\label{eq:Bmat}
\qB =
\begin{bmatrix}
\boldsymbol{0}_{(MN-\Lmki)\times \Lmkj}
&\boldsymbol{\Delta}^{(3)}_{(MN-\Lmki)\times(MN-\Lmki)}
&\boldsymbol{0}_{(MN-\Lmki)\times (\Lmki-\Lmkj)}
\\
\boldsymbol{0}_{(\Lmki-\Lmkj)\times \Lmkj}
&\boldsymbol{0}_{(\Lmki-\Lmkj)\times(MN-\Lmki)}
&\boldsymbol{\Delta}^{(4)}_{(\Lmki-\Lmkj)\times (\Lmki-\Lmkj)}
\\
\boldsymbol{\Delta}^{(2)}_{\Lmkj\times \Lmkj}
&\boldsymbol{0}_{\Lmkj\times(MN-\Lmki)}
&\boldsymbol{0}_{\Lmkj\times (\Lmki-\Lmkj)}
\end{bmatrix},
\end{align}
}
where $\boldsymbol{\Delta}^{(3)} =\diag\{1,z_0,\cdots,z_0^{MN-\Lmki-1}\}$ and $\boldsymbol{\Delta}^{(4)} =\diag\{z_0^{MN-\Lmki},\cdots,z_0^{MN-\Lmkj-1}\}$.
We note that, $\boldsymbol{\Pi}\qA$ is a matrix which has the rows of $\qA$ but in the permuted sequence~\cite{Zhang:matrixbook}. Therefore, $\qC$ in~\eqref{eq:TpTq}, for $\Lmki>\Lmkj$ can be derived as~\eqref{eq:Bmat2} at the top of the next page.
\bigformulatop{67}{
\begin{align}\label{eq:Bmat2}
\qC =
\begin{bmatrix}
\boldsymbol{0}_{(\Lmki-\Lmkj)\times \Lmkj}
&\boldsymbol{0}_{(\Lmki-\Lmkj)\times(MN-\Lmki)}
&\boldsymbol{\Delta}^{(4)}_{(\Lmki-\Lmkj)\times (\Lmki-\Lmkj)}
\\
\boldsymbol{\Delta}^{(2)}_{\Lmkj\times \Lmkj}
&\boldsymbol{0}_{\Lmkj\times(MN-\Lmki)}
&\boldsymbol{0}_{\Lmkj\times (\Lmki-\Lmkj)}
\\
\boldsymbol{0}_{(MN-\Lmki)\times \Lmkj}
&\boldsymbol{\Delta}^{(3)}_{(MN-\Lmki)\times(MN-\Lmki)}
&\boldsymbol{0}_{(MN-\Lmki)\times (\Lmki-\Lmkj)}
\end{bmatrix}.
\end{align}
}
Now, by inspecting $\qC$, it can be found there is only one non-zero entry in each column and row of $\qC$. Moreover, when $(\Lmki-\Lmkj)_M\neq 0$, all the entries on the main diagonal of $\qC$ are zero. Furthermore, the matrix $\qC$ possesses $(\Lmki-\Lmkj)-1$ all-zero subdiagonals and $(MN-\Lmki)$ all-zero superdiagonals.{\footnote{In the case of $\Lmkj>\Lmki$, $\qC$ consists of $(\Lmki-\Lmkj)-1$ and $(MN-\Lmki)$ all-zero superdiagonals and  subdiagonals, respectively.}}

We notice that $(\qF_N \otimes \qI_M )$ and $(\qF_N^\dag \otimes \qI_M )$ are squared matrices each of which has $N$ non-zero elements in $(u,(u)_M)$, $(u,(u+M)_M), \ldots, (u, (u+(N-1)M)_M )$ entries of its $u$th row. Therefore, the resulting matrix $\qT_{pq}^{(i)}\qT_{pq'}^{(j)^\dag}$ has non-zero elements in $(u, (u+\Lmkj-\Lmki)_M), \ldots, (u,(u+\Lmkj-\Lmki +(N-1)M)_M)$ entries of its $u$th row. Accordingly, since $\Lmki,\Lmkj <M$ (and thus $(\Lmkj-\Lmki)<M$)~\cite{Raviteja:TVT:2019}, it can be concluded that all entries on the diagonal of $\qT_{pq}^{(i)}\qT_{pq'}^{(j)^\dag}$ are zero.

When $\Lmki=\Lmkj$, the matrix $\qC$ can be readily obtained as
\vspace{-0.4em}
\setcounter{equation}{68}
\begin{align}
\qC =
\begin{bmatrix}
\boldsymbol{\Delta}^{(2)}_{\Lmkj\times \Lmkj} &\boldsymbol{0}_{ \Lmkj \times ( MN-\Lmkj)}\\
\boldsymbol{0}_{( MN-\Lmkj)\times \Lmkj} &\boldsymbol{\Delta}^{(1)}_{( MN-\Lmkj)\times (MN-\Lmkj)}
\end{bmatrix}.
\end{align}

Therefore, the matrix $\qT_{pq}^{(i)}\qT_{pq'}^{(j)^\dag}$ has non-zero elements in $(u, (u)_M), \ldots, (u,(u+(N-1)M)_M)$ entries of its $u$th row, which indicates that the main-diagonal of the matrix is non-zero. This completes the proof.

\vspace{-1em}
\section{Proof of Lemma~\ref{lemma:TqTqp:abs}}
\label{lemma:TqTqp:abs:proof}
Without loss of generality, we derive the first row of the matrix $\qT_{pq}^{(i)} \qT_{pq'}^{(j)^\dag}$ for $\ell_{pq,i}=\ell_{pq',j}.$\footnote{It is worth mentioning that, when $\ell_{pq,i}\neq\ell_{pq^\prime,j}$, by applying an appropriate circular shift to the final result, we get the representation of  the first matrix's row.} In this case, by invoking the result in the proof of Lemma~\ref{lemma:TqTqp}, we have
\vspace{-0.3em}
\begin{align}
&\qT_{pq}^{(i)} \qT_{pq'}^{(j)^\dag}=\nonumber\\
&
\underbrace{(\qF_N \otimes \qI_M )
\begin{bmatrix}
\boldsymbol{\Delta}^{(2)}_{\Lmkj\times \Lmkj} &\boldsymbol{0}_{ \Lmkj \times ( MN-\Lmkj)}\\
\boldsymbol{0}_{( MN-\Lmkj)\times \Lmkj} &\boldsymbol{\Delta}^{(1)}_{( MN-\Lmkj)\times (MN-\Lmkj)}
\end{bmatrix}}_{\qD}\nonumber\\
&\hspace{3em}\times
(\qF_N^\dag \otimes \qI_M ).
\end{align}
The first row of the $NM \times NM$ matrix $\qD$ can be readily obtained as
\vspace{-0.3em}
\begin{align}
&[\qD]_{(1,:)} = \nonumber\\
&\bigg[z_0^{MN-\Lmki},\underbrace{ 0, \ldots, 0}_{M-1}, z_0^{M-\Lmki}, \ldots, z_0^{M(N-1)-\Lmki}, \underbrace{ 0, \ldots, 0}_{M-1} \bigg].
\end{align}
Before proceeding, let us define
\vspace{-0.3em}
\begin{align}
\qa = z_0^{-\Lmki}\bigg[z_0^{MN}, z_0^{M}, \ldots, z_0^{M(N-1)} \bigg],
\end{align}
which contains the non-zero entries of $[\qD]_{(1,:)}$. Accordingly, the first row of matrix $\qT_{pq}^{(i)} \qT_{pq'}^{(j)^\dag}$ can be obtained as
\vspace{-0.3em}
\begin{align}~\label{eq:TqTqp1}
&\Big[\qT_{pq}^{(i)} \qT_{pq'}^{(j)^\dag}\Big]_{(1,:)} =\nonumber\\
&\bigg[\qa [\qF_N^\dag]_{(:,1)},\underbrace{ 0, \ldots, 0}_{M-1}, \qa [\qF_N^\dag]_{(:,2)}, \ldots, \qa [\qF_N^\dag]_{(:,N)}, \underbrace{ 0, \ldots, 0}_{M-1} \bigg].
\end{align}
To this end, by using~\eqref{eq:TqTqp1} and denoting the $k$th entry of $\qa$ by $a_k$, we get
\vspace{-0.2em}
\begin{align}~\label{eq:absTqTqp1}
&\Bigg|\sum_{r'=1}^{MN} \big[\qT_{pq}^{(i)} \qT_{pq'}^{(j)^\dag}\big]_{(1,r')}\Bigg|^2
=\frac{1}{N^2}
\Bigg|\sum_{j=0}^{N-1}\sum_{k=0}^{N-1} a_k e^{j2\pi\frac{jk}{N}}\Bigg|^2\nonumber\\
&=\frac{1}{N^2}
\sum_{k_1=0}^{N-1}
\sum_{k_2=0}^{N-1}
\sum_{j_1=0}^{N-1}
\sum_{j_2=0}^{N-1}
a_{k_1} a_{k_2}^*
e^{-j2\frac{\pi}{N}(j_1-j_2)(k_1-k_2)}\nonumber\\
&=\frac{1}{N^2}
\sum_{k_1=0}^{N-1}
a_{k_1}
\sum_{k_2=0}^{N-1}
a_{k_2}^*
N\delta(k_1-k_2)
N\delta(k_2-k_1) =1,
\end{align}
which completes the proof.

\vspace{-1em}
\section{Proof of Lemma~\ref{Lemma:SP}}
\label{Lemma:SP:Proof}
Since $\mathbb{E}\{\boldsymbol{\Xi}_{\psi q}\}=\mathbb{E}\{\boldsymbol{\Xi}_{dq}\}=\boldsymbol{0}_{MN\times \Lmk}$ and $\mathbb{E}\{\qh_{pq}\} =\boldsymbol{0}_{\Lmk\times 1}$, it can be readily checked that $\mathbb{E}\{ \tilde{\qw}_p\} =\boldsymbol{0}_{MN\times 1}$.

In order to derive the covariance matrix, since $\mathbb{E}\{ \tilde{\qw}_p\} =\boldsymbol{0}_{MN\times 1}$, we have
\vspace{-0.2em}
\begin{align}~\label{eq:covmat:SP}
\qC_{\tilde{\qw}_p}
&=\mathbb{E}\bigg\{
					\bigg(\sum_{q^\prime\neq q}^{K_u}
							\boldsymbol{\Xi}_{\psi q^\prime}\qh_{pq^\prime}
					 \bigg)
					 \bigg(\sum_{q^\prime\neq q}^{K_u}
					      \qh_{pq^\prime}^\dag\boldsymbol{\Xi}_{\psi q^\prime}^\dag
					 \bigg)
			\bigg\}
\\
&\hspace{1em}+
\mathbb{E}\bigg\{
				  \bigg(\sum_{q=1}^{K_u}\!
				      \boldsymbol{\Xi}_{dq}\qh_{pq}
				  \bigg)
				  \bigg(\sum_{q=1}^{K_u}\!
				     \qh_{pq}^\dag\boldsymbol{\Xi}_{dq}^\dag
			   	  \bigg)
			\bigg\}
\nonumber\\
&\hspace{1em}+
\!\mathbb{E}\bigg\{\!
				\bigg(\sum_{q^\prime\neq q}^{K_u}
						\boldsymbol{\Xi}_{\psi q^\prime}\qh_{pq^\prime}
				\!\bigg)
				 \bigg(\sum_{q=1}^{K_u}
				       \qh_{pq}^\dag\boldsymbol{\Xi}_{dq}^\dag
			 	 \!\bigg)
			\!\bigg\}
\nonumber\\
&\hspace{1em}+
\mathbb{E}\bigg\{\!
				\bigg(\!\sum_{q=1}^{K_u}\!
					\boldsymbol{\Xi}_{dq}\qh_{pq}
				\!\bigg)
				\bigg(\!\sum_{q^\prime\neq q}^{K_u}
					\qh_{pq^\prime}^\dag\boldsymbol{\Xi}_{\psi q^\prime}^\dag
				\!\bigg)
			\!\bigg\} \!+\! \Sn\qI_{MN}.\nonumber
\end{align}

Noticing the fact that data and pilot symbols are zero-mean and i.i.d. RVs, the third and fourth term in~\eqref{eq:covmat:SP} are zero. The first expectation term in~\eqref{eq:covmat:SP} can be expressed as
\begin{align}~\label{eq:covmat:SP:pilot}
&\mathbb{E}
		\bigg\{
			\bigg(
				\sum_{q^\prime\neq q}^{K_u}
				\boldsymbol{\Xi}_{\psi q^\prime}\qh_{pq^\prime}
			\bigg)
			\bigg(
				\sum_{q^\prime\neq q}^{K_u}\qh_{pq^\prime}^\dag
				\boldsymbol{\Xi}_{\psi q^\prime}^\dag
			\bigg)
		\bigg\}
\nonumber\\
&\hspace{2em}=\sum_{q^\prime\neq q}^{K_u}
  \sum_{q^{\prime\prime}\neq q}^{K_u}
			\underbrace{\mathbb{E}
				\bigg\{\boldsymbol{\Xi}_{\psi q^\prime}			
					\qh_{pq^\prime} \qh_{pq^{\prime\prime}}^\dag				
			         \boldsymbol{\Xi}_{\psi q^{\prime\prime}}^\dag
                \bigg\}}_{\boldsymbol{\Theta}_{q^\prime q^{\prime\prime}}}.
\end{align}

By invoking~\eqref{eq:Xipq} and assuming $q^\prime\!=\!q^{\prime\prime}$, we have $\Theta_{q^\prime q^{\prime}}\!=\!\Theta_{q^\prime q^{\prime\prime}}$, which can be obtained as
\begin{align}~\label{eq:covmat:SP:expect}
               \boldsymbol{\Theta}_{q^\prime q^{\prime}}
                &=\eta_{q'}\sum_{i=1}^{\Lmkp}
                \mathbb{E}\left\{h_{pq',i}h_{pq',i}^*\qT_{pq'}^{(i)}\boldsymbol{\psi}_{q'} \boldsymbol{\psi}_{q'}^{\dag}\qT_{pq'}^{(i)\dag}\right\}
                \nonumber\\
                &\hspace{2em}
                +\eta_{q'}\sum_{i=1}^{\Lmkp}
                 \sum_{j=1}^{\Lmkp}
                \mathbb{E}\left\{h_{pq',i}h_{pq',j}^*\qT_{pq'}^{(i)}\boldsymbol{\psi}_{q'} \boldsymbol{\psi}_{q'}^{\dag}\qT_{pq'}^{(j)\dag}\right\}\nonumber\\
                &\stackrel{(a)}{=}\sum_{i=1}^{\Lmkp}
\eta_{q'}\beta_{pq',i}\mathbb{E}\left\{\qT_{pq'}^{(i)}\boldsymbol{\psi}_{q'} \boldsymbol{\psi}_{q'}^{\dag}\qT_{pq'}^{(i)\dag}\right\}
\nonumber\\
                &=\eta_{q'}\sum_{i=1}^{\Lmkp}
\beta_{pq',i}\frac{\Trace(\boldsymbol{\psi}_{q'} \boldsymbol{\psi}_{q'}^{\dag})}{MN}\qT_{pq'}^{(i)}\qT_{pq'}^{(i)\dag}
\nonumber\\
&\hspace{0em}
\stackrel{(b)}{=}
\eta_{q'}\Pplqr\sum_{i=1}^{\Lmkp}
\beta_{pq',i}\qI_{MN},
\end{align}
where (a) holds as the channel entries are zero mean and i.i.d. RVs and (b) follows from  Lemma~\ref{lemma:Tqi}. Moreover,  by using the fact that channel entries of two different vectors are zero mean and i.i.d. RVs, we get $\mathbb{E}
				\big\{\boldsymbol{\Xi}_{\psi q^\prime}			
					\qh_{pq^\prime} \qh_{pq^{\prime\prime}}^\dag				
			         \boldsymbol{\Xi}_{\psi q^{\prime\prime}}^\dag
                \big\}=\boldsymbol{0}$ for $q^\prime\neq q^{\prime\prime}$.
Therefore, the expectation in~\eqref{eq:covmat:SP:pilot} can be derived as
\begin{align}
&\mathbb{E}
		\bigg\{
			\bigg(
				\sum_{q^\prime\neq q}^{K_u}
				\boldsymbol{\Xi}_{\psi q^\prime}\qh_{pq^\prime}
			\bigg)
			\bigg(
				\sum_{q^\prime\neq q}^{K_u}\qh_{pq^\prime}^\dag
				\boldsymbol{\Xi}_{\psi q^\prime}^\dag
			\bigg)
		\bigg\}
\nonumber\\
&\hspace{0em}=
\sum_{q^\prime\neq q}^{K_u}\eta_{q'}\Pplqr
\sum_{i=1}^{L_{pq'}}
\beta_{pq',i}
\qI_{MN}.
\end{align}
Following similar steps for deriving the first expectation term in~\eqref{eq:covmat:SP}, the second term in~\eqref{eq:covmat:SP} can be derived as
\begin{align}~\label{eq:covmat:SP:data}
\mathbb{E}
		\bigg\{
				\bigg(\sum_{q=1}^{K_u}
					\boldsymbol{\Xi}_{dq}\qh_{pq}
				\bigg)
				\bigg(\sum_{q=1}^{K_u}
					\qh_{pq}^\dag\boldsymbol{\Xi}_{dq}^\dag
				\bigg)
			\bigg\}
=
\sum_{q=1}^{K_u}\eta_{q}\Pdtq
\sum_{i=1}^{\Lmk}
\betpqi
\qI_{MN}.
\end{align}

To this end, by substituting~\eqref{eq:covmat:SP:pilot} and ~\eqref{eq:covmat:SP:data} into~\eqref{eq:covmat:SP}, the desired result in~\eqref{eq:cov:SP} is achieved.

\vspace{-0.6em}
\section{Proof of Lemma~\ref{lemma:power}}
\label{lemma:power:proof}
By invoking~\eqref{eq:xqd}, we have
\begin{align}~\label{eq:AP:powcons:final0}
\mathbb{E}\left\{\|\qx_{\dl,p}\|^2\right\}&=
\mathbb{E}
		\bigg\{
			\Big\|
				\sqrt{\rho_d}
				\sum_{q=1}^{K_u}
				\etpq^{1/2}\hHpqdg \qs_q
			\Big\|^2
		\bigg\}
\nonumber\\
&\hspace{-3em}=
\rho_d\mathbb{E}
			\bigg\{ \Trace
				\bigg(
					\Big[
						\sum_{q=1}^{K_u}\etpq^{1/2}\hHpqdg \qs_q
					\Big]
					\Big[
						\sum_{q'=1}^{K_u}\etpqr^{1/2}\qs_{q'}^\dag\hat{\qH}_{pq'}
					\Big]
				\bigg)
			\bigg\}
\nonumber\\
&\hspace{-3em}
=
\rho_d\sum_{q=1}^{K_u}
\etpq\Trace\left(\mathbb{E}\big\{\hHpqdg \hHpq\big\}\right),
\end{align}
where the equality holds since $s_{q,r}$ are assumed to be i.i.d. RVs. By using~\eqref{eq:Hpq}, and noticing that the channel gains of different paths are zero-mean i.i.d RVs, i.e., $\mathbb{E}\big\{\hmkihatc \hmkjhat\big\}=0$, $\forall i\neq j$, and then applying Lemma~\ref{lemma:Tqi},  we get $\mathbb{E}\left\{\|\qx_{\dl,q}\|^2\right\}= \rho_d\sum_{q=1}^{K_u}\sum_{i=1}^{\Lmk} \etpq\gampqi$, which completes the proof.

\vspace{-0.6em}
\section{Proof of Theorem~\ref{Theor:DL}}
\label{Theor:DL:Proof}
Since the channel model in~\eqref{eq:Hpq} consists of simply $MN$ parallel channels, an achievable SE per
channel input at the $q$th user is given by $R_{\dl,q} = \frac{\taudl}{MN}\sum_{r=1}^{MN} I_{\dl,qr} (\mathtt{SINR}_{\dl,qr}),$ with $I_{\dl,qr} (\mathtt{SINR}_{\dl,qr})= \log_2(1+\mathtt{SINR}_{\dl,qr})$, and $\mathtt{SINR}_{\dl,qr}$, i.e., the SINR at the $q$th user is
\begin{align}~\label{eq:SINRdq}
\mathtt{SINR}_{\dl,qr}\!=\!
\frac{
	\big|\mathbb{DS}_{q,\dl}
	\big|^2
}
{
	\mathbb{E}
	\big\{ \big|\mathbb{BU}_{q,\dl}\big|^2 \big\}
	\! \!+\!
	\mathbb{E}
	\big\{\big|\mathbb{I}_{q1,\dl}\big|^2\big\}
	\!+\!\!
	\mathbb{E}
	\big\{\big|\mathbb{I}_{q2,\dl}\big|^2\big\}
	\! +\!\!
	1/\rho_d    }.
\end{align}

We now proceed to derive the achievable downlink SE $R_{\dl,q}$. Noticing that $\Hpq=\sum_{i=1}^{\Lmk}\!
\hmki \Tpqi$ and $\hat{\qH}_{pq}=\sum_{i=1}^{\Lmk}\!\hmkihat \Tpqi$, we can derive $\mathbb{DS}_{q,\dl}$ as
\begin{align}~\label{eq:DS}
\mathbb{DS}_{q,\dl}
&=
\sum_{p=1}^{M_a}
\etpq^{1/2}
    \mathbb{E}
			\Big\{
				[\Hpq]_{(r,:)}[\hHpqdg]_{(:,r)}
			\Big\}
\nonumber\\
&
\stackrel{(a)}{=}\!
	\sum_{p=1}^{M_a}\!
			\etpq^{1/2}
						\mathbb{E}
							\Bigg\{
									\Big(
										\sum_{i=1}^{\Lmk}\!
										\hmkihat \big[\Tpqi\big]_{(r,:)}
									\Big)
									\nonumber\\
									&\hspace{2em}\times
									\Big(
										\sum_{j=1}^{\Lmk}\!
										\hmkjhatc \big[\Tpqjdg\big]_{(:,r)}
									\Big)
							\Bigg\}
\nonumber\\
&\hspace{0em}
\stackrel{(b)}{=}
			\sum_{p=1}^{M_a}
					\etpq^{1/2}
							\bigg(
								\sum_{i=1}^{\Lmk}
									\mathbb{E}
											\left\{
													|\hmkihat|^2
												\Big| \big[\Tpqi\big]_{(r,r)}\Big|^2
											\right\}
							\bigg)
\nonumber\\
&\hspace{0em}
\stackrel{(c)}{=}
\sum_{p=1}^{M_a}
		\etpq^{1/2}
				\Bigg(
					\sum_{i=1}^{\Lmk}
					\mathbb{E}\left\{|\hmkihat|^2 \right\}
				\Bigg)
=
\sum_{p=1}^{M_a}	
        \sum_{i=1}^{\Lmk}
            \etpq^{1/2}\gampqi,
\end{align}
where (a) follows by substituting $\epspqi=\hmki-\hmkihat$ and then using the fact $\epspqi$ and $\hmkihat$ are independent RVs and $\mathbb{E}\{\epspqi\}=0$; (b) follows from the fact that $\hmkihat$ are zero mean and independent; (c) follows from Lemma~\ref{lemma:Tqi}.

By using the fact that the variance of a sum of independent RVs is equal to the sum of the variances, $\mathbb{E}
	\big\{ \big|\mathbb{BU}_{q,\dl}\big|^2 \big\}$ in~\eqref{eq:SINRdq} can be  derived as
\begin{align}~\label{eq:vardd}
\mathbb{E}
\Big\{ \big|\mathbb{BU}_{q,\dl}\big|^2 \Big\} &=
\sum_{p=1}^{M_a}
\eta_{pq}
 \Bigg(
      \mathbb{E}
            \Big\{	
                \Big|
                    [\Hpq]_{(r,:)}[\hHpqdg]_{(:,r)}
                \Big|^2
            \Big\}
            \nonumber\\
            &\hspace{2em}
-         \Big|
            \mathbb{E}
               \Big\{
                   [\Hpq]_{(r,:)}[\hHpqdg]_{(:,r)}
               \Big\}
          \Big|^2
\Bigg).
\end{align}

Now, by using~\eqref{eq:Hpq}, and then applying Lemma~\ref{lemma:Tqi}, i.e., $\Big[\Tpqi\Tpqidg\Big]_{(r,r)}=1$,~\eqref{eq:vardd} can be expressed as
\begin{align}~\label{eq:vard}
&\mathbb{E}
\Big\{ \big|\mathbb{BU}_{q,\dl}\big|^2 \Big\} =
                \sum_{p=1}^{M_a}
                            \etpq
                                \bigg(
                                     \mathbb{V}_1-
                                                \Big(\sum_{i=1}^{\Lmk}
                                                    \gampqi
                                                \Big)^2
                                \bigg),
\end{align}
where
\begin{align*}
&\mathbb{V}_1=\nonumber\\
&
    \mathbb{E}
     \Bigg\{
            \bigg|\!
                \sum_{i=1}^{\Lmk}
                    \hmki\hmkihatc+
                \sum_{i=1}^{\Lmk}
                \sum_{j\neq i}^{\Lmk}
                    \big[\Tpqi\Tpqjdg\big]_{(r,r)}
                    \hmki\hmkjhatc
            \bigg|^2
     \Bigg\}.
\end{align*}
Before proceeding to derive $\mathbb{V}_1$, we define $X\triangleq\sum_{i=1}^{\Lmk}\hmki\hmkihatc$ and $Y\triangleq\sum_{i=1}^{\Lmk}\sum_{j\neq i}^{\Lmk}\hmki\hmkjhatc$. We note that $\mathbb{E}\left\{ |X+Y|^2\right\} = \mathbb{E}\left\{ |X|^2\right\} + \mathbb{E}\left\{ |Y|^2\right\}$ if $X$ and $Y$ are independent RVs and $\mathbb{E}\left\{Y\right\}=0$. It can be readily checked that $Y$ is a zero mean RV which is independent of $X$. Hence,  $\mathbb{V}_1$ can be evaluated as~\eqref{eq:EV10} at the top of the next page,
\bigformulatop{84}{
\begin{align}~\label{eq:EV10}
\mathbb{V}_1&=
\mathbb{E}
    \bigg\{
        \bigg|
            \sum_{i=1}^{\Lmk}
                \Big(
                    \epspqi\hmkihatc + \Big|\hmkihat\Big|^2
                \Big)
                    +\!
            \sum_{i=1}^{\Lmk}\sum_{j\neq i}^{\Lmk}
                \big[\Tpqi\Tpqjdg\big]_{(r,r)}
                    (\epspqi+\hmkihat)\hmkjhatc
        \bigg|^2
    \bigg\}
\nonumber\\
&=
\mathbb{E}
			\Bigg\{
				\Bigg|
					\sum_{i=1}^{\Lmk}
                        \Big(
                            \epspqi\hmkihatc +
				    	       \Big|\hmkihat\Big|^2
                        \Big)
				\Bigg|^2
			\Bigg\}
+\!
	\mathbb{E}
		\Bigg\{
			\Bigg|
				\sum_{i=1}^{\Lmk}\sum_{j\neq i}^{\Lmk}
					\big[\Tpqi\Tpqjdg\big]_{(r,r)}
					       (\epspqi+\hmkihat)\hmkjhatc
			\Bigg|^2
		\Bigg\}
\nonumber\\
&=
\sum_{i=1}^{\Lmk}
         	\left(
				\mathbb{E}
						\left\{
							\left|
							\epspqi\hmkihatc
							\right|^2
						\right\}
+
				\mathbb{E}
						\left\{
							\left|\hmkihat\right|^4
						\right\}
			\right)
+  \sum_{i=1}^{\Lmk}	
   \sum_{j\neq i}^{\Lmk}
               \mathbb{E}
 					\left\{
						\left|\hmkihat\right|^2
					\right\}
				\mathbb{E}
					\left\{
						\left|\hmkjhat\right|^2
					\right\}
\nonumber\\
&\hspace{1em}
+
\sum_{i=1}^{\Lmk}
\sum_{j\neq i}^{\Lmk}
			\mathbb{E}
					\left\{
						\mathfrak{R}[\Psi]
					\right\}
+
\sum_{i=1}^{\Lmk}
\sum_{j\neq i}^{\Lmk}
		\chpqij
			\mathbb{E}
					\left\{
						\left|\epspqi\right|^2
								+
						\left|\hmkihat\right|^2
					\right\}
			\mathbb{E}
					\left\{
						\left|\hmkjhat\right|^2
					\right\}, 						
\end{align}
}
where $\Psi\!=\!\bigg[\Big(\big[\Tpqi\Tpqjdg\big]_{(r,r)} \big[\Tpqj\Tpqidg\big]_{(r,r)}^*\Big)
\big(\hmkihat\hmkjhatc\big)^2 \bigg]$. To this end, by using the facts that $ \mathbb{E}\Big\{\big|\epspqi\big|^2\Big\} \!= \!\betpqi\!-\!\gampqi$, $\mathbb{E}\Big\{\big|\hmkihat\big|^4\Big\}\!=\!2\big(\gampqi\big)^2$, and $\mathbb{E}\big\{\mathfrak{Re}(\Psi)\big\}\! =\!0$,~\eqref{eq:EV10} is reduced to
\setcounter{equation}{85}
\begin{align}\label{eq:EV1}
\mathbb{V}_1=&
\sum_{i=1}^{\Lmk}
   \left(
        \gampqi (\betpqi-\gampqi)
        +
        2(\gampqi)^2
    \right)
    \nonumber\\
    &
+
\sum_{i=1}^{\Lmk} \sum_{j\neq i}^{\Lmk}
   \left(
         \gampqi\gampqj
         +
         \chpqij\betpqi\gampqj
   \right).
\end{align}
Therefore, by substituting~\eqref{eq:EV1} into~\eqref{eq:vardd} we obtain
\vspace{-0.2em}
\begin{align}~\label{eq:vard:final}
&\mathbb{E}
        \Big\{ \big|\mathbb{BU}_{q,\dl}\big|^2 \Big\} \!=\!
			\sum_{p=1}^{M_a}\eta_{pq}\sum_{i=1}^{\Lmk}
			\betpqi
			\Big(
				\gampqi
				\!+\!
				\sum_{j\neq i}^{\Lmk}
				\chpqij\gampqj
			\Big).
\end{align}

The inter-symbol interference term can be obtained as
\vspace{-0.1em}
\begin{align}
\mathbb{E}
\big\{|\mathbb{I}_{q1,\dl}|^2\big\}&=
\mathbb{E}
		\bigg\{
			\bigg|
				\sum_{p=1}^{M_a}
				\sum_{r'\neq r}^{MN}
				\etpq^{1/2}
				\big[\Hpq\big]_{(r,:)} \big[\hHpqdg\big]_{(:,r')}
			\bigg|^2
			\bigg\}
			\nonumber\\
&\hspace{-2em}=
\mathbb{E}
		\Bigg\{
			\Bigg|
				\sum_{p=1}^{M_a}
				\sum_{r'\neq r}^{MN}\etpq^{1/2}
				\Big(
				\sum_{i=1}^{\Lmk}\hmki \big[\Tpqi\big]_{(r,:)}\Big)
				\nonumber\\
				&
				\hspace{-1em}
				\times
				\Big(
				\sum_{j=1}^{\Lmk}\hmkjhatc \big[\Tpqjdg\big]_{(:,r')}\Big)
			\Bigg|^2
		\Bigg\}
\nonumber\\
&\hspace{-2em} =
\mathbb{E}
	\Bigg\{
		\Bigg|
			\sum_{p=1}^{M_a}
			\sum_{r'\neq r}^{MN}
			\etpq^{1/2}
						\bigg(
							\sum_{i=1}^{\Lmk}
							\hmki \hmkihat
                            \big[\Tpqi\Tpqidg\big]_{(r,r')}
\nonumber\\
&\hspace{-1em}
+
							\sum_{i=1}^{\Lmk}
							\sum_{ j\neq i}^{\Lmk}
							\hmki \hmkjhatc
							\big[\Tpqi\Tpqjdg\big]_{(r,r')}
						\bigg)
		\Bigg|^2
	\Bigg\}.
\end{align}
To this end, by using Lemma~\ref{lemma:Tqi}, i.e., $\big[\Tpqi\Tpqidg\big]_{(r,r')}=0$, and then by substituting $\hmki=\epspqi+ \hmkihat$,  we get
\vspace{-0.1em}
\begin{align}~\label{eq:Iq1}
\mathbb{E}
\big\{|\mathbb{I}_{q1,\dl}|^2\big\}&\!=\!
			\sum_{p=1}^{M_a}
				\sum_{i=1}^{\Lmk}
				\sum_{j\neq i}^{\Lmk}\!\!
               \etpq
				\kapqij
						\mathbb{E}
								\Big\{
									\Big|
										  \epspqi  \hmkjhatc
										\!+\! \hmkihat \hmkjhatc
									\Big|^2
								\Big\}\nonumber\\
&=
			\sum_{p=1}^{M_a}
			\sum_{i=1}^{\Lmk}
			\sum_{j\neq i}^{\Lmk}
                \etpq
				\kapqij
				\bigg(
					\mathbb{E}\Big\{\Big|\epspqi \Big|^2\Big\}
					\mathbb{E}\Big\{\Big|\hmkjhat\Big|^2\Big\}
					\nonumber\\
					&\hspace{1em}
					+
					\mathbb{E}\Big\{\Big|\hmkihat\Big|^2\Big\}
					\mathbb{E}\Big\{\Big|\hmkjhat\Big|^2\Big\}
				\bigg)
\nonumber\\
&=
\sum_{p=1}^{M_a}
    \sum_{i=1}^{\Lmk}
        \sum_{j\neq i}^{\Lmk}
                            \etpq
                            \kapqij
                            \betpqi
                            \gampqj,
\end{align}
where the first equality holds as the variance of a sum of independent RVs is equal to the sum of the variances and the second equality holds since the zero mean $\epspqi$ is independent of $\hmkihat$.
Noticing that the channel gains of different users are independent zero mean RVs, and applying Lemma.~\ref{lemma:TqTqp:abs}, the inter-user interference term can be derived as
\begin{align}~\label{eq:Iq2}
\mathbb{E}
\big\{|\mathbb{I}_{q2,\dl}|^2\big\}&=
\mathbb{E}
		\bigg\{
			\bigg|
				\sum_{p=1}^{M_a}
				    \sum_{q'\neq q}^{K_u}
				        \sum_{\substack{r'=1 }}^{MN}
                            \etpqr^{1/2}
				              [\Hpq]_{(r,:)} [\hat{\qH}_{pq'}^\dag]_{(:,r')}
			\bigg|^2
		\bigg\}
\nonumber\\
&=
\sum_{p=1}^{M_a}
        \sum_{q'\neq q}^{K_u}
            \sum_{i=1}^{\Lmk}
                \sum_{j=1}^{\Lmkp}
                    \etpqr\betpqi\gampqrj.
\end{align}

Finally, by substituting~\eqref{eq:DS},~\eqref{eq:vard:final},~\eqref{eq:Iq1}, and~\eqref{eq:Iq2} into~\eqref{eq:SINRdq}, after some algebraic manipulations the desired result in~\eqref{eq:Rdq:final} is obtained.
\vspace{-0.4em}
\bibliographystyle{IEEEtran}

\begin{thebibliography}{1}


\bibitem{MohammadICC2022}
M., Mohammadali, H. Q. Ngo, and M. Matthaiou. ``When cell-free massive MIMO meets OTFS modulation: The downlink case." in \emph{Proc. IEEE ICC}, May 2022.


\bibitem{Tataria:6G}
M. Matthaiou, \emph{et al,} ``The road to 6G: Ten physical layer challenges for communications engineers," \emph{IEEE Commun. Mag.,} vol. 59, no. 1, pp. 64-69, Jan. 2021.

\bibitem{Love:JSAC:2020}
J. Zhang, E. Bj\"{o}rnson, M. Matthaiou, D. W. K. Ng, and D. J. Love, ``Prospective multiple antenna technologies for beyond 5G," \emph{IEEE J. Sel. Areas Commun.,} vol. 38, no. 8, p. 1637-1660, Aug. 2020.

\bibitem{Hien:cellfree}
H. Q. Ngo, A. Ashikhmin, H. Yang, E. G. Larsson, and T. L. Marzetta, ``Cell-free massive MIMO versus small cells," \emph{IEEE Trans. Wireless Commun.,} vol. 16, no. 3, pp. 1834-1850, Mar. 2017.


\bibitem{Zhang:TCOM:2021}
J. Zhang, J. Zhang, E. Bj\"{o}rnson, and B. Ai, ``Local partial zero-forcing combining for cell-free massive MIMO systems," \emph{IEEE Trans. Commun.}, vol. 69, no. 12, pp. 8459-8473, Dec. 2021.

\bibitem{Matthaiou:TCOM:2021}
L. Du, L. Li, H. Q. Ngo, T. C. Mai and M. Matthaiou, ``Cell-free massive MIMO: Joint maximum-ratio and zero-forcing precoder with power control," \emph{IEEE Trans. Commun.,} vol. 69, no. 6, pp. 3741-3756, June 2021.

\bibitem{Hien:TGCN:2018}
H. Q. Ngo, L.-N. Tran, T. Q. Duong, M. Matthaiou, and E. G. Larsson, ``On the total energy efficiency of cell-free massive MIMO," \emph{IEEE Trans. Green Commun. and Networking,} vol. 2, no. 1, pp. 25-39, Mar. 2018.

\bibitem{railway}
R. He \emph{et al.,} ``High-speed railway communications: From GSM-R to
LTE-R," \emph{IEEE Veh. Technol. Mag.,} vol. 11, no. 3, pp. 49-58, Sep. 2016.

\bibitem{Jakes}
W. C. Jakes, \emph{Microwave Mobile Communications}, New York: IEEE Press, reprinted, 1994.

\bibitem{Wei:WC:2021}
Z. Wei \emph{et al.}, ``Orthogonal time-frequency space modulation: A promising
next generation waveform," \emph{IEEE Wireless Commun.}, vol. 28, no. 4, pp. 136-144, Aug. 2021.

\bibitem{Hadani:WCNC:2017}
R. Hadani \emph{et al.},  ``Orthogonal time frequency space modulation," in \emph{Proc. IEEE WCNC}, Mar. 2017.

\bibitem{Raviteja:TWC:2018}
P. Raviteja, K. T. Phan, Y. Hong, and E. Viterbo, ``Interference cancellation and iterative detection for orthogonal time frequency space modulation," \emph{IEEE Trans. Wireless Commun.,} vol. 17, no. 10, pp. 6501-6515, Oct. 2018.


\bibitem{Raviteja:TVT:2019}
P. Raviteja, K. T. Phan, and Y. Hong, ``Embedded pilot-aided channel estimation for OTFS in delay-Doppler channels," \emph{IEEE Trans. Veh. Technol.,} vol. 68, no. 5, pp. 4906-4917, May 2019.

\bibitem{Chavva:WCNC:2021}
A. Gunturu, A. R. Godala, A. K. Sahoo and A. K. R. Chavva, ``Performance analysis of OTFS waveform for 5G NR mmWave communication system," in \emph{Proc. IEEE WCNC}, Mar. 2021.

\bibitem{Kim:ICC:2021}
H. Ji, Y. Kim, T. Kim and J. Oh, ``Compressed OTFS transmission over millimeter-wave and terahertz spectrum," in \emph{Proc. IEEE ICC}, May 2021.

\bibitem{Ding:TCOM:2021}
Y. Ge, Q. Deng, P. C. Ching and Z. Ding, ``OTFS signaling for uplink NOMA of heterogeneous mobility users," \emph{IEEE Trans. Commun.}, vol. 69, no. 5, pp. 3147-3161, May 2021.

\bibitem{Sharma:TCOM:2021}
K. Deka, A. Thomas and S. Sharma, ``OTFS-SCMA: A code-domain NOMA approach for orthogonal time frequency space modulation," \emph{IEEE Trans. Commun.}, vol. 69, no. 8, pp. 5043-5058, Aug. 2021.

\bibitem{Raviteja:TVT:2021}
B. C. Pandey, S. K. Mohammed, P. Raviteja, Y. Hong, and E. Viterbo, ``Low complexity precoding and detection in multi-user massive MIMO OTFS downlink," \emph{IEEE Trans. Veh. Technol.,} vol. 70, no. 5, pp. 4389-4405, May 2021.

\bibitem{Wang:JSAC:2020}
Y. Liu, S. Zhang, F. Gao, J. Ma, and X. Wang, ``Uplink-aided high mobility downlink channel estimation over massive MIMO-OTFS system,"
\emph{IEEE J. Sel. Areas Commun.,} vol. 38, no. 9, pp. 1994-2009, Sep. 2020.

\bibitem{Shi:TWC:2021}
D. Shi \emph{et al.}, ``Deterministic pilot design and channel estimation for downlink massive MIMO-OTFS systems in presence of the fractional Doppler," \emph{IEEE Trans. Wireless Commun.,} vol. 20, no. 11, pp. 7151-7165, Nov. 2021.

\bibitem{Dobre:JSAC:2021}
M. Li, S. Zhang, F. Gao, P. Fan, and O. A. Dobre, ``A new path division multiple access for the massive MIMO-OTFS networks," \emph{IEEE J. Sel. Areas Commun.,} vol. 39, no. 4, pp. 903-918, Apr. 2021.

\bibitem{Wang:TVT:2021}
Y. Shan and F. Wang, "Low-complexity and low-overhead receiver for OTFS via large-scale antenna array," \emph{IEEE Trans. Veh. Technol.,} vol. 70, no. 6, pp. 5703-5718, June 2021.

\bibitem{KWAN:TWC:2021}
S. Li, J. Yuan, W. Yuan, Z. Wei, B. Bai, and D. W. K. Ng, ``Performance analysis of coded OTFS systems over high-mobility channels," \emph{IEEE Trans. Wireless Commun.,} vol. 20, no. 9, pp. 6033-6048, Sept. 2021.

\bibitem{Flanagan:vtc:2020}
V. Kumar Singh, M. F. Flanagan, and B. Cardiff, ``Maximum likelihood channel path detection and MMSE channel estimation in OTFS systems," in \emph{Proc. IEEE VTC,} Dec. 2020.

\bibitem{Hanzo:TCOM:2021}
S. Srivastava, R. K. Singh, A. K. Jagannatham and L. Hanzo, "Bayesian learning aided simultaneous row and group sparse channel estimation in orthogonal time frequency space modulated MIMO systems,"  \emph{IEEE Trans. Commun.,} vol. 70, no. 1, pp. 635-648, Jan. 2022.

\bibitem{Heath:SPL:2019}
W. Shen, L. Dai, J. An, P. Fan and R. W. Heath, Jr., ``Channel estimation for orthogonal time frequency space (OTFS) massive MIMO," \emph{IEEE Trans. Signal Process.,} vol. 67, no. 16, pp. 4204-4217,  Aug. 2019.

\bibitem{Fei:WCL:2021}
S. Wang, J. Guo, X. Wang, W. Yuan and Z. Fei, ``Pilot design and optimization for OTFS modulation," \emph{IEEE Wireless Commun. Lett.}, vol. 10, no. 8, pp. 1742-1746, Aug. 2021.


\bibitem{Hadani:2016}
A. Monk, R. Hadani, M. Tsatsanis, and S. Rakib, ``OTFS - orthogonal
		time frequency space: a novel modulation technique meeting 5G high mobility and massive MIMO challenges," online: arXiv:1608.02993 [cs.IT] 9 Aug. 2016.
		
		
\bibitem{Prasad:TWC:2021}
H. B. Mishra, P. Singh, A. K. Prasad and R. Budhiraja, ``OTFS channel estimation and data detection designs with superimposed pilots," \emph{IEEE Trans. Wireless Commun.,} vol. 21, no. 4, pp. 2258-2274, April 2022.

\bibitem{Dataided:WCL:2021}
W. Yuan, S. Li, Z. Wei, J. Yuan and D. W. K. Ng, ``Data-aided channel estimation for OTFS systems with a superimposed pilot and data transmission scheme," \emph{IEEE Wireless Commun. Lett.}, vol. 10, no. 9, pp. 1954-1958, Sept. 2021.


\bibitem{Tugnait:CLET:2003}
J. Tugnait and W. Luo, ``On channel estimation using superimposed training and first-order statistics," \emph{IEEE Commun. Lett.,} vol. 7, no. 9, pp. 413-415, Sept. 2003.

\bibitem{Ghogho:SLP:2003}
M. Ghogho, D. McLernon, E. Alameda-Hernandez, and A. Swami,
``Channel estimation and symbol detection for block transmission using
data-dependent superimposed training," \emph{IEEE Signal Process. Lett.,}
vol. 12, no. 3, pp. 226-229, Mar. 2005.



\bibitem{Srivastava:TWC:2022}
S. Srivastava, R. K. Singh, A. K. Jagannatham, A. Chockalingam and L. Hanzo, ``OTFS transceiver design and sparse doubly-selective CSI estimation in analog and hybrid beamforming aided mmWave MIMO systems," \emph{IEEE Trans. Wireless Commun.,} 2022.

\bibitem{Viterbo:WCL:2019}
P. Raviteja, E. Viterbo and Y. Hong, ``OTFS performance on static multipath channels," \emph{IEEE Wireless Commun. Lett.}, vol. 8, no. 3, pp. 745-748, June 2019.

\bibitem{Mohammed:TVT:2021}
S. K. Mohammed, ``Time-domain to delay-Doppler domain conversion of OTFS signals in very high mobility scenarios," \emph{IEEE Trans. Veh. Technol.,} vol. 70, no. 6, pp. 6178-6183, June 2021.

\bibitem{Kwan:TCOM:2021}
Z. Wei, W. Yuan, S. Li, J. Yuan, and D. W. K. Ng, ``Transmitter and receiver window designs for orthogonal time-frequency space modulation,"\textit{ IEEE Trans. Commun.,} vol. 69, no. 4, pp. 2207-2223, Apr. 2021.

\bibitem{Bahai:OFDM:2002}
S. Coleri, M. Ergen, A. Puri and A. Bahai, ``Channel estimation techniques based on pilot arrangement in OFDM systems," \emph{IEEE Tran. Broadcas.}, vol. 48, no. 3, pp. 223-229, Sept. 2002.


\bibitem{Emil:TWC:2020}
E. Bj\"{o}rnson and L. Sanguinetti, ``Making cell-free massive MIMO
competitive with MMSE processing and centralized implementation,"
\emph{IEEE Trans. Wireless Commun.,} vol. 19, no. 1, pp. 77-90, Jan. 2020.

\bibitem{Debbah:TWC:2022}
Z. Wang, J. Zhang, B. Ai, C. Yuen and M. Debbah, ``Uplink performance of cell-free massive MIMO with multi-antenna users over jointly-correlated Rayleigh fading channels," \emph{IEEE Trans. Wireless Commun.,} Accepted, 2022.



\bibitem{channelmodel}
\textit{Guidelines for Evaluation of Radio Interface Technologies for IMTAdvanced,
document Rep.} ITU 2135-1, 2009.

\bibitem{Zhang:matrixbook}
F. Zhang, \textit{Matrix Theory: Basic Results and Techniques.} Springer
Science \& Business Media, 2011.

\end{thebibliography}

\end{document}